\newtheorem{definition}{Definition}[section]
\newtheorem{theorem}{Theorem}[section]
\newtheorem{prop}[theorem]{Proposition}
\newtheorem{coro}[theorem]{Corollary}
\newtheorem{lemma}[theorem]{Lemma}
\newtheorem{remark}[theorem]{Remark}
\newcommand{\R}{\mathbb{R}}             
\newcommand{\N}{\mathbb{N}}             
\newcommand{\Z}{\mathbb{Z}}             %
\newcommand{\C}{\mathbb{C}}             
\renewcommand{\S}{\mathbb{S}}             
\author{Damien Gobin\footnote{Department of Mathematics and Statistics, McGill University, 805 Sherbrooke South West, Montr\'eal QC, H3A 2K6.
Email adress: damien.gobin@mcgill.ca.}}
\title{Inverse scattering at fixed energy for radial magnetic Schr\"odinger operators with obstacle in dimension two}
\date{\today}
\begin{document}

\maketitle


\begin{abstract}
We study an inverse scattering problem at fixed energy for radial magnetic Schr\"odinger operators on $\R^2 \setminus B(0,r_0)$, where $r_0$ is a positive and
arbitrarily small radius. We assume that the
magnetic potential $A$ satisfies a gauge condition and we consider 
the class $\mathcal{C}$ of smooth, radial and compactly supported electric potentials and magnetic fields denoted by $V$ and $B$ respectively. If $(V,B)$ and $(\tilde{V},\tilde{B})$
are two couples belonging to $\mathcal{C}$, we then show that if the corresponding phase shifts $\delta_l$ and $\tilde{\delta}_l$ (i.e. the scattering data at fixed energy)
coincide for all $l \in \mathcal{L}$, where $\mathcal{L} \subset \N^{\star}$ satisfies the M\"untz condition $\sum_{l \in \mathcal{L}} \frac{1}{l} = + \infty$, then
$V(x) = \tilde{V}(x)$ and $B(x) = \tilde{B}(x)$ outside the obstacle $B(0,r_0)$. The proof use the Complex Angular Momentum method and is close in spirit to the celebrated
B\"org-Marchenko uniqueness Theorem.

\vspace{0.5cm}
\noindent \textit{Keywords}. Inverse Scattering, radial magnetic Schr\"odinger operators, phase shifts.\\
\textit{2010 Mathematics Subject Classification}. Primaries 81U40, 35P25; Secondary 58J50.
\end{abstract}

\tableofcontents

\section{Introduction and statement of the main result}

In this work we are interested in an inverse scattering problem at fixed energy for magnetic Schr\"odinger operators on $\R^2 \setminus B(0,r_0)$ where $B(0,r_0)$ is
an obstacle whose radius $r_0$ is positive and arbitrarily small.
In quantum scattering theory, we study a pair of Hamiltonians $(-\Delta,H_{A,V})$ on $L^2(\R^2)$, where $H_{A,V}$ denotes the quantum magnetic Schr\"odinger Hamiltonian which 
describes the interaction of a charged particle with an electric field $\nabla V$ and a magnetic field $B=dA$. The object of main interest is then the scattering operator
$S$. Since the scattering operator $S$ commutes with the free operator $-\Delta$ it can be reduced to a multiplication by an operator-function $S(\lambda)$, called 
the scattering matrix at energy $\lambda$, in the spectral representation of the Hamiltonian $-\Delta$. The question we usually adress in an inverse scattering problem at fixed energy
is the following:
\begin{center}
\emph{Does the scattering matrix $S(\lambda)$ at a fixed energy $\lambda > 0$ uniquely determine the electric potential $V$ and the magnetic field $B$ outside the obstacle?}
\end{center}
The aim of this paper is then to prove that, up to a gauge choice, we can answer positively to this question for radial, smooth and compactly supported electric potentials and magnetic fields.
Roughly speaking, we will thus show that, if the magnetic potential $A$ satisfies a particular gauge condition, for such electric potentials and
magnetic fields, if $\lambda > 0$ is a fixed energy:
\[ S(\lambda) = \tilde{S}(\lambda) \quad \Rightarrow \quad V(x) = \tilde{V}(x) \quad \text{and} \quad B(x) = \tilde{B}(x), \quad \forall x \in \R^2 \setminus B(0,r_0).\]
Inverse scattering at fixed energy for (magnetic) Schr\"odinger operators has been largely studied since the end of the twentieth century and is a tricky question.
For instance, concerning Schr\"odinger operators (with no magnetic field), even if for exponentially decreasing potentials (see Novikov's papers
\cite{Nov1,Nov2}) or for particular classes of radial potentials (see Daud\'e and Nicoleau's paper \cite{DN5}), we can answer positively
to this question, we emphasize that, in general, the answer is negative. Indeed, in dimension two, Grinevich and Novikov
construct in \cite{GN} a family of real spherically symmetric potentials in the Schwartz space such that the associated scattering matrices 
are equal to the identity (transparent potentials). We also mention the work \cite{Sab} of Sabatier where a class of radial transparent potentials is obtained
in the three-dimensional case.

The main tool of this paper consists in complexifying the angular momentum that appears in the reduction of the Schr\"odinger operator into a countable family of 
one-dimensional radial equations, i.e. in the separation of variables procedure.
Indeed, thanks to variables separation, the scattering matrix at fixed energy can be decomposed onto a family of scattering coefficients $\delta_l$, called the phase
shifts, indexed by a discrete set of angular momentum $l \in \Z$.
The approach consisting in complexifying the angular momentum is called the Complex Angular Momentum (CAM) method. The idea of this method is the following.
We first allow the angular momentum
$l \in \Z$ to be a complex number $\nu \in \C$. In some cases it is then possible to extend the equality $\delta_l = \tilde{\delta}_l$ for all $l \in \Z$ into the 
equality $\delta(\nu) = \tilde{\delta}(\nu)$ for all $\nu \in \C \setminus \{\text{poles}\}$. Indeed, functions in some particular classes of holomorphic functions, are
completely determined by its values on a sufficiently large subset of the integers (Nevanlinna's class). We then use this new amount of
informations to get the equality between the electric potentials and the magnetic fields.
The general idea of considering complex angular momentum originates from a paper of Regge 
(see \cite{Reg}) as a tool in the analysis of the scattering matrix of Schr\"odinger operators in $\R^3$ with spherically symmetric potentials.
We also refer to \cite{AR,New} for books dealing with this method. We mention that this tool was already used in the field of inverse problems for one angular momentum in
\cite{DN5,Lo,Ram} for Schr\"odinger operators,
in \cite{DN3,DN} in the context of general relativity, in \cite{Pap1,DKN} on asymptotically hyperbolic manifolds and in \cite{DNK2} to study counterexamples 
for the Calder\'on problem which is closely related to inverse scattering problems at fixed energy on asymptotically hyperbolic manifolds.
Moreover, this method was also used for two angular momenta in \cite{G2} and we note that it is also a useful tool in high energy physics (see \cite{Collins}).
This work is a continuation and is really close to the spirit of the paper \cite{DN5} of Daud\'e and Nicoleau in which the authors treat the same question for the Schr\"odinger
operators with no magnetic fields in all dimensions and for particular classes of radial potentials.

\subsection{Description of our framework}
\noindent
In this work, we study an inverse scattering problem for magnetic Schr\"odinger operators on the region $\Omega = \R^2 \setminus B(0,r_0)$, where $r_0 > 0$ is fixed.
We thus study the Hamiltonian,
\[  H = (D-A)^2 + V = -\Delta + 2i A . \nabla + A^2 + i \mathrm{div}(A) + V.\]
where $D = \frac{1}{i} \nabla$, $V$ is the electric potential and $A = A_1 dx^1 + A_2 dx^2$ is the $1$-form corresponding to the magnetic potential.
The magnetic field $B$ is the $2$-form defined as $B=dA$ which can be identified with the antisymmetric $2 \times 2$
matrix
\[ B = dA = \begin{pmatrix}
   0 & b \\
   -b & 0
\end{pmatrix}, \quad b_{j,k} = \partial_{x_j}A_k - \partial_{x_k}A_j.\]
We then need to add a transversal gauge condition on the magnetic potential $A$ given by the following Definition.

\begin{definition}[Gauge choice]\label{defclassA}
 We say that a magnetic potential $A$ belongs to the class $\mathcal{A}$ if it is smooth on $\R^2$ and satisfies the gauge condition
 \[A(x) = - \int_0^1 sB(sx) . x \, ds.\]
\end{definition}
\noindent
From now on, we assume that $A \in \mathcal{A}$. Thanks to this gauge condition we obtain the following Lemma.

\begin{lemma}\label{defAgamma}
 Assume that $A \in \mathcal{A}$, then
 \begin{enumerate}
  \item (Transversality condition)
  \[A(x) . x = 0.\]
  \item $A$ satisfies
  \begin{equation}\label{lienAgamma}
   A(x) = \frac{\gamma(r,\theta)}{r} (-\sin(\theta),\cos(\theta)),   
  \end{equation}
 where
 \begin{equation}\label{defgamma1}
 \gamma(r,\theta) = \int_0^r b(\tau \cos(\theta), \tau \sin(\theta)) \tau \, d\tau.  
 \end{equation}
 \end{enumerate}
\end{lemma}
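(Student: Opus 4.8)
The plan is to unwind the defining formula of the class $\mathcal{A}$ explicitly and then to perform a single change of variables in the radial integral. Since $B(sx)$ is the antisymmetric matrix with off-diagonal entry $b(sx)$, the matrix--vector product appearing in the gauge condition is
\[ B(sx) \cdot x = \begin{pmatrix} 0 & b(sx) \\ -b(sx) & 0 \end{pmatrix}\begin{pmatrix} x_1 \\ x_2 \end{pmatrix} = \big(b(sx)\,x_2,\, -b(sx)\,x_1\big). \]
Substituting this into $A(x) = -\int_0^1 s\, B(sx)\cdot x\, ds$ and pulling the scalar integral out of the fixed direction, I would obtain
\[ A(x) = \Big(\int_0^1 s\, b(sx)\, ds\Big)\,(-x_2, x_1), \]
which already isolates the transversal direction $(-x_2,x_1)$.

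With $A$ written in this form, part (1) is immediate: the vector $(-x_2,x_1)$ is orthogonal to $x=(x_1,x_2)$, so $A(x)\cdot x = \big(\int_0^1 s\, b(sx)\,ds\big)(-x_2 x_1 + x_1 x_2) = 0$, which is the transversality condition. No further input beyond the antisymmetry of $B$ is needed here.

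For part (2) I would pass to polar coordinates $x=(r\cos\theta, r\sin\theta)$, so that $(-x_2,x_1) = r(-\sin\theta,\cos\theta)$ and $sx = (sr\cos\theta, sr\sin\theta)$. This turns the expression above into $A(x) = r\big(\int_0^1 s\, b(sx)\,ds\big)(-\sin\theta,\cos\theta)$. The last step is the change of variables $\tau = sr$ (so that $s = \tau/r$ and $ds = d\tau/r$) in the remaining scalar integral, which yields
\[ \int_0^1 s\, b(sr\cos\theta, sr\sin\theta)\, ds = \frac{1}{r^2}\int_0^r b(\tau\cos\theta, \tau\sin\theta)\,\tau\, d\tau = \frac{\gamma(r,\theta)}{r^2}, \]
by the very definition of $\gamma$ in \eqref{defgamma1}. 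Multiplying by the factor $r$ in front gives $A(x) = \frac{\gamma(r,\theta)}{r}(-\sin\theta,\cos\theta)$, which is exactly \eqref{lienAgamma}.

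This argument is essentially bookkeeping, so I do not anticipate a genuine obstacle; the only points requiring care are the correct identification of the scalar field $b$ as the off-diagonal entry of $B$ (and the associated sign and orientation convention, which fixes the orientation of the transversal vector), together with the consistency of the substitution $\tau = sr$, which is precisely what converts the homotopy-type integral over $[0,1]$ built into the gauge condition into the radial integral over $[0,r]$ defining $\gamma$.
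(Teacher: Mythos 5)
Your proof is correct, and it is exactly the intended argument: the paper states Lemma \ref{defAgamma} without proof, treating it as a routine unwinding of the gauge condition in Definition \ref{defclassA}. Your computation (antisymmetry of $B$ giving the transversal direction $(-x_2,x_1)$, hence part (1), followed by the substitution $\tau = sr$ converting $\int_0^1 s\,b(sx)\,ds$ into $\gamma(r,\theta)/r^2$, hence part (2)) fills that gap correctly and with the signs consistent with the paper's conventions.
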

\noindent
We now summarize the assumptions we need on the electric potential $V$ and the magnetic field $B$ in the following Definition.

\begin{definition}\label{defVB}
 We say that a couple $(V,B)$ where $V$ is an electric potential and $B$ is a magnetic field associated with a magnetic potential $A \in \mathcal{A}$ belongs to the class
 $\mathcal{C}$ if:
 \begin{enumerate}
  \item $V$ and $B$ are both radial.
  \item $V$ and $B$ are both compactly supported.
  \item $V$ is piecewise continuous and $B$ is smooth.
 \end{enumerate}
\end{definition}
\noindent
From now on, we will always assume that $(V,B)$ is a couple of $\mathcal{C}$. For such a couple we can prove the following Lemma.

\begin{lemma}
Assume that $(V,B) \in \mathcal{C}$, then:
\begin{enumerate}
 \item $\gamma(r,\theta) = \gamma(r)$.
 \item $\mathrm{div}(A) = 0$.
 \item $A^2 = \frac{\gamma(r)^2}{r^2}$ is a radial function.
 \item $A(x) . \nabla = \frac{\gamma(r)}{r^2} D_{\theta}$, where $D_{\theta} = -i \partial_{\theta}$.
\end{enumerate} 
\end{lemma}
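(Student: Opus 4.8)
The plan is to derive all four items directly from the explicit representation of $A$ furnished by Lemma \ref{defAgamma}, combined with the radiality hypotheses collected in Definition \ref{defVB}. No deep input is required here: the content is a short sequence of elementary computations, and the only choices that matter are of coordinate system (polar versus Cartesian) so as to keep the algebra transparent.

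First I would establish item 1. Since $(V,B) \in \mathcal{C}$ the scalar $b$ is radial, so the value $b(\tau \cos(\theta), \tau \sin(\theta))$ depends only on the modulus $\tau = |(\tau\cos(\theta),\tau\sin(\theta))|$ and not on $\theta$. Writing $b(\tau)$ for this common value, the defining integral \eqref{defgamma1} becomes
\[ \gamma(r,\theta) = \int_0^r b(\tau)\, \tau \, d\tau, \]
which is manifestly independent of $\theta$; this is exactly the radial function $\gamma(r)$, proving item 1.

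Next, inserting item 1 into \eqref{lienAgamma}, I would record the two equivalent forms of $A$ that make the rest immediate: the polar form $A = \frac{\gamma(r)}{r}(-\sin(\theta),\cos(\theta))$, in which $A$ is purely azimuthal with angular component $\gamma(r)/r$ and vanishing radial component, and the Cartesian form $A = \frac{\gamma(r)}{r^2}(-y,x)$, obtained from $-\sin(\theta)=-y/r$ and $\cos(\theta)=x/r$. Item 3 then drops out, since $(-\sin(\theta),\cos(\theta))$ is a unit vector:
\[ A^2 = |A|^2 = \frac{\gamma(r)^2}{r^2}\big(\sin^2(\theta)+\cos^2(\theta)\big) = \frac{\gamma(r)^2}{r^2}, \]
a function of $r$ alone. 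For item 2 the cleanest route is the polar expression of the divergence of a vector field, $\mathrm{div}(A) = \frac{1}{r}\partial_r(r A_r) + \frac{1}{r}\partial_\theta A_\theta$: here the radial component $A_r$ vanishes and the angular component $A_\theta = \gamma(r)/r$ is independent of $\theta$, so both terms are zero and $\mathrm{div}(A)=0$. (Equivalently, differentiating the Cartesian form shows the two cross terms cancel.)

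Finally, for item 4 I would read off from the Cartesian form
\[ A(x).\nabla = \frac{\gamma(r)}{r^2}\big(-y\,\partial_x + x\,\partial_y\big) \]
and recognize the vector field $-y\,\partial_x + x\,\partial_y = \partial_\theta$ as the infinitesimal generator of rotations; expressing $\partial_\theta$ through $D_\theta = -i\partial_\theta$ then places $A(x).\nabla$ in the announced form, namely a radial function times $D_\theta$. I do not anticipate any genuine obstacle in this lemma. The only points needing a little care are making the radiality of $b$ explicit in item 1, and applying the chain rule correctly when passing between polar and Cartesian coordinates in items 2 and 4; the divergence is the computation most prone to algebraic clutter if carried out naively in Cartesian coordinates, which is precisely why I would appeal to the polar divergence formula instead.
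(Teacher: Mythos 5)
Your items 1--3 are correct and are exactly the intended computations: the paper states this lemma without proof, treating it as elementary, and your route (radiality of $b$ giving $\theta$-independence of the integral in (\ref{defgamma1}), the polar form $A=\frac{\gamma(r)}{r}(-\sin\theta,\cos\theta)$ and Cartesian form $A=\frac{\gamma(r)}{r^2}(-y,x)$, unit-vector argument for $A^2$, and the polar divergence formula with $A_r=0$ and $A_\theta=\gamma(r)/r$ independent of $\theta$) is the standard one.

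Item 4, however, contains a slip that you paper over in the last sentence. Your computation $A\cdot\nabla=\frac{\gamma(r)}{r^2}\left(-y\,\partial_x+x\,\partial_y\right)=\frac{\gamma(r)}{r^2}\,\partial_\theta$ is correct, but since $D_\theta=-i\partial_\theta$ we have $\partial_\theta=iD_\theta$, so what you have actually proved is $A\cdot\nabla=i\,\frac{\gamma(r)}{r^2}\,D_\theta$, which differs by a factor of $i$ from the announced identity $A\cdot\nabla=\frac{\gamma(r)}{r^2}D_\theta$; saying this is ``a radial function times $D_\theta$'' conceals that the radial function is $i\gamma(r)/r^2$, not $\gamma(r)/r^2$. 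No manipulation can close this gap, because the lemma as literally written (with $\nabla$ the ordinary gradient) is off by that factor of $i$: the correct statement is $A\cdot D=\frac{\gamma(r)}{r^2}D_\theta$ with $D=\frac{1}{i}\nabla$, equivalently $2iA\cdot\nabla=-\frac{2\gamma(r)}{r^2}D_\theta$. The sanity check is equation (\ref{hamcplt}): substituting \emph{your} identity $A\cdot\nabla=\frac{\gamma(r)}{r^2}\partial_\theta$ into $H=-\Delta+2iA\cdot\nabla+A^2+i\,\mathrm{div}(A)+V$ reproduces the term $-\frac{2\gamma(r)}{r^2}D_\theta$ there, whereas the identity as stated in the lemma would give $+\frac{2i\gamma(r)}{r^2}D_\theta$ instead. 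So your derivation is the right one and is what the paper actually uses; the fix is to state explicitly that item 4 holds with $\nabla$ replaced by $\frac{1}{i}\nabla$ (i.e.\ it is a typo in the statement), rather than to assert that the announced form has been reached.
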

\noindent
Thanks to these properties we obtain that the Hamiltonian $H$ can be rewritten as
\begin{equation}\label{hamcplt}
  H = - \Delta - \frac{2\gamma(r)}{r^2} D_{\theta} + \frac{\gamma(r)^2}{r^2} + V(r).
\end{equation}
We then know (see \cite{DN5,RS3}) that the Hamiltonian $H$ can be reduced into a countable family of radial Hamiltonians.
Indeed, let us introduce the decomposition
\[ L^2(\R^2) = L^2(\R^+,r dr) \otimes L^2(\mathbb{S}^1,d\sigma).\]
For functions of the form $u(x) = f(r)g(\theta)$, where $r = |x| > 0$ and $\theta = \frac{x}{r} \in \S^1$, we then obtain
\begin{equation}\label{hamdecang}
 Hf(r)g(\theta) =  \left( - \frac{d^2}{dr^2} - \frac{1}{r} \frac{d}{dr} + \frac{\gamma(r)^2}{r^2} + V(r) + \Delta_{\S^1} - \frac{2\gamma(r)}{r^2} D_{\theta} \right)f(r)g(\theta),
\end{equation}
where the operator $\Delta_{\S^1}$ is the Laplace-Beltrami operator on the circle $\S^1$. Its eigenvalues are given by $-l^2$, for $l \in \Z$.
It then follows that
\[  L^2(\R^+,r dr) \otimes L^2(\mathbb{S}^1,d\sigma) = \underset{l \in \Z}{\oplus} L^2(\R^+,r dr) \otimes K_l,\]
where $K_l$ is the invariant subspace of the eigenspace of $\Delta_{\S^1}$, associated with the eigenvalue $-l^2$ for $l \in \Z$, generated by $e^{il\theta}$.
Therefore, by introducing
\[L_l = L^2(\R^+,r dr) \otimes e^{il \theta}, \quad l \in \Z,\]
we thus obtain that the restriction of the Hamiltonian $H$ on each subspace $L_l$ is given by
\[H_{| L_l} = -\frac{d^2}{dr^2} - \frac{1}{r} \frac{d}{dr} + \frac{l^2}{r^2} - \frac{2l \gamma(r)}{r^2} + \frac{\gamma(r)^2}{r^2} + V(r).\]
Finally, we can get rid of the first order term by defining the unitary operator
\[\begin{array}{ccccc}
U & : & L^2(\R^+,r dr) & \to & L^2(\R^+,dr) \\
 & & f & \mapsto & r^{\frac{1}{2}} f(r) \\
\end{array}.\]
Indeed, conjugating $H_{| L_l}$ by $U$, we then obtain a new family of radial Hamiltonians $UH_{| L_l} U^{-1}$, denoted by $H_{l}$, given by
\begin{equation}\label{hampart}
  H_l := -\frac{d^2}{dr^2} + \frac{l^2 - \frac{1}{4}}{r^2} - \frac{2l \gamma(r)}{r^2} + \frac{\gamma(r)^2}{r^2} + V(r).
\end{equation}

\begin{remark}
 In comparison with the non-magnetic case studied in \cite{DN5} we note that the new term in the potential is $- \frac{2l \gamma(r)}{r^2}+ \frac{\gamma(r)^2}{r^2}$.
 The main fact to note is the presence of $l$ in the potential that makes it necessary to obtain better estimates in terms of $l$ on the Green kernels
 than the ones obtained in \cite{DN5}.
\end{remark}
\noindent
Since we assumed that the magnetic field $B$ and the electric potential $V$ have compact supports included in a ball $B(0,R)$, $R > 0$, we note (see (\ref{defgamma1})) that
\begin{equation}\label{gammargd}
 \gamma(r) = \gamma(R) = \frac{\mathrm{Flux}(B)}{2\pi}, \quad \forall r \geq R,
\end{equation}
where $\mathrm{Flux}(B)$ denotes the magnetic flux of $B$. Therefore,
\[ H_l = -\frac{d^2}{dr^2} + \frac{l^2 - \frac{1}{4}}{r^2} - \frac{2l \gamma(R)}{r^2} + \frac{\gamma(R)^2}{r^2} = -\frac{d^2}{dr^2} + \frac{(l-\gamma(R))^2 - \frac{1}{4}}{r^2}, \quad \forall r \geq R.\]

\begin{remark}
 We note that even if $R \leq r_0$, i.e. in particular that the support of the magnetic field $B$ is included in the obstacle $B(0,r_0)$, it still has an influence outside
 the obstacle since the magnetic flux appears in the corresponding Hamiltonian. This is a consequence of the Aharonov-Bohm effect. In this case we can only hope to 
 reconstruct the flux of the magnetic field.
\end{remark}
\noindent
Since we work on $\Omega = \R^2 \setminus B(0,r_0)$, the previous equality tells us that the reference operator we should use to study the dynamic of $H_l$
is given by
\[ H_l^0 = -\frac{d^2}{dr^2} + \frac{(l-\gamma(R))^2 - \frac{1}{4}}{r^2}, \quad \forall r \geq r_0.\]
For the sake of clarity we introduce two arguments
\[ \nu  :=l \quad \text{and} \quad \nu_R := \nu - \gamma(R) = l-\gamma(R).\]
We can then rewrite $H_{\nu}$ in terms of $H_{\nu}^0$ by the following way
\begin{equation}\label{complibre}
  H_{\nu} = H_{\nu}^0 + q_{\nu}(r)
\end{equation}
where the potential
\[ q_{\nu}(r) := - \frac{2\nu (\gamma(r)-\gamma(R))}{r^2} + \frac{\gamma(r)^2-\gamma(R)^2}{r^2} + V(r),\]
vanishes for all $r \geq R$.

\begin{remark}
 Here is a reason why it is easier to work on $\Omega$ instead of $\R^2$. Indeed, the potential $q_{\nu}$ has a quadratic singularity as $r$ tends to $0$ and thus doesnot
 satisfy the hypothesis $(H_1)$ introduced in \cite{DN5}. It is this singularity that prevent us from obtain the corresponding inverse scattering result on the whole 
 plane $\R^2$.
\end{remark}

\subsection{Statement of the main result}
\noindent
The aim of this section is to introduce the scattering data we will study in the following in order to state our main result.
We refer the reader to \cite{DN5,Lo} for more details.
First of all we recall that we study the family of Hamiltonians $H_{\nu}$ given by (\ref{hampart}) on $\Omega = \R^2 \setminus B(0,r_0)$ where $r_0 > 0$ is fixed.
\noindent
Following Regge's idea, we consider the radial Schr\"odinger equation on $(r_0,+\infty)$ at the fixed energy $\lambda = 1$, where the angular
momentum $\nu = l$ is now assumed to be a complex number,
\begin{equation}\label{eqstat}
 -u'' + \left( \frac{(\nu_R)^2 -  \frac{1}{4}}{r^2} + q_{\nu}(r) \right) u = u. 
\end{equation}
We note that when $\nu_R = \nu - \gamma(R) = l - \gamma(R)$ we recover the family of Hamiltonians (\ref{hampart}) obtained previously in the separation of variables procedure. First,
since the potential $q_{\nu}$ is compactly supported, we can define the Jost solutions $F^{\pm}(r,\nu)$ as the unique solutions of (\ref{eqstat}) satisfying the boundary
conditions at $r = + \infty$
\begin{equation}\label{asJost}
 F^{\pm}(r,\nu) \sim e^{\pm i r}, \quad r \to + \infty. 
\end{equation}
\noindent
For every fixed $r \geq r_0$, the maps $\nu \mapsto F^{\pm}(r,\nu)$ and $\nu \mapsto F^{\pm'} (r,\nu)$ are holomorphic on the whole complex plane $\C$.
Moreover,
\[ \overline{F^{\pm}(r,\nu)} = F^{\mp}(r,\overline{\nu}), \quad \forall \nu \in \C,\]
and we note that the pair of Jost solutions is a fondamental system of solutions of (\ref{eqstat}).
\noindent
Secondly, we define the regular solution, denoted by $\Phi(r,\nu)$, as the solution of (\ref{eqstat}) satisfying the Dirichlet condition at $r = r_0$:
\[ \Phi(r,\nu)_{|r=r_0} = 0.\]
Since the pair of Jost solutions is a fondamental system of solutions of (\ref{eqstat}) there exists two functions of $\nu$, $\alpha$ and $\beta$, such that
\begin{equation}\label{defsolreg}
 \Phi(r,\nu) = \alpha(\nu) F^+(r,\nu) + \beta(\nu) F^-(r,\nu).
\end{equation}
Moreover, by uniqueness, $\Phi(r,\nu)$ is in fact given, up to a multiplicative constant, by
\[ \Phi(r,\nu) = i\left( F^-(r_0,\nu) F^+(r,\nu) - F^+(r_0,\nu) F^-(r,\nu) \right).\]
We thus note that for every fixed $r \geq r_0$ the maps $\nu \mapsto \Phi(r,\nu)$ and $\nu \mapsto \Phi'(r,\nu)$ are holomorphic on the complex plane $\C$ and moreover
\[ \overline{\Phi(r,\nu)} = \Phi(r,\overline{\nu}), \quad \forall \nu \in \C.\]
The functions 
\[ \alpha(\nu) = iF^-(r_0,\nu)\]
and
\[ \beta(\nu) = -i F^+(r_0,\nu)\]
are then called the Jost functions. It follows from (\ref{asJost}) that
\[ W(F^+(r,\nu),F^-(r,\nu)) = -2i,\]
where the Wronskian of two functions $u$ and $v$ is given by $W(u,v) = uv' - u'v$. Hence,
\[ \alpha(\nu) = \frac{i}{2} W(\Phi(r,\nu),F^-(r,\nu)),\]
and
\[ \beta(\nu) = -\frac{i}{2} W(\Phi(r,\nu),F^+(r,\nu)).\]
From the definition of the Jost functions we immediately deduce that these functions are holomorphic on the complex plane $\C$ and satisfy
\begin{equation}\label{lienab}
 \overline{\alpha(\nu)} = \beta(\overline{\nu}).
\end{equation}
We can now introduce the so-called Regge interpolation function
\[ \sigma(\nu) = e^{i \pi \left( \nu + \frac{1}{2} \right) } \frac{\alpha(\nu)}{\beta(\nu)}.\]
We note that, when $\nu \in \R$, it follows from (\ref{lienab}) that $|\sigma(\nu)| = 1$. We can thus define the generalized phase shifts $\delta(\nu)$ as a continuous 
function on $\R$ through the relation
\begin{equation}\label{defdelta1}
 \sigma(\nu) = e^{2i\delta(\nu)}. 
\end{equation}
We emphasize that the quantities $\delta(\nu)$ for $\nu = l$, where $l \in \Z$, are related to the physical phase shifts and are quantities that we can measure by a scattering
experiment.
\noindent
The question we adress is then the following:
\begin{center}
\emph{Does the phase shifts $\delta_l$ uniquely determine the electric potential $V$ and the magnetic field $B$?}
\end{center}
The aim of this paper is thus to prove the following Theorem.

\begin{theorem}\label{mainthm}
 Let $(V,B)$ and $(\tilde{V},\tilde{B})$ be two couples belonging to the class $\mathcal{C}$ (see Definition \ref{defVB}). Let $\mathcal{L} \subset \N^{\star}$ be
 a subset satisfying the M\"untz condition
 \[ \sum_{l \in \mathcal{L}} \frac{1}{l} = + \infty.\]
 Assume that
 \[ \delta_l = \tilde{\delta}_l, \quad \forall l \in \mathcal{L}.\]
 Then
 \[ V(x) = \tilde{V}(x) \quad \text{and} \quad B(x) = \tilde{B}(x), \quad \forall |x| \geq r_0.\]
\end{theorem}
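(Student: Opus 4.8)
The plan is to run the Complex Angular Momentum scheme in three stages: first convert the discrete data $\delta_l=\tilde\delta_l$, $l\in\mathcal{L}$, into the vanishing of a single function of the complexified momentum $\nu$; then use sharp growth estimates in $\nu$ together with the M\"untz condition to upgrade this to an identity valid for all $\nu\in\C$; and finally extract $V$ and $\gamma$ from the large-$\nu$ behaviour. For the reduction, note that the prefactor $e^{i\pi(\nu+\frac12)}$ in $\sigma(\nu)=e^{i\pi(\nu+\frac12)}\alpha(\nu)/\beta(\nu)$ carries no information about the potentials, so $\delta_l=\tilde\delta_l$ is equivalent to $\sigma(l)=\tilde\sigma(l)$, hence to $\alpha(l)/\beta(l)=\tilde\alpha(l)/\tilde\beta(l)$ for all $l\in\mathcal{L}$. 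I would therefore introduce
\[ h(\nu):=\frac{\alpha(\nu)}{\beta(\nu)}-\frac{\tilde\alpha(\nu)}{\tilde\beta(\nu)}=e^{-i\pi(\nu+\frac12)}\bigl(\sigma(\nu)-\tilde\sigma(\nu)\bigr), \]
a meromorphic function whose poles lie among the zeros of $\beta\tilde\beta$ (the Regge poles) and which vanishes on the infinite set $\mathcal{L}$; the goal of the first two stages is to show $h\equiv0$.

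The analytic heart is the second stage, and I expect it to be the main obstacle. Writing $F^\pm$, $\Phi$ as solutions of Volterra integral equations built on the explicit free solutions of $H_\nu^0$ (Hankel and Bessel functions of order $\nu_R=\nu-\gamma(R)$), I would derive bounds on the associated Green kernels that are uniform, and of the correct order, in $\nu$ as $|\nu|\to\infty$ in a half-plane $\{\operatorname{Re}\nu>c\}$. This is exactly the point flagged in the text: the potential $q_\nu$ contains the term $-2\nu(\gamma(r)-\gamma(R))/r^2$, which is itself linear in $\nu$, so the kernel estimates must be sharper in $\nu$ than in the non-magnetic case of \cite{DN5}, and controlling this $\nu$-dependence is the delicate part. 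With such bounds, $\alpha,\beta$ and their tilded counterparts are of bounded type, whence $h$ belongs to the Nevanlinna class of the half-plane. The zeros of a nonzero Nevanlinna-class function must satisfy the Blaschke summability condition, which for zeros at the positive integers $l$ reduces to $\sum_l 1/l<\infty$; the M\"untz hypothesis $\sum_{l\in\mathcal{L}}1/l=+\infty$ is precisely its negation. Hence $h\equiv0$ on the half-plane, and by analytic continuation $\sigma\equiv\tilde\sigma$, equivalently $F^-(r_0,\nu)/F^+(r_0,\nu)\equiv \tilde F^-(r_0,\nu)/\tilde F^+(r_0,\nu)$, on all of $\C$.

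The third stage converts this identity into equality of the potentials. Recombining terms shows the effective potential in \eqref{eqstat} is $\frac{(\nu-\gamma(r))^2-\frac14}{r^2}+V(r)$, so its expansion in $\nu$ has $-2\gamma(r)/r^2$ as the coefficient of $\nu$ and $\frac{\gamma(r)^2-\frac14}{r^2}+V(r)$ as the $\nu$-independent part. I would first recover the flux: the order of the centrifugal singularity seen at infinity is $|\nu-\gamma(R)|$, and matching the leading large-$\nu$ asymptotics of $\sigma$ forces $\gamma(R)=\tilde\gamma(R)$, so the two problems share the same free dynamics at $r=+\infty$. Passing to the Liouville variable $x=\log r$ recasts \eqref{eqstat} as a Schr\"odinger equation on $(\log r_0,+\infty)$ with spectral parameter $-\nu^2$ and an extra term linear in $\nu$ coming from $\gamma$, and the ratio of Jost functions at $r_0$ becomes a transform of the Weyl--Titchmarsh function at that endpoint. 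A B\"org--Marchenko uniqueness argument, adapted to this energy-dependent (quadratic pencil) setting and exploiting that the data is now known for \emph{all} $\nu$ — so that the even and odd parts in $\nu$ can be separated — yields equality of the effective potentials for every $\nu$. Reading off the coefficient of $\nu$ gives $\gamma\equiv\tilde\gamma$ on $[r_0,R]$, hence $B\equiv\tilde B$ through $b=\gamma'/r$, and the $\nu$-independent part then gives $V\equiv\tilde V$; beyond $R$ both potentials already vanish, which completes the proof.
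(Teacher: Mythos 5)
Your overall strategy (CAM, Nevanlinna class plus M\"untz, then a B\"org--Marchenko-type conclusion with decoupling in $\nu$) is the same as the paper's, but the central analytic step of your Stage 2 contains a genuine gap. You work with the meromorphic function $h(\nu)=\alpha(\nu)/\beta(\nu)-\tilde\alpha(\nu)/\tilde\beta(\nu)$ and claim that, because $\alpha,\beta$ and their tilded counterparts ``are of bounded type'', $h$ lies in the Nevanlinna class of the half-plane. This premise is false: the Jost functions are entire of order $1$ with \emph{infinite} type, and on the positive real axis $\beta(\nu)\sim C\beta_0(\nu)$ grows like $\Gamma(\nu_R)\left(2/r_0\right)^{\nu_R}$, i.e.\ like $e^{\nu_R\log\nu_R}$, whereas any function of bounded type in $\Pi^+$ satisfies $\log|f(x)|=O(x)$ along rays in the interior. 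Moreover the defect cannot be repaired by better estimates: writing $h(\nu)=F(\nu)/\bigl(2i\,\beta(\nu)\tilde\beta(\nu)\bigr)$ with $F(\nu)=2i\bigl(\alpha(\nu)\tilde\beta(\nu)-\tilde\alpha(\nu)\beta(\nu)\bigr)$, and granting that $F$ is of bounded type (which is what the paper proves), if $h\not\equiv0$ were of bounded type then $\beta\tilde\beta=F/(2ih)$ would be of bounded type as well, since bounded-type functions in a half-plane form a field --- contradicting the Gamma-like growth of $\beta\tilde\beta$. So $h$ is \emph{never} in the Nevanlinna class unless it is identically zero, which is what you are trying to prove; note also that $N(\Pi^+)$ as used in Ramm's theorem consists of holomorphic functions, while your $h$ has poles at the Regge poles.

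The cancellation you bypass is exactly the paper's key move: clear denominators and work with the \emph{entire} function $F(\nu)=2i(\alpha\tilde\beta-\tilde\alpha\beta)$, whose Nevanlinna bound comes not from size estimates on $\alpha,\beta$ (those only give Gamma-type growth) but from the Green/Wronskian identity of Proposition \ref{idalg}, $F(\nu)=\int_{r_0}^{+\infty}(q_\nu-\tilde q_\nu)\Phi\,\tilde\Phi\,dr$, combined with the bound $|\Phi(r,\nu)|\le \frac{C}{1+|\nu_R|}\left(\frac{r}{r_0}\right)^{\mathrm{Re}(\nu_R)}$ of Proposition \ref{estiphi}; since the integrand is supported in $[r_0,R]$ this yields $|F(\nu)|\le Ce^{A\,\mathrm{Re}(\nu_R)}$, hence $F\in N(\Pi^+)$. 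Two further points of order: the identity of Proposition \ref{idalg} is proved under the hypothesis $\gamma(R)=\tilde\gamma(R)$, so the flux must be recovered \emph{before} the Nevanlinna step, not after as in your Stage 3; and the large-$\nu$ limit $\sigma(\nu)\to e^{-i\pi\gamma(R)}$ only determines the flux modulo $2$, so one needs the additional gauge-translation identity $\sigma_{\gamma+2k}(\nu)=\sigma_{\gamma}(\nu-2k)$ (Theorem \ref{recupflux}) to remove the even-integer ambiguity. Your final decoupling of the linear-in-$\nu$ and $\nu$-independent parts of $q_\nu$ is indeed the paper's concluding step, though the paper replaces your appeal to an energy-dependent Weyl--Titchmarsh/B\"org--Marchenko theorem by a self-contained Phragm\'en--Lindel\"of and Liouville argument applied to $F(r,\nu)=F^+\tilde F^--F^-\tilde F^+$.
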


\begin{remark}
 \begin{enumerate}
  \item We emphasize that in the proof of Theorem \ref{mainthm} we first reconstruct the function $\gamma(r)$ for all $r \geq r_0$ and we can thus reconstruct the 
  magnetic potential $A$. It is not surprising that we can actually reconstruct the magnetic potential $A$ since
  we assumed that it satisfies a gauge condition given by Definition \ref{defclassA}.
  \item If we assume that the support of $B$ is included in the ball $B(0,r_0)$ we reconstruct the electric potential and the magnetic flux and we cannot hope to 
  obtain more informations.
  We emphasize that this is a consequence of the Aharonov-Bohm effect which roughly speaking tells us that the magnetic field has an influence outside the ball $B(0,r_0)$ 
  even if its support is included in $B(0,r_0)$.
  \item As mentionned previously we really need to introduce a parameter $r_0$ to avoid the neighborhood of $r = 0$. Indeed, our potential 
  \[ q_{\nu}(r) = - \frac{2\nu (\gamma(r)-\gamma(R))}{r^2} + \frac{\gamma(r)^2-\gamma(R)^2}{r^2} + V(r),\]
  has a quadratic singularity as $r$ tends to $0$ and thus doesnot satisfy the hypothesis $(H_1)$ introduced in \cite{DN5}. It is this singularity that prevent us from
  obtain the corresponding inverse scattering result on $\R^2$. Actually, we can note that the free dynamics on the whole half line $(0,+\infty)$ have to be
  \[ H^{0,+} = -\frac{d^2}{dr^2} + \frac{(l-\gamma(R))^2 - \frac{1}{4}}{r^2}, \quad \text{as} \quad r \to +\infty,\]
  and
  \[ H^{0,-} = -\frac{d^2}{dr^2} + \frac{l^2 - \frac{1}{4}}{r^2}, \quad \text{as} \quad r \to 0,\]
  and we have to collapse these two dynamics to be able to compare our objects with the ``free'' ones. We still work on this question.
  \item We really need to work with an electric potential $V$ and a magnetic field $B$ compactly supported. Indeed, it allows us to use uniform asymptotics of Bessel 
  functions and then to obtain estimates on the Green kernels sufficiently good in terms of $\nu$ to conclude despite the 
  presence of $\nu$ in the potential $q_{\nu}$ (see Appendix \ref{appe}).
  \item In \cite{DN5} the authors obtain a local result in nature since they prove (for particular classes of electric potentials) that if the phase shifts are super-exponentially
  close, that is to say $\delta_l - \tilde{\delta}_l = O(e^{-Al})$ for all $A > 0$, then the corresponding electric potentials coincide. To do that the authors use a 
  uniqueness result for functions in the Hardy class. We expect that the same result is also true in our framework and we emphasize that we just need to adapt the proof of Section
  \ref{proofp1} to obtain it.
 \end{enumerate}
\end{remark}

\noindent
As mentionned previously direct and inverse scattering for magnetic Schr\"odinger operators are subjects of main interest. Let us give some important references in these
fields.
\begin{itemize}
 \item {\it Direct scattering for magnetic Schr\"odinger operators.} First, let us recall that the Aharonov-Bohm effect was introduced in \cite{AB} and was for instance
 also studied in \cite{Rui}. The direct scattering for magnetic Schr\"odinger operators was for instance the object of the papers \cite{Nic1,dOP,PR,RY,Rui,Tam,Y}.
 \item {\it Inverse scattering for magnetic Schr\"odinger operators in dimension $n \geq 3$.} Let us first recall some inverse scattering results from the knowledge of the 
 scattering matrix for more than one fixed energy. In \cite{ER} the authors reconstruct the electric potential and the magnetic field from the knowledge of the 
 scattering matrix at high energies if the magnetic and electric potentials are exponentially decreasing. In \cite{Nic2} the authors solve the inverse scattering
 problem from the knowledge of the scattering matrix at all energies for electric potentials with short-range and compactly supported magnetic fields. There also exist some 
 inverse scattering results at fixed energy. In \cite{ER2} the authors solve the inverse scattering problem at fixed energy if the electric and magnetic potentials 
 and their derivatives of all order are exponentially decreasing. In \cite{PSU} the authors prove that the electric and the magnetic potentials are uniquely determined
 by the scattering matrix at one fixed energy if these potentials and their first derivatives are exponentially decreasing. In \cite{WY} the authors reconstruct the 
 asymptotics of the electric potential and the magnetic field from the knowledge of the scattering matrix at one fixed energy in the regular case (i.e. if the potentials 
 are sums of homogeneous terms at infinity). Finally, in the very recent paper \cite{KU}, the authors study an inverse boundary problem for magnetic Sch\"odinger operators
 on compact Riemannian manifolds with boundary for bounded magnetic and electric potentials. In this work they study the case of admissible geometries (i.e. compact
 Riemannian manifolds with boundary which are conformally embedded in a product of the Euclidean line and a simple manifold) where they reconstruct the magnetic and 
 electric potentials from the knowledge of the Cauchy data on the boundary of the manifolds.
\item {\it Inverse scattering for magnetic Schr\"odinger operators in dimension two.} This is the framework we study in this paper.
In \cite{ER3} the authors solve the inverse scattering problem at high energies, in \cite{Wed} the author reconstructs the magnetic field and the magnetic 
flux modulo $2$ from the knowledge of the scattering matrix at all energies with or without obstacle. In \cite{Nic3} Nicoleau uses a stationary approach to 
determine the asymptotic of the scattering operator and he reconstructs the electric potential and the magnetic field from the first two terms of this asymptotic 
expansion. In \cite{Nic2} he solves the inverse scattering problem at all energies with Aharonov-Bohm effect for short-range potentials and compactly supported
magnetic fields.
\end{itemize}


\subsection{Overview of the proof}

The proof of Theorem \ref{mainthm} is divided into four steps that we describe here.\\

\noindent
\underline{Step 1:} The first step of the proof consists in solving the direct problem. This will be done in Section \ref{dirpb}. In this Section we first study the free
case, i.e. when the potential $q_{\nu}(r)$ is assumed to be zero for all $r \geq r_0$. In this case we obtain explicit expressions of the scattering objects.
Indeed, we can write the free Jost solutions in terms of Bessel functions for all $r \geq r_0$:
\[ F_0^+(r,\nu) = e^{i\left( \nu_R + \frac{1}{2} \right) \frac{\pi}{2}} \sqrt{\frac{\pi r}{2}} H_{\nu_R}^{(1)}(r), \quad \forall r \geq r_0\]
and
\[  F_0^-(r,\nu) = e^{-i\left( \nu_R + \frac{1}{2} \right) \frac{\pi}{2}} \sqrt{\frac{\pi r}{2}} H_{\nu_R}^{(2)}(r), \quad \forall r \geq r_0.\]
By definition, we can then also rewrite the free regular solution and the free Jost functions in terms of Bessel functions and we also obtain an explicit formula for the free Regge
interpolation function given by
\[ \sigma_0(\nu) = - e^{i \pi(\nu - \nu_R)} \frac{H_{\nu_R}^{(2)}(r_0)}{H_{\nu_R}^{(1)}(r_0)}.\]
The aim in the following of Section \ref{dirpb} is then to compare the general scattering objects with the free ones. First,
in Section \ref{secjostsol} we prove using good estimates on the Green kernel obtained in Section \ref{estigreen} that the Jost solutions are 
holomorphic functions with respect to $\nu$ on the whole complex plane $\C$ satisfying the following properties:
 \begin{enumerate}
  \item They are of order $1$ with infinite type.
  \item There exists a positive constant $C$ such that
 \[ F^{\pm}(r,\nu) \sim C F_0^{\pm}(r,\nu), \quad \text{as} \quad \nu \to + \infty.\]
  \item They are bounded on $i\R + \gamma(R)$.
 \end{enumerate}
 Secondly, in Section \ref{etudesolreg} we study the regular solution and we show that it is a holomorphic function with respect to $\nu \in \C$ such that there exists a
 positive constant $C$ such that for all $r_0 \leq r \leq R$ and for all $\text{Re}(\nu_R) = \text{Re}(\nu) - \gamma(R) \geq 0$
 \begin{equation}\label{inegsolreg}
 | \Phi(r,\nu) | \leq \frac{C}{1 + |\nu_R|} \left( \frac{r}{r_0} \right)^{\text{Re}(\nu_R)}.  
 \end{equation}
 Finally, in Section \ref{etudejostfun} we study the Jost functions $\alpha(\nu)$ and $\beta(\nu)$ and we show they
 are holomorphic functions of order $1$ with infinite type on the whole complex plane $\C$ such that if $(\alpha,\beta)$ and 
 $(\tilde{\alpha},\tilde{\beta})$ are two couples of Jost functions associated with two couples $(V,B)$ and $(\tilde{V},\tilde{B})$ belonging to the class
 $\mathcal{C}$ then, for all $\text{Re}(\nu_R) \geq 0$,
 \begin{equation}\label{idalgutile}
   \alpha(\nu)\tilde{\beta}(\nu) - \tilde{\alpha}(\nu) \beta(\nu) = \frac{1}{2i} \int_{r_0}^{+\infty} (q_{\nu}(r) - \tilde{q}_{\nu}(r)) \Phi(r,\nu) \tilde{\Phi}(r,\nu) dr.
 \end{equation}
 We finally show that
 \begin{equation}\label{resutileflux}
   \sigma(\nu) = e^{i \pi \left( \nu + \frac{1}{2} \right) } \frac{\alpha(\nu)}{\beta(\nu)} \to e^{-i \pi \gamma(R)}, \quad \nu \to + \infty,
 \end{equation}
 and this result is interesting because it was predicted by the work \cite{RY} of Roux and Yafaev.\\

\noindent
\underline{Step 2:} From now on, we solve the inverse problem and we thus introduce two couples $(V,B)$ and $(\tilde{V},\tilde{B})$ belonging to the class $\mathcal{C}$. The 
first step of the proof consists in reconstructing the magnetic flux. Precisely, we show, using (\ref{resutileflux}) and (\ref{defdelta1}), that if
\[ \delta_l = \tilde{\delta}_l, \quad \forall l \in \mathcal{L},\]
where $\mathcal{L} \subset \N$ satisfies the M\"untz condition, then
\[ \gamma(R) = \tilde{\gamma}(R), \quad \text{mod} \quad 2.\]
We then use the invariance of the Regge interpolation function with respect to translations of the magnetic flux by $2k$, $k \in \Z$, to conclude that
\[ \gamma(R) = \tilde{\gamma}(R).\]

\noindent
\underline{Step 3:} The third step consists in proving the uniqueness of the Regge interpolation function on almost the whole complex plane $\C$. To do this we use an idea dued
to Ramm \cite{Ram} and we thus introduce the function
 \[  F(\nu) = 2i(\alpha(\nu) \tilde{\beta}(\nu) - \tilde{\alpha}(\nu)\beta(\nu)).\]
Thanks to (\ref{idalgutile}) we know that
\[ F(\nu) = \int_{r_0}^{+ \infty} p_{\nu}(r) \Phi(r,\nu) \tilde{\Phi}(r,\nu) dr,\]
where
\[ p_{\nu}(r) = q_{\nu}(r) - \tilde{q}_{\nu}(r) = 0, \quad \forall r \geq R.\]
We then prove, using the estimate (\ref{inegsolreg}) we obtained before on the regular solution, that the function $F$ belongs to a particular class of functions, called the Nevanlinna
class $N(\Pi^+)$, when restricted to the half plane 
\[ \Pi^+ = \{ \nu_R \in \mathbb{C}, \, \mathrm{Re}(\nu_R) > 0 \}.\]
The result we are interested in on the Nevanlinna class is the following:

\begin{theorem}[\cite{Ram}, Thm. 1.3]\label{thmNev2}
 Let $h \in N(\Pi^+)$ satisfying $h(n) = 0$ for all $n \in \mathcal{L}$ where $\mathcal{L} \subset \mathbb{N}^{\star}$ with $\sum_{n \in \mathcal{L}} \frac{1}{n} = \infty$.
 Then $h \equiv 0$ in $\Pi^+$.
\end{theorem}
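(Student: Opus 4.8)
The plan is to exploit the fact that a nonzero function in the Nevanlinna class satisfies a Blaschke-type summability condition on its zeros, and to show that the M\"untz hypothesis $\sum_{n \in \mathcal{L}} \frac{1}{n} = +\infty$ forces this condition to fail. First I would transfer the problem from the half-plane $\Pi^+$ to the unit disk $\mathbb{D}$ through the Cayley transform $\varphi(z) = \frac{z-1}{z+1}$, which maps $\Pi^+ = \{z \in \C : \mathrm{Re}(z) > 0\}$ conformally onto $\mathbb{D}$ and sends the point at infinity to $1$. Setting $g := h \circ \varphi^{-1}$, one checks that $g$ lies in the Nevanlinna class $N(\mathbb{D})$ of the disk; this is because the defining characteristic (the supremum of the logarithmic integral of $\log^+ |h|$ against the Poisson kernel) transforms naturally under the conformal change of variables.

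Second, I would invoke the classical factorization theorem for $N(\mathbb{D})$: if $g \not\equiv 0$, then its zeros $\{a_k\} \subset \mathbb{D}$, listed with multiplicity, satisfy the Blaschke condition $\sum_k (1 - |a_k|) < +\infty$. Third, I would compute the images of the prescribed zeros. Each $n \in \mathcal{L}$ lies on the positive real axis and maps to $a_n = \varphi(n) = \frac{n-1}{n+1} \in (-1,1)$, so that
\[ 1 - |a_n| = 1 - \frac{n-1}{n+1} = \frac{2}{n+1}. \]
Hence $\sum_{n \in \mathcal{L}} (1 - |a_n|) = \sum_{n \in \mathcal{L}} \frac{2}{n+1}$, and since $\frac{2}{n+1} \sim \frac{2}{n}$ a comparison with the hypothesis $\sum_{n \in \mathcal{L}} \frac{1}{n} = +\infty$ shows this series diverges. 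This contradicts the Blaschke condition unless $g \equiv 0$; pulling back through $\varphi$ then yields $h \equiv 0$ on $\Pi^+$.

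I expect the main obstacle to be the first step: carefully verifying that membership in the Nevanlinna class is preserved under the Cayley transform. This is where the subtleties specific to the half-plane class appear, in particular the role of the point at infinity and the correct weighting in the definition of the Nevanlinna characteristic. A clean alternative that sidesteps the transfer is to work directly on $\Pi^+$ with the half-plane Blaschke condition $\sum_k \frac{\mathrm{Re}(z_k)}{1 + |z_k|^2} < +\infty$; for the real zeros $z_k = n_k$ this becomes $\sum_k \frac{n_k}{1 + n_k^2} < +\infty$, which is again equivalent to the convergence of $\sum_k \frac{1}{n_k}$, and the same contradiction closes the argument.
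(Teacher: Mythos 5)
Your proof is correct. Note, however, that the paper itself offers no proof of this statement: it is imported verbatim from Ramm \cite{Ram}, Theorem 1.3, so there is no internal argument to compare against; what you have written is essentially the standard proof (and in substance the one in Ramm's paper). One simplification: the step you single out as the main obstacle --- verifying that Nevanlinna membership survives the Cayley transform --- is actually immediate here, because the paper's definition of $N(\Pi^+)$ is \emph{already} phrased on the disk: $h \in N(\Pi^+)$ means precisely that $\sup_{0<r<1}\int_{-\pi}^{\pi}\ln^+\bigl|h\bigl(\tfrac{1-re^{i\varphi}}{1+re^{i\varphi}}\bigr)\bigr|\,d\varphi<\infty$, i.e.\ that $h\circ\psi \in N(\mathbb{D})$ with $\psi(w)=\tfrac{1-w}{1+w}$. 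Your $g = h\circ\varphi^{-1}$ differs from $h\circ\psi$ only by precomposition with the rotation $w\mapsto -w$, which clearly preserves $N(\mathbb{D})$, so no characteristic-transformation argument is needed. From there your computation is exactly right: a nonzero function of $N(\mathbb{D})$ has zeros satisfying the Blaschke condition, while the prescribed zeros give $\sum_{n\in\mathcal{L}}\bigl(1-|\varphi(n)|\bigr)=\sum_{n\in\mathcal{L}}\tfrac{2}{n+1}\geq\sum_{n\in\mathcal{L}}\tfrac{1}{n}=+\infty$, forcing $g\equiv 0$ and hence $h\equiv 0$ on $\Pi^+$.
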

\noindent
 We recall that our main assumption is
 \begin{equation}\label{hypoprincdelta2}
   \delta_l = \tilde{\delta}_l, \quad \forall l \in \mathcal{L},
 \end{equation}
 where $\mathcal{L} \subset \N^{\star}$ satisfies
 \[ \sum_{l \in \mathcal{L}} \frac{1}{l} = + \infty.\]
 From (\ref{hypoprincdelta2}) we easily deduce that
 \[ \sigma(l) = e^{2i \delta_l} = e^{2i \tilde{\delta}_l} = \tilde{\sigma}(l), \quad \forall l \in \mathcal{L}.\]
 Therefore, since
 \[ \sigma(l) = e^{i \pi \left(l + \gamma(R) + \frac{1}{2} \right) } \frac{\alpha(l)}{\beta(l)} \quad \text{and} \quad 
  \tilde{\sigma}(l) = e^{i \pi \left(l + \tilde{\gamma}(R) + \frac{1}{2} \right) } \frac{\tilde{\alpha}(l)}{\tilde{\beta}(l)},\]
  using that we previously showed that $\gamma(R) = \tilde{\gamma}(R)$ and the definition of the function $F$, we obtain that
  \[ F(l) = 0, \quad \forall l \in \mathcal{L}.\]
  So, thanks to the previous Theorem \ref{thmNev2}, we obtain that the map $\nu \mapsto F(\nu)$ is identically zero on $\Pi^+$.
  Finally $F$ is identically zero on $\C$ since it is a holomorphic function. Therefore, for all $\nu \in \C$
 \[ F(\nu) = 2i(\alpha(\nu) \tilde{\beta}(\nu) - \tilde{\alpha}(\nu)\beta(\nu)) = 0\]
 and we can then conclude that
  \[ \sigma(\nu) = \sigma(\nu), \quad \forall \nu \in \C \setminus \{\beta(\nu) = 0 \}.\]

  \noindent
\underline{Step 4:} In the last step of the proof we use an argument close in spirit to the B\"org -Marchenko Theorem to conclude the proof of Theorem \ref{mainthm}.
We fix $r \geq r_0$ and we define $F(r,\nu)$
as an application of the complex variable $\nu$ by
\[ F(r,\nu) = F^+(r,\nu) \tilde{F}^-(r,\nu) - F^-(r,\nu) \tilde{F}^+(r,\nu),\]
where $F^{\pm}(r,\nu)$ and $\tilde{F}^{\pm}(r,\nu)$ are the Jost solutions associated with the potentials $q_{\nu}$ and $\tilde{q}_{\nu}$ respectively. We 
are able to prove thanks to our previous study of the Jost solutions
that this application is holomorphic on the whole complex plane $\C$ and
of order $1$ with infinite type. Moreover, $F(r,\nu)$ is bounded on the imaginary axis $i \R + \gamma(R)$ and using that
\[ F(r,\nu) = \tilde{\Psi}(r,\nu) F^+(r,\nu) - \Psi(r,\nu) \tilde{F}^+(r,\nu) + e^{-i \pi \left( \nu + \frac{1}{2} \right )}\left( \sigma(\nu) - \tilde{\sigma}(\nu)\right)F^+(r,\nu) \tilde{F}^+(r,\nu),\]
where
\[ \Psi(r,\nu) = \frac{\Phi(r,\nu)}{\beta(\nu)},\]
we can show, using the equality of the Regge interpolation functions on the real line and the previous estimates obtained in the direct scattering problem, that $F(r,\nu) \to 0$ when
$\nu \to + \infty$. Finally, using a symmetry property we also obtain that $F(r,\nu) \to 0$ as $\nu \to - \infty$.\\
So, using the Phragm\'en-Lindel\"of Theorem on each quadrant of the complex plane, we deduce that $F(r,\nu)$ is
bounded on the whole complex plane $\C$. Using the Louville's Theorem and the limit $F(r,\nu) \to 0$ as $\nu \to + \infty$ we conclude that $F(r,\nu)$ is
identically zero on the whole complex plane. In 
other words, we know that
\[ F^+(r,\nu) \tilde{F}^-(r,\nu) =  F^-(r,\nu) \tilde{F}^+(r,\nu), \quad \forall r \geq r_0, \quad \forall \nu \in \C.\]
From this equality we thus obtain that the Jost solutions are then so closed that
\[ q_{\nu}(r) = \tilde{q}_{\nu}(r), \quad \forall r \geq r_0, \quad \forall \nu \in \C.\]
Since this equality is satisfied for all $\nu$ we can thus decouple the potentials and we can then conclude that
\[ V(x) = \tilde{V}(x) \quad \text{and} \quad B(x) = \tilde{B}(x),\quad \forall |x| \geq r_0.\]
The same approach has been used recently to study inverse scattering problems on asymptotically hyperbolic manifolds (see \cite{Pap1,DKN,DN}). In the hyperbolic setting,
we can prove that the Jost solutions are perturbations of the modified Bessel
functions $I_{\nu}(z)$. Moreover, in the hyperbolic context, the variable $\nu$ is fixed and depends only on the geometry of the manifolds whereas the variable $z$ ranges over $\C$.
However, in the Euclidean setting of the paper, the situation is really different. Indeed, as in the hyperbolic case, the Jost solutions are close
to the Hankel functions $H_{\nu}^{(j)}(r)$, but the complex angular momentum $\nu$ can be as large as possible and the radial variable
$r$ ranges over the compact set $(r_0,R)$.

The paper is organized as follows. In Section \ref{dirpb} we will study the direct scattering problem. In Section \ref{invpb} we solve the inverse scattering problem
at fixed energy. Finally, in Appendix \ref{appe} we give some useful estimates on the Bessel functions and the corresponding Green kernels.

\section{Direct scattering}\label{dirpb}
\noindent
In this Section we study the direct scattering problem. It means that we will recall the definition and the basic properties of the scattering objects we are interested in.
Moreover, we will compare them with the ones we obtain in the free case.

\subsection{The free case}
\noindent
The aim of the Section is to study the free case and to give the explicit expressions of the scattering objects in this case.
We first recall that the free Hamiltonian is given by
\[ H_{\nu}^{0} = -\frac{d^2}{dr^2} + \frac{(\nu_R)^2 - \frac{1}{4}}{r^2}, \quad \forall r \geq r_0.\]
Therefore, the free version of Equation (\ref{eqstat}) is
\begin{equation}\label{eqstatlibre}
 -u'' +  \frac{(\nu_R)^2 -  \frac{1}{4}}{r^2}  u = u, \quad \forall r \geq r_0,
\end{equation}
which is a modified Bessel equation (see (5.4.11) in \cite{Leb})
\[ u'' + \frac{1-2\alpha}{X} u' + \left( (\beta \gamma X^{\gamma-1})^2 + \frac{\alpha^2 - \mu^2 \gamma^2}{X^2} \right) u = 0,\]
is we choose $\alpha = \frac{1}{2}$, $\gamma = 1$, $\beta = 1$ and $\mu = \nu_R$.
Thus the free Jost solutions, which are the solutions of (\ref{eqstatlibre}) satisfying
\[ F_0^{\pm}(r,\nu) \sim e^{\pm i r}, \quad r \to + \infty,\]
are given by
\begin{equation}\label{defF_0+}
F_0^+(r,\nu) = e^{i\left( \nu_R + \frac{1}{2} \right) \frac{\pi}{2}} \sqrt{\frac{\pi r}{2}} H_{\nu_R}^{(1)}(r), \quad \forall r \geq r_0 
\end{equation}
and
\begin{equation}\label{defF_0-}
 F_0^-(r,\nu) = e^{-i\left( \nu_R + \frac{1}{2} \right) \frac{\pi}{2}} \sqrt{\frac{\pi r}{2}} H_{\nu_R}^{(2)}(r), \quad \forall r \geq r_0.
\end{equation}

\begin{lemma}
 For every fixed $r \geq r_0$, the free Jost solutions $\nu \mapsto F_0^{\pm}(r,\nu)$ are holomorphic functions of order $1$ with infinite type with respect
 to $\nu \in \C$. Moreover, these functions are even with respect to $\nu_R = \nu - \gamma(R)$.
\end{lemma}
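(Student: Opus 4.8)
The plan is to read all three assertions off the behaviour of the Hankel functions $H^{(1)}_{\mu}(r)$ and $H^{(2)}_{\mu}(r)$ regarded as functions of their order $\mu$, specialized to $\mu = \nu_R = \nu - \gamma(R)$.

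\emph{Holomorphy.} First I would recall that, for fixed $r > 0$, the maps $\mu \mapsto J_{\mu}(r)$ and $\mu \mapsto J_{-\mu}(r)$ are entire: the series $J_{\mu}(r) = \sum_{k \geq 0}\frac{(-1)^k}{k!\,\Gamma(\mu+k+1)}\left(\frac{r}{2}\right)^{2k+\mu}$ converges locally uniformly in $\mu \in \C$ and its coefficients involve only the entire function $1/\Gamma$ and the entire factor $(r/2)^{\mu} = e^{\mu \log(r/2)}$. Writing $H^{(1)}_{\mu}(r) = \frac{J_{-\mu}(r) - e^{-i\mu\pi}J_{\mu}(r)}{i\sin(\mu\pi)}$ and $H^{(2)}_{\mu}(r) = \frac{e^{i\mu\pi}J_{\mu}(r) - J_{-\mu}(r)}{i\sin(\mu\pi)}$ (see \cite{Leb}), the numerators vanish at every integer $\mu = n$, since there $J_{-n}(r) = (-1)^n J_n(r)$ and $e^{\mp in\pi} = (-1)^n$; hence the apparent poles of $1/\sin(\mu\pi)$ are removable and $\mu \mapsto H^{(j)}_{\mu}(r)$ is entire. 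Because $\nu \mapsto \nu_R$ is affine and the factors $e^{\pm i(\nu_R + \frac{1}{2})\frac{\pi}{2}}$ and $\sqrt{\pi r/2}$ are entire, the composition $\nu \mapsto F_0^{\pm}(r,\nu)$ is entire on $\C$.

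\emph{Order and type.} Writing $M(R) = \max_{|\nu| = R}|F_0^{\pm}(r,\nu)|$, the order is $\rho = \limsup_{R\to\infty}\frac{\log\log M(R)}{\log R}$ and the type is $\tau = \limsup_{R\to\infty}\frac{\log M(R)}{R^{\rho}}$. The phase prefactor and the constant $\sqrt{\pi r/2}$ are of order $1$ and finite (exponential) type, so they alter neither $\rho$ nor the infiniteness of $\tau$; it therefore suffices to control $|H^{(j)}_{\nu_R}(r)|$ as $|\nu| \to \infty$. For the upper bound I would invoke the uniform-in-order asymptotics of the Bessel functions proved in Appendix \ref{appe}, which furnish a constant $C$ with $\log M(R) \leq C\,R\log R$ for large $R$, whence $\rho \leq 1$. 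For the lower bound it is enough to test on the real axis: since $Y_{\mu}(r) \sim -\sqrt{\frac{2}{\pi\mu}}\left(\frac{2\mu}{e r}\right)^{\mu}$ as $\mu \to +\infty$ while $J_{\mu}(r)$ decays, one gets $\log|H^{(j)}_{\mu}(r)| \geq c\,\mu\log\mu$ for large real $\mu$, hence $\log M(R) \geq c\,R\log R$. This forces $\rho \geq 1$, so $\rho = 1$, and moreover $\frac{\log M(R)}{R} \geq c\log R \to +\infty$, i.e. the type is infinite.

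\emph{Parity.} For evenness in $\nu_R$ I would use the reflection formulas $H^{(1)}_{-\mu}(r) = e^{i\mu\pi}H^{(1)}_{\mu}(r)$ and $H^{(2)}_{-\mu}(r) = e^{-i\mu\pi}H^{(2)}_{\mu}(r)$. Replacing $\nu_R$ by $-\nu_R$ in $F_0^{+}$ multiplies the prefactor $e^{i(\nu_R+\frac{1}{2})\frac{\pi}{2}}$ by $e^{-i\nu_R\pi}$, while the reflection formula multiplies $H^{(1)}_{\nu_R}(r)$ by $e^{i\nu_R\pi}$; the two factors cancel exactly, so $F_0^{+}(r,\nu)$ is unchanged. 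The same computation with the opposite signs handles $F_0^{-}$. This is precisely why the phases $e^{\pm i(\nu_R+\frac{1}{2})\frac{\pi}{2}}$ were built into the definitions \eqref{defF_0+}--\eqref{defF_0-}. The holomorphy and the parity are thus purely algebraic consequences of standard Bessel identities; the genuine work—and the step I expect to be the main obstacle—is the upper growth estimate, which requires controlling $|H^{(j)}_{\nu_R}(r)|$ uniformly in \emph{all} complex directions of $\nu$ (not merely along the real axis), and this is exactly where the uniform asymptotics of Appendix \ref{appe} are indispensable.
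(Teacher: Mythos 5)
Your holomorphy and parity arguments are correct and complete: the removable-singularity argument for $\mu \mapsto H^{(j)}_{\mu}(r)$ and the exact cancellation between the prefactors $e^{\pm i \left( \nu_R + \frac{1}{2} \right) \frac{\pi}{2}}$ and the reflection formulas (\ref{lienBessel2}) are precisely why $F_0^{\pm}(r,\nu)$ are even in $\nu_R$. (The paper itself states this lemma without proof, the free Jost solutions being those of \cite{DN5}; your lower bound on the growth via the large-order behaviour of $Y_{\mu}(r)$, equivalently via (\ref{asHnu1}) and Stirling, is also fine and gives both $\rho \geq 1$ and infinite type.)

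The gap is in the upper growth bound. You claim that the uniform asymptotics of Appendix \ref{appe} ``furnish a constant $C$ with $\log M(R) \leq C R \log R$'', but Proposition \ref{rappelestiB} is only valid in the sector $|\mathrm{Arg}(\nu)| \leq \frac{\pi}{2} - \delta$ for the Hankel functions (and $|\mathrm{Arg}(\nu)| \leq \pi - \delta$ for $J_{\nu}$); it says nothing about directions at or beyond the imaginary axis, and the behaviour there is genuinely different: by Proposition \ref{estiHiR}, $|H^{(1)}_{iy}(r)|$ grows only like $e^{\frac{\pi}{2}y}/\sqrt{|y|}$, not like a Gamma factor. Since $M(R)$ is a maximum over the whole circle $|\nu| = R$, the sectorial asymptotics alone do not control it, so the step you yourself single out as the crux is the one left unjustified. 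The repair is available inside the paper: first use the evenness in $\nu_R$ (which you have already proved) to reduce to the closed half-plane $\mathrm{Re}(\nu_R) \geq 0$; there, Remark \ref{rkjostlibre} gives $|F_0^{\pm}(r,\nu)| \leq A e^{B|\nu_R|} |\Gamma(\nu_R+1)|$ for \emph{all} $\mathrm{Re}(\nu_R) \geq 0$, not merely in a sector, and Stirling's formula (\cite{Leb}, Eq. (1.4.24)) then yields $\log M(R) \leq C R \log R$, hence order $1$, with infinite type following from your lower bound. Alternatively, one can combine the sector estimate, the imaginary-axis estimate of Proposition \ref{estiHiR}, and the Phragm\'en--Lindel\"of theorem on each quadrant, exactly as the paper does for the Green kernel in Proposition \ref{estiN1proof}.
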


\begin{remark}\label{rkjostlibre}
 As it has been recalled in \cite{DN5}, for any fixed $r$, there exist positive constants $A$ and $B$ such that for all $\text{Re}(\nu_R) \geq 0$,
 \[ | F_0^{+}(r,\nu) | \leq A e^{B |\nu_R|} |\Gamma(\nu_R + 1)|.\]
 But, since we work for $r_0 \leq r \leq R$ we can actually find uniform positive constants $A$ and $B$ such that for all such $r$ and for all $\text{Re}(\nu_R) \geq 0$
 \[ | F_0^{+}(r,\nu) | \leq A e^{B |\nu_R|} |\Gamma(\nu_R + 1)|.\]
\end{remark}
\noindent
Moreover, we also know that the free regular solution, which is the solution of (\ref{eqstatlibre}) satisfying a Dirichlet condition at $r = r_0$, is given, up to a
multiplicative constant, by
\begin{eqnarray*}
 \Phi_0(r,\nu) &=& i\left( F_0^-(r_0,\nu) F_0^+(r,\nu) - F_0^+(r_0,\nu) F_0^-(r,\nu) \right)\\
 &=& i \frac{\pi \sqrt{r r_0}}{2} \left( H_{\nu_R}^{(2)}(r_0) H_{\nu_R}^{(1)}(r) - H_{\nu_R}^{(1)}(r_0) H_{\nu_R}^{(2)}(r) \right).
\end{eqnarray*}

\begin{prop}\label{estiphi0}
 There exists a positive constant $C$ such that for all $r_0 \leq r \leq R$ and for all $\text{Re}(\nu_R) \geq 0$
 \[ | \Phi_0(r,\nu) | \leq \frac{C}{|\nu_R|+1} \left( \frac{r}{r_0} \right)^{\text{Re}(\nu_R)}.\]
\end{prop}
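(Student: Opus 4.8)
The plan is to start from the explicit representation
\[
\Phi_0(r,\nu) = i \frac{\pi \sqrt{r r_0}}{2}\left( H_{\nu_R}^{(2)}(r_0) H_{\nu_R}^{(1)}(r) - H_{\nu_R}^{(1)}(r_0) H_{\nu_R}^{(2)}(r) \right)
\]
and to first rewrite it in terms of the real Bessel functions $J_{\nu_R}$ and $Y_{\nu_R}$. Using $H_{\nu_R}^{(1)} = J_{\nu_R} + i Y_{\nu_R}$ and $H_{\nu_R}^{(2)} = J_{\nu_R} - i Y_{\nu_R}$, the products combine and the symmetric terms $J_{\nu_R}(r_0)J_{\nu_R}(r)$ and $Y_{\nu_R}(r_0)Y_{\nu_R}(r)$ cancel, leaving
\[
\Phi_0(r,\nu) = -\pi \sqrt{r r_0}\left( J_{\nu_R}(r_0) Y_{\nu_R}(r) - Y_{\nu_R}(r_0) J_{\nu_R}(r) \right).
\]
This reformulation is the crucial first move: for $\text{Re}(\nu_R)$ large each Hankel function is dominated by $\pm i Y_{\nu_R}$, so a crude bound on the product $|H_{\nu_R}^{(2)}(r_0)|\,|H_{\nu_R}^{(1)}(r)|$ would only produce the growing factor $|\Gamma(\nu_R)|^2$; it is precisely the antisymmetric cross term above that exhibits the cancellation responsible for the gain of a factor $|\nu_R|^{-1}$.

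Second, I would insert the large-order asymptotics of the Bessel functions, uniform for $r$ in the compact interval $[r_0,R]$ and for $\nu_R$ in the half-plane $\text{Re}(\nu_R) \geq 0$, as collected in Appendix \ref{appe} (the uniform Debye-type expansions, see also \cite{Leb}). To leading order
\[
J_{\nu_R}(z) \sim \frac{1}{\Gamma(\nu_R + 1)}\left(\frac{z}{2}\right)^{\nu_R}, \qquad Y_{\nu_R}(z) \sim -\frac{\Gamma(\nu_R)}{\pi}\left(\frac{z}{2}\right)^{-\nu_R},
\]
so that, using $\Gamma(\nu_R+1) = \nu_R\,\Gamma(\nu_R)$,
\[
J_{\nu_R}(r_0) Y_{\nu_R}(r) \sim -\frac{1}{\pi \nu_R}\left(\frac{r_0}{r}\right)^{\nu_R}, \qquad Y_{\nu_R}(r_0) J_{\nu_R}(r) \sim -\frac{1}{\pi \nu_R}\left(\frac{r}{r_0}\right)^{\nu_R}.
\]
Since $r \geq r_0$ and $\text{Re}(\nu_R) \geq 0$ we have $|(r/r_0)^{\nu_R}| = (r/r_0)^{\text{Re}(\nu_R)} \geq 1 \geq (r/r_0)^{-\text{Re}(\nu_R)} = |(r_0/r)^{\nu_R}|$, so the second product dominates. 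Consequently the bracket is $O\!\left(|\nu_R|^{-1}(r/r_0)^{\text{Re}(\nu_R)}\right)$, and multiplying by $\pi\sqrt{r r_0} \leq \pi \sqrt{R r_0}$ yields, for $|\nu_R|$ larger than some fixed threshold $M$, the bound $|\Phi_0(r,\nu)| \leq \frac{C}{|\nu_R|}(r/r_0)^{\text{Re}(\nu_R)}$.

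Third, for the remaining bounded range $|\nu_R| \leq M$ with $\text{Re}(\nu_R) \geq 0$, I would argue by continuity: $\Phi_0(r,\nu)$ is holomorphic in $\nu$ and continuous in $(r,\nu)$, hence bounded by a constant $C_M$ on the compact set $[r_0,R] \times \{|\nu_R| \leq M,\ \text{Re}(\nu_R) \geq 0\}$; since there $(r/r_0)^{\text{Re}(\nu_R)} \geq 1$ and $\frac{1}{|\nu_R|+1} \geq \frac{1}{M+1}$, the claimed inequality holds with constant $C_M(M+1)$. Combining the two regimes, and using $\frac{1}{|\nu_R|} \leq \frac{2}{|\nu_R|+1}$ for $|\nu_R| \geq M \geq 1$, gives the stated bound with $\frac{1}{|\nu_R|+1}$ after taking the larger of the two constants. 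I expect the genuine difficulty to lie entirely in the second step: one needs the Bessel asymptotics to hold \emph{uniformly} both in $r \in [r_0,R]$ and as $|\nu_R| \to \infty$ throughout the closed right half-plane (i.e. for $\arg(\nu_R) \in [-\pi/2, \pi/2]$, not merely along the real axis), with explicit control of the remainder so that the subleading terms do not destroy the $|\nu_R|^{-1}$ gain. This uniform control is exactly what the Bessel-function estimates of Appendix \ref{appe} are designed to supply, and the cancellation already isolated in the first step ensures these estimates suffice.
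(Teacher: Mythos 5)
Your first step (reducing to the antisymmetric cross term $J_{\nu_R}(r_0)Y_{\nu_R}(r) - Y_{\nu_R}(r_0)J_{\nu_R}(r)$, which is exactly the Green kernel $N(r,r_0,\nu)$ up to an inessential constant factor) and your compactness argument for $|\nu_R| \leq M$ are both sound, and they match the structure of the paper, whose proof of this Proposition is precisely the identification of $\Phi_0(r,\nu)$ with $N(r,r_0,\nu)$ followed by the kernel estimate of Proposition \ref{estiN1proof}. The gap is in your second step, at exactly the point you flag as the ``genuine difficulty'': the large-order asymptotics $Y_{\nu_R}(z) \sim -\frac{\Gamma(\nu_R)}{\pi}\left(\frac{z}{2}\right)^{-\nu_R}$ (equivalently (\ref{asHnu1})--(\ref{asHnu2}) for the Hankel functions) do \emph{not} hold uniformly on the closed right half-plane; they hold only in sectors $|\mathrm{Arg}(\nu_R)| \leq \frac{\pi}{2}-\delta$, as stated in Proposition \ref{rappelestiB}, and this restriction is not a technicality. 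On the imaginary axis the behavior is genuinely different: by Proposition \ref{estiHiR}, $|H_{iy}^{(1)}(r)| \sim \sqrt{\frac{2}{\pi |y|}}\, e^{\frac{\pi}{2}y}$ and $|H_{iy}^{(2)}(r)| \sim \sqrt{\frac{2}{\pi |y|}}\, e^{-\frac{\pi}{2}y}$, so $|J_{iy}(r)|$ and $|Y_{iy}(r)|$ are each of size $|y|^{-1/2}e^{\frac{\pi}{2}|y|}$, each product $|J_{iy}(r_0)Y_{iy}(r)|$ is of size $|y|^{-1}e^{\pi |y|}$, whereas your claimed asymptotic for $Y_{iy}$ decays like $e^{-\frac{\pi}{2}|y|}$: it is off by a factor $e^{\pi |y|}$. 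Consequently, near the imaginary axis (where the target bound is just $C/(|y|+1)$, since $(r/r_0)^{\mathrm{Re}(\nu_R)} \approx 1$ there) you cannot bound the two products separately: a second cancellation is needed, visible only after regrouping the bracket as $\frac{1}{2i}\bigl( H_{\nu_R}^{(2)}(r_0)H_{\nu_R}^{(1)}(r) - H_{\nu_R}^{(1)}(r_0)H_{\nu_R}^{(2)}(r)\bigr)$, where the factors $e^{\pm \frac{\pi}{2}y}$ cancel \emph{inside} each product (this is Proposition \ref{estiNiR}).

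The missing idea is therefore the mechanism the paper uses to bridge the sector $|\mathrm{Arg}(\nu_R)| \leq \frac{\pi}{2}-\delta$ and the imaginary axis: prove the bound separately on the positive real direction (Proposition \ref{estiNR}, where your $J$--$Y$ asymptotics are legitimate) and on the imaginary axis (Proposition \ref{estiNiR}, via the Hankel-product form), then use that $\nu \mapsto \Phi_0(r,\nu)$ is entire of order $1$ (with infinite type), so that the normalized function $f(\nu) = \left(\frac{r_0}{r}\right)^{\nu_R}(\nu_R+1)\Phi_0(r,\nu)$ is bounded on the two boundary rays of each quadrant of $\{\mathrm{Re}(\nu_R)\geq 0\}$, and conclude by the Phragm\'en--Lindel\"of Theorem (order $1 < 2$, quadrants of opening $\frac{\pi}{2}$) that it is bounded on the whole half-plane. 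Without this step, or some substitute covering the region $\frac{\pi}{2}-\delta < |\mathrm{Arg}(\nu_R)| \leq \frac{\pi}{2}$, your argument does not close: the threshold $M$ and the constant in your asymptotic regime necessarily blow up as $\mathrm{Arg}(\nu_R) \to \pm\frac{\pi}{2}$, so taking ``the larger of the two constants'' is not possible.
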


\begin{proof}
 We note that for all $r \geq r_0$
 \[ \Phi_0(r,\nu) = \frac{1}{2} N(r,r_0,\nu),\]
 and we then conclude thanks to Proposition \ref{estiN1proof}.
\end{proof}
\noindent
Finally, we obtain that the free Jost functions, which are defined by
\[  \Phi_0(r,\nu) = \alpha_0(\nu) F_0^+(r,\nu) + \beta_0(\nu) F_0^-(r,\nu),\]
since the Jost solutions are a Fundamental System of Solutions of (\ref{eqstatlibre}), are given by
\[ \alpha_0(\nu) = i F_0^-(r_0,\nu) = ie^{-i\left( \nu_R + \frac{1}{2} \right) \frac{\pi}{2}} \sqrt{\frac{\pi r}{2}} H_{\nu_R}^{(2)}(r_0)\]
and
\begin{equation}\label{defbeta0}
 \beta_0(\nu) = -i F_0^+(r_0,\nu) = -i e^{i\left( \nu_R + \frac{1}{2} \right) \frac{\pi}{2}} \sqrt{\frac{\pi r}{2}} H_{\nu_R}^{(1)}(r_0) .
\end{equation}
Hence, the free Regge interpolation function is
\[ \sigma_0(\nu) = e^{i \pi \left(\nu + \frac{1}{2} \right) } \frac{\alpha_0(\nu)}{\beta_0(\nu)} = - e^{i \pi(\nu - \nu_R)} \frac{H_{\nu_R}^{(2)}(r_0)}{H_{\nu_R}^{(1)}(r_0)}.\]
In particular, for $l \in \Z$ we obtain the result given in \cite{Rui}, Eq. $(4.22)$, 
\[ \sigma_0(l) = - e^{i \pi \gamma(R)} \frac{H_{l-\gamma(R)}^{(2)}(r_0)}{H_{l-\gamma(R)}^{(1)}(r_0)}.\]
Therefore, using the asymptotics of the Bessel functions (\ref{asHnu1}) and (\ref{asHnu2}) and the symmetry identities (\ref{lienBessel1}) and (\ref{lienBessel2}), we obtain that
\[ \sigma_0(l) \to e^{i \pi \gamma(R)} \quad \text{as} \quad l \to + \infty\]
and
\[ \sigma_0(l) \to e^{-i \pi \gamma(R)} \quad \text{as} \quad l \to - \infty.\]

\subsection{Definition and properties of the Jost solutions}\label{secjostsol}
\noindent
In this Section we study the Jost solutions. First, let us recall the definition of these functions.

\begin{definition}[Jost solutions]
 The Jost solutions are the solutions of the stationary equation (\ref{eqstat}) satisfying the asymptotics
 \[ F^{\pm}(r,\nu) \sim e^{\pm ir}, \quad r \to + \infty.\]
\end{definition}

\begin{remark}
 These Jost solutions are in fact closely related to the ones introduced in \cite{DN5}. Precisely,
 \[ F^{\pm}(r,\nu) = f^{\pm}(r,\nu_R), \quad \forall r \geq r_0,\]
 where $f^{\pm}(r,\nu_R)$ are the Jost functions of \cite{DN5}.
\end{remark}
\noindent
We easily obtain the following Lemma which gives us useful properties on the Jost solutions.

\begin{lemma}\label{lemJost}
  For $r \geq r_0$ fixed, the Jost solutions are holomorphic functions with respect to $\nu$ satisfying the following identity:
  \[ \overline{F^{\pm}(r,\nu)} = F^{\mp}(r,\overline{\nu}).\]
  Moreover,
  \begin{equation}\label{lemmenonvanish}
 F^{\pm}(r,\nu) \neq 0, \quad \forall \nu \in \R, \quad \forall r \geq r_0.   
  \end{equation}
\end{lemma}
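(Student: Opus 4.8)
The plan is to prove the three assertions in order, all resting on the Volterra representation of the Jost solutions. Holomorphy in $\nu$ will come from solving an integral equation by a Neumann series whose terms are manifestly holomorphic; the conjugation symmetry $\overline{F^{\pm}(r,\nu)} = F^{\mp}(r,\overline{\nu})$ will follow from the reality of the coefficients of the effective potential appearing in (\ref{eqstat}); and the non-vanishing of $F^{\pm}$ on the real axis will be deduced from that symmetry together with the constancy of the Wronskian $W(F^+,F^-) = -2i$.

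First, I would set
\[ W(r,\nu) = \frac{(\nu_R)^2 - \frac{1}{4}}{r^2} + q_{\nu}(r), \]
and recast (\ref{eqstat}) together with the asymptotics (\ref{asJost}) as the Volterra integral equation
\[ F^{\pm}(r,\nu) = e^{\pm i r} + \int_r^{+\infty} \sin(s-r)\, W(s,\nu)\, F^{\pm}(s,\nu)\, ds. \]
Since $q_{\nu}$ is compactly supported and $W(s,\nu) = O(s^{-2})$ as $s \to +\infty$, the kernel is integrable in $s$, and Picard iteration yields $F^{\pm} = \sum_{n \geq 0} F_n^{\pm}$ with $F_0^{\pm} = e^{\pm i r}$. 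Each iterate $F_n^{\pm}(r,\nu)$ is holomorphic in $\nu$ because $W(s,\nu)$ is a polynomial in $\nu$; the Green-kernel estimates of Appendix \ref{appe} then bound the iterates locally uniformly in $\nu$ on $\C$, so the series converges locally uniformly and its sum is holomorphic in $\nu$ for each fixed $r \geq r_0$ by Weierstrass' theorem. This holomorphy step is where I expect the real difficulty to lie, precisely because the potential $q_{\nu}$ itself carries the parameter $\nu$ (the novelty stressed in the Remark following (\ref{hampart})), so the iterates must be dominated uniformly despite a potential that grows in $\nu$.

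For the symmetry identity, the key observation is that, for each fixed $r$, the map $\nu \mapsto W(r,\nu)$ is a polynomial of degree two in $\nu$ with \emph{real} coefficients: both the centrifugal term $((\nu-\gamma(R))^2 - \frac{1}{4})/r^2$ and $q_{\nu}(r)$ are built from the real quantities $\gamma(r)$, $\gamma(R)$ and $V(r)$. Hence $\overline{W(r,\nu)} = W(r,\overline{\nu})$ for all $\nu$. Conjugating the equation and the asymptotic condition (\ref{asJost}) satisfied by $F^+(r,\nu)$, and using that $r$ is real and the eigenvalue is $1$, one finds that $\overline{F^+(r,\nu)}$ solves $-u'' + W(r,\overline{\nu})\,u = u$ with $u(r) \sim \overline{e^{ir}} = e^{-ir}$ as $r \to +\infty$. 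By uniqueness of the Jost solution with parameter $\overline{\nu}$ and asymptotics $e^{-ir}$, this forces $\overline{F^+(r,\nu)} = F^-(r,\overline{\nu})$, and the same argument with the roles of $\pm$ exchanged gives the full identity.

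Finally, to show $F^{\pm}(r,\nu) \neq 0$ on the real axis, I fix $\nu \in \R$, so that $\overline{\nu} = \nu$ and the identity just proved reads $F^-(r,\nu) = \overline{F^+(r,\nu)}$. Suppose $F^+(r_1,\nu) = 0$ for some $r_1 \geq r_0$; then $F^-(r_1,\nu) = \overline{F^+(r_1,\nu)} = 0$ as well, so
\[ W(F^+,F^-)(r_1) = F^+(r_1,\nu)\,(F^-)'(r_1,\nu) - (F^+)'(r_1,\nu)\,F^-(r_1,\nu) = 0. \]
But since the radial operator has no first-order term this Wronskian is independent of $r$ and equals $-2i$, a contradiction. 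Hence $F^+$, and with it $F^- = \overline{F^+}$, never vanishes for $r \geq r_0$ when $\nu$ is real, which completes the proof.
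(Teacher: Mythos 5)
Your proof is correct, and for the only assertion the paper actually proves --- the non-vanishing (\ref{lemmenonvanish}) --- your argument is identical to the paper's: for $\nu \in \R$ the symmetry gives $F^{-}(r,\nu) = \overline{F^{+}(r,\nu)}$, so a zero of one is a common zero of both, contradicting $W(F^{+},F^{-}) = -2i$. The holomorphy and the conjugation identity are stated in the paper without proof (it says they are ``easily obtained'', deferring to the references), so your filling in these details is a genuine supplement rather than a divergence. Your route for holomorphy --- a Volterra equation perturbing around $e^{\pm ir}$ with kernel $\sin(s-r)$, Neumann series, and Weierstrass' theorem --- is the standard one and is simpler than the machinery the paper builds later for Proposition \ref{propjostsol}, namely the integral equation (\ref{eqintJost}) around $F_0^{\pm}$ with the Bessel kernel $N(r,s,\nu)$; that heavier setup is needed there to get growth estimates (order $1$, infinite type, asymptotics as $\nu \to +\infty$), but is overkill for mere holomorphy. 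One small inaccuracy in your write-up: you invoke ``the Green-kernel estimates of Appendix \ref{appe}'' to dominate your iterates, but those estimates concern the kernel $N(r,s,\nu)$ and are only established on the half-plane $\{\mathrm{Re}(\nu_R) \geq 0\}$ and on the line $i\R + \gamma(R)$, so they neither apply to your kernel nor give bounds on all of $\C$. Fortunately you do not need them: for real $r_0 \leq r \leq s$ one has $|\sin(s-r)| \leq 1$, and the full effective potential satisfies
\[ \int_r^{+\infty} \left| \frac{(\nu_R)^2 - \frac{1}{4}}{s^2} + q_{\nu}(s) \right| ds \leq \frac{|\nu_R|^2 + \frac{1}{4}}{r_0} + \int_{r_0}^{R} |q_{\nu}(s)| \, ds, \]
which is locally bounded in $\nu$ since $q_{\nu}$ is compactly supported and polynomial in $\nu$; the usual Volterra factorial bound then gives locally uniform convergence of the series on all of $\C$, which is exactly what Weierstrass' theorem requires.
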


\begin{proof}
 Let us give the proof of the last point.
 Assume for instance that $F^+(r,\nu) = 0$ for some $r \geq r_0$. Since, $\overline{F^+(r,\nu)} = F^-(r,\nu)$ we also have $F^-(r,\nu) = 0$ which contradicts
\[ W(F^+(r,\nu),F^-(r,\nu)) = -2i.\]
\end{proof}
\noindent
Because of the presence of $\nu$ in the potential $q_{\nu}$ the Jost solutions are not even functions with respect to $\nu$ contrary to the ones introduced in the
non-magnetic case studied in \cite{DN5}. However, there is an important symmetry identity that will be usefull in the following.

\begin{lemma}\label{symjostsol}
 Let $F_{\gamma}^{\pm}(r,\nu)$ be the Jost solutions associated with the couple $(V,B) \in \mathcal{C}$. $F_{-\gamma}^{\pm}(r,\nu)$ are then the Jost solutions associated
 with the couple $(V,-B) \in \mathcal{C}$ and
 \begin{equation}\label{symetrieJost}
  F_{\gamma}^{\pm}(r,\nu) = F_{-\gamma}^{\pm}(r,-\nu), \quad \forall r \geq r_0, \quad \forall \nu \in \C.
  \end{equation}
\end{lemma}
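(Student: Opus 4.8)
The plan is to exploit a hidden symmetry of the stationary equation (\ref{eqstat}): once the centrifugal term and the potential $q_\nu$ are combined, the resulting Schr\"odinger operator turns out to depend on the pair $(\gamma,\nu)$ only through the square $(\nu-\gamma(r))^2$, hence is invariant under the joint sign change $(\gamma,\nu)\mapsto(-\gamma,-\nu)$. The identity (\ref{symetrieJost}) will then follow at once from the uniqueness of the Jost solutions.

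First I would rewrite the potential in (\ref{eqstat}). Using $\nu_R=\nu-\gamma(R)$ together with the explicit expression of $q_\nu$, a direct computation shows that all the cross terms carrying $\gamma(R)$ cancel, so that
\[ \frac{(\nu_R)^2-\frac{1}{4}}{r^2}+q_\nu(r)=\frac{(\nu-\gamma(r))^2-\frac{1}{4}}{r^2}+V(r). \]
This is the key algebraic step, and the only point requiring a little care is verifying that the terms $\mp 2\nu\gamma(R)$ and $\pm\gamma(R)^2$ genuinely disappear, leaving an operator that involves $\gamma$ and $\nu$ exclusively through $(\nu-\gamma(r))^2$ (the remaining terms $-\frac14$ and $V$ being independent of both).

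Next I would check the statement about the class. Replacing $B$ by $-B$ preserves radiality, compact support and smoothness, so $(V,-B)$ again belongs to $\mathcal{C}$; moreover both the gauge map of Definition \ref{defclassA} and the formula (\ref{defgamma1}) for $\gamma$ are linear in $B$, so the $\gamma$-function attached to $(V,-B)$ is exactly $-\gamma$. Hence the solutions written $F_{-\gamma}^{\pm}(r,\nu)$ are indeed the Jost solutions associated with $(V,-B)$, which is the first assertion of the Lemma.

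Finally, for the symmetry identity I would observe that
\[ \bigl(-\nu-(-\gamma(r))\bigr)^2=(\gamma(r)-\nu)^2=(\nu-\gamma(r))^2, \]
so that the operator attached to the data $(-\gamma,-\nu)$ coincides with the one attached to $(\gamma,\nu)$. Since the normalization $e^{\pm ir}$ at $r\to+\infty$ is untouched by the sign changes, the function $F_{-\gamma}^{\pm}(r,-\nu)$ solves exactly the same equation as $F_{\gamma}^{\pm}(r,\nu)$ with exactly the same boundary condition. By the uniqueness of the Jost solution (guaranteed by the compact support of $q_\nu$, which makes (\ref{eqstat}) coincide with the free equation for $r\ge R$), the two must be equal, giving (\ref{symetrieJost}). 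I do not expect any real obstacle: the whole argument rests on the cancellation in the first display, after which the conclusion is an immediate uniqueness statement.
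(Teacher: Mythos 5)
Your proof is correct and follows essentially the same route as the paper: the key fact is the invariance $H_{\nu,\gamma}=H_{-\nu,-\gamma}$ (which your completed-square identity $\frac{(\nu_R)^2-\frac14}{r^2}+q_\nu(r)=\frac{(\nu-\gamma(r))^2-\frac14}{r^2}+V(r)$ makes explicit), combined with uniqueness of the Jost solutions given their asymptotics $e^{\pm ir}$. The only difference is cosmetic: you verify the cancellation from the $(\nu_R,q_\nu)$ form of (\ref{eqstat}) and check the linearity $B\mapsto -B \Rightarrow \gamma\mapsto-\gamma$ explicitly, whereas the paper reads the invariance directly off (\ref{hampart}) and takes the class statement for granted.
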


\begin{proof}
 We first note that the Hamiltonian defined in (\ref{hampart}) satisfies
 \begin{equation}\label{egham}
 H_{\nu,\gamma} = H_{-\nu,-\gamma}.  
 \end{equation}
 We recall that the Jost solutions $F_{\gamma}^{\pm}(r,\nu)$ are the unique solutions of
\[H_{\nu,\gamma}F_{\gamma}^{\pm}(r,\nu) = F_{\gamma}^{\pm}(r,\nu)\]
satisying suitable asymptotics at infinity. From (\ref{egham}), we then deduce that
\[ H_{-\nu,-\gamma}F_{\gamma}^{\pm}(r,\nu) = F_{\gamma}^{\pm}(r,\nu)\]
and by uniqueness we can then conclude that
\[ F_{\gamma}^{\pm}(r,\nu) = F_{-\gamma}^{\pm}(r,-\nu).\]
\end{proof}

\begin{remark}
 We will use this equality to extend the results we can obtain on the right half-plane to the left half-plane. Indeed, if $(V,B) \in \mathcal{C}$ then $(V,-B) \in \mathcal{C}$.
 So, if we can prove a property for the Jost solutions $F_{\gamma}^{\pm}(r,\nu)$ on the right half-plane, this property is also true for $F_{-\gamma}^{\pm}(r,\nu)$ and thanks to previous Lemma we then
 obtain that this property is also true for $F_{\gamma}^{\pm}(r,-\nu)$, i.e. on the left half-plane.
\end{remark}
\noindent
The aim of this Section is to prove the following Proposition on the Jost solutions.

\begin{prop}\label{propjostsol}
 For $r \geq r_0$ fixed, the Jost solutions satisfy the following properties.
 \begin{enumerate}
  \item They are of order $1$ with infinite type on the whole complex plane $\C$.
    \item There exists a positive constant $C$ such that
 \[ F^{\pm}(r,\nu) \sim C F_0^{\pm}(r,\nu), \quad \text{as} \quad \nu \to + \infty.\]
  \item They are bounded on $i\R + \gamma(R)$.
 \end{enumerate}
\end{prop}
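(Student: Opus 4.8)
The plan is to realize each Jost solution as the solution of a Volterra integral equation against the corresponding free Jost solution, and to solve it by successive approximations while controlling every term with the sharp Green kernel estimates of Appendix \ref{appe}. Since $F^{+}$ solves $(H_\nu^0-1)F^{+} = -q_\nu F^{+}$ with the asymptotics (\ref{asJost}) at $+\infty$, and since $q_\nu$ vanishes for $r\geq R$ (so that $F^{+}=F_0^{+}$ identically there), variation of constants against the free fundamental system $(F_0^+,F_0^-)$, whose Wronskian equals $-2i$, gives
\[ F^{+}(r,\nu) = F_0^{+}(r,\nu) + \int_r^{+\infty} \mathcal{G}(r,s,\nu)\, q_\nu(s)\, F^{+}(s,\nu)\, ds, \qquad \mathcal{G}(r,s,\nu) = \frac{F_0^+(r,\nu) F_0^-(s,\nu) - F_0^-(r,\nu) F_0^+(s,\nu)}{-2i}, \]
the integral in fact running over the compact set $[r,R]$. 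I would then set $F^{+}=\sum_{n\geq 0}F_n^{+}$, with $F_0^{+}$ the free solution and $F_{n+1}^{+}$ obtained by inserting $F_n^{+}$ into the integral operator above; the solution $F^{-}$ is constructed identically, and Lemma \ref{lemJost} links the two by conjugation.

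Holomorphy in $\nu$ on all of $\C$ is then immediate: the free solutions and the kernel $\mathcal{G}$ are entire in $\nu$, so each iterate $F_n^{+}$ is entire, and once the series is shown to converge uniformly on compact sets of $(r,\nu)$ the sum $F^{+}$ is entire for each fixed $r$. The substantive content, and the main obstacle, is the control of the iterates in terms of $\nu$. As flagged in the Remark following (\ref{hampart}), the new difficulty relative to \cite{DN5} is that $q_\nu$ grows \emph{linearly} in $\nu$ through the term $-2\nu(\gamma(r)-\gamma(R))/r^2$, so each application of the integral operator produces a factor of order $|\nu_R|$ from $q_\nu$. Convergence of the Neumann series therefore hinges on a compensating decay of the kernel: the estimates of Appendix \ref{appe} supply, for $r\leq s$ in the support of $q_\nu$, a bound of the form $|\mathcal{G}(r,s,\nu)| \leq \frac{C}{|\nu_R|+1}(s/r)^{\text{Re}(\nu_R)}$ (the same mechanism already visible in Proposition \ref{estiphi0}), and the factor $\frac{1}{|\nu_R|+1}$ absorbs exactly the linear growth of $q_\nu$.

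Iterating this kernel bound against the free estimate $|F_0^+(r,\nu)| \leq A e^{B|\nu_R|}|\Gamma(\nu_R+1)|$ of Remark \ref{rkjostlibre}, the nested integrals over the compact set $[r,R]$ produce a factor $\frac{1}{n!}$, so the $n$-th term is bounded by $\frac{C^n}{n!}$ times the free majorant; hence the series converges and $F^{+}$ inherits a bound of the form $|F^{+}(r,\nu)| \leq A' e^{B'|\nu_R|}|\Gamma(\nu_R+1)|$ for $\text{Re}(\nu_R)\geq 0$. Since this majorant is of order $1$ and infinite type (the growth of $|\Gamma(\nu_R+1)|$ being $e^{|\nu_R|\log|\nu_R|}$), and since the left half-plane $\text{Re}(\nu_R)<0$ is reached from the right one through the symmetry $F_\gamma^{\pm}(r,\nu)=F_{-\gamma}^{\pm}(r,-\nu)$ of Lemma \ref{symjostsol} (valid because $(V,B)\in\mathcal{C}$ implies $(V,-B)\in\mathcal{C}$), the entire function $F^{+}$ is of order $1$ with infinite type on all of $\C$, which is point $(1)$.

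Points $(2)$ and $(3)$ follow by reading the same series more finely. For the asymptotics, I would show that every correction $F_n^{+}$ with $n\geq 1$ is negligible compared to $F_0^{+}$ as $\nu_R\to+\infty$: the $\frac{1}{|\nu_R|+1}$ decay of $\mathcal{G}$, together with the fact that for $r\leq s$ the factor $(s/r)^{\text{Re}(\nu_R)}$ is outweighed by the growth of $F_0^{+}(r,\nu)$ in the denominator at large order, forces $\sum_{n\geq 1}F_n^{+}=o(F_0^{+})$, so $F^{+}/F_0^{+}$ tends to a nonzero constant $C$. For boundedness on $i\R+\gamma(R)$, note that there $\nu_R=it$ is purely imaginary, so $\text{Re}(\nu_R)=0$; on this line the free solutions and $\mathcal{G}$ are uniformly bounded in $t$, and the convergent Neumann bound transfers this uniform boundedness to $F^{+}$. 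Finally, the behaviour of $F^{-}$ and the regime $\nu\to-\infty$ are deduced from the right half-plane results via the conjugation identity of Lemma \ref{lemJost} and the symmetry of Lemma \ref{symjostsol}, completing all three points.
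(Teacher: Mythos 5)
Your treatment of points (1) and (3) is essentially the paper's own proof: your kernel $\mathcal{G}$ coincides with the paper's $N(r,s,\nu)$, your telescoping of the factors $(s/r)^{\mathrm{Re}(\nu_R)}$ over the nested simplex plays the role of the paper's rescaling $g(r,\nu)=(r/R)^{\nu_R}F^+(r,\nu)$ and kernel $M$, and the passage to the left half-plane via Lemma \ref{symjostsol} is exactly the paper's Corollary. One caveat on point (3): saying that ``the free solutions and $\mathcal{G}$ are uniformly bounded'' on $i\R+\gamma(R)$ is not sufficient, because the exponent produced by the Neumann series is $\int|q_\nu|$, which grows linearly in $|\nu|$; you must invoke the decay $|N(r,s,iy+\gamma(R))|\leq C/(|y|+1)$ of Proposition \ref{estiNiR}, so that the exponent becomes $\frac{C}{|y|+1}\int_r^{+\infty}|q_\nu(s)|\,ds$ and stays bounded, as in Proposition \ref{estiFiR}. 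This is repairable with the estimates you already cite.

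Point (2), however, contains a genuine error. You claim that $\sum_{n\geq1}F_n^+=o(F_0^+)$ as $\nu\to+\infty$, which would force $F^+/F_0^+\to1$, i.e.\ $C=1$. This is false in general, and it contradicts your own (correct) earlier observation that the kernel decay $\frac{1}{|\nu_R|+1}$ absorbs \emph{exactly} --- not strictly more than --- the linear growth of $q_\nu$: the absorbed product is $O(1)$, not $o(1)$. Concretely, by Proposition \ref{estiNR} and (\ref{asHnu1}) one has, for $r\leq s\leq R$ and $\nu\to+\infty$,
\[ N(r,s,\nu)\,F_0^+(s,\nu)\approx\frac{s}{2\nu_R}\,F_0^+(r,\nu), \qquad q_\nu(s)=-\frac{2\nu(\gamma(s)-\gamma(R))}{s^2}+O(1), \]
so the first iterate satisfies $F_1^+(r,\nu)\to c_1\,F_0^+(r,\nu)$ with $c_1$ equal, up to sign, to $\int_r^R\frac{\gamma(s)-\gamma(R)}{s}\,ds$, which is nonzero whenever the magnetic field is nontrivial on $(r,R)$. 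The corrections are thus \emph{comparable} to $F_0^+$ at every order, not negligible, and the whole series must be resummed. This is precisely what the paper does in Proposition \ref{lienFF_0}: it runs the iteration on the ratio $h=F^+/F_0^+$ with the kernel $K(r,s,\nu)=\frac{F_0^+(s,\nu)}{F_0^+(r,\nu)}N(r,s,\nu)\approx\frac{s}{2\nu_R+1}$ (Lemma \ref{estiK1proof}), and the limit of $h$ is the exponential
\[ C_r=\exp\left(\int_r^{+\infty}\frac{\gamma(s)-\gamma(R)}{s}\,ds\right), \]
which differs from $1$ in general. So the asymptotic relation you are asked to prove is true, but your justification establishes it through an estimate that is false; fixing it requires tracking the $O(1)$ contribution of each term of the series (equivalently, iterating on $h$ rather than on $F^+-F_0^+$), which is exactly the new difficulty of the magnetic case flagged in the Remark following (\ref{hampart}).
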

\noindent
We use the method of variation of constants which tells us
that for every $r \geq r_0$ the Jost solutions satisfy
\begin{equation}\label{eqintJost}
 F^{\pm}(r,\nu) = F_0^{\pm}(r,\nu) + \int_r^{+\infty} N(r,s,\nu) q_{\nu}(s)F^{\pm}(s,\nu) ds,
\end{equation}
where the Green kernel $N(r,s,\nu)$ is defined by
\[ N(r,s,\nu) = u(r) v(s) - u(s)v(r),\]
where $(u,v)$ is a Fundamental System of Solutions of (\ref{eqstat}) when $q_{\nu} = 0$ defined by
\[ u(r) = \sqrt{\frac{\pi r}{2}} J_{\nu_R}(r) \quad \text{and} \quad v(r) = -i \sqrt{\frac{\pi r}{2}} H_{\nu_R}^{(1)}(r).\]
First, we give an estimate of the kernel $N(r,s,\nu)$ which will be useful in the following.

\begin{lemma}\label{estiN1}
 There exists a constant $C > 0$ such that for all $\nu \in \{ \text{Re}(\nu_R) \geq 0 \} = \{ \text{Re}(\nu) \geq \gamma(R) \} $ and for all $r_0 \leq r \leq s \leq R$
 \[ |N(r,s,\nu)| \leq \frac{C}{|\nu_R|+1} \left( \frac{s}{r} \right)^{\text{Re}(\nu_R)}.\]
\end{lemma}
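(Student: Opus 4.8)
The plan is to reduce the estimate to the uniform large-order behaviour of Bessel functions collected in Appendix~\ref{appe}. First I would make the combination explicit: substituting the definitions of $u$ and $v$ and using $-i\cdot i = 1$ gives
\[ N(r,s,\nu) = -i\,\frac{\pi\sqrt{rs}}{2}\left( J_{\nu_R}(r)H_{\nu_R}^{(1)}(s) - J_{\nu_R}(s)H_{\nu_R}^{(1)}(r) \right). \]
Since $r$ and $s$ range over the compact interval $[r_0,R]$, the prefactor $\sqrt{rs}$ is bounded above by a positive constant, so the whole problem reduces to controlling the bracketed cross-combination of Bessel functions uniformly for $r_0 \le r \le s \le R$ and for $\nu$ with $\text{Re}(\nu_R)\ge 0$. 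The key feature of our regime is that the argument stays bounded while the order $\nu_R$ may be large, so the relevant asymptotics are those for large order and bounded argument.

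Next I would insert the leading asymptotics valid in that regime, namely
\[ J_{\nu_R}(r) \sim \frac{1}{\Gamma(\nu_R+1)}\left(\frac{r}{2}\right)^{\nu_R}, \qquad H_{\nu_R}^{(1)}(s) \sim -\frac{i}{\pi}\,\Gamma(\nu_R)\left(\frac{2}{s}\right)^{\nu_R}, \]
together with their controlled remainders. Using the cancellation $\Gamma(\nu_R)/\Gamma(\nu_R+1)=1/\nu_R$, the two products combine to
\[ N(r,s,\nu) \sim \frac{\sqrt{rs}}{2\,\nu_R}\left[\left(\frac{s}{r}\right)^{\nu_R} - \left(\frac{r}{s}\right)^{\nu_R}\right]. \]
Because $r\le s$ and $\text{Re}(\nu_R)\ge 0$, each of the two powers has modulus at most $(s/r)^{\text{Re}(\nu_R)}$, and the factor $1/\nu_R$ is $O(1/(|\nu_R|+1))$ for $|\nu_R|$ large; this yields precisely the claimed bound in the large-$|\nu_R|$ regime, with the $\sqrt{rs}$ factor absorbed into the constant.

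To handle bounded $\nu_R$, and in particular the neighbourhood of $\nu_R=0$ where $H_{\nu_R}^{(1)}$ carries a logarithmic singularity and $1/\nu_R$ blows up, I would argue by compactness. On the set $\{\text{Re}(\nu_R)\ge 0,\ |\nu_R|\le M\}\times\{r_0\le r\le s\le R\}$ the function $N$ is continuous and hence bounded, the power $(s/r)^{\text{Re}(\nu_R)}$ stays between $1$ and $(R/r_0)^{M}$, and $|\nu_R|+1\ge 1$; so a crude continuity bound already gives the estimate there with a suitable constant. Gluing the two regimes produces a single uniform constant $C$. I note that the apparent blow-up at $\nu_R=0$ is spurious: in the difference the $\Gamma$-factors cancel and $N$ extends continuously, consistently with the explicit formula for $\Phi_0$ and with Proposition~\ref{estiphi0}.

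The main obstacle is the uniformity of the Bessel asymptotics: I must control $J_{\nu_R}$ and $H_{\nu_R}^{(1)}$, with explicit remainder bounds, simultaneously for all arguments in the fixed compact interval $[r_0,R]$ and for all orders in the closed right half-plane $\text{Re}(\nu_R)\ge 0$, including along the imaginary axis $\gamma(R)+i\R$. This is exactly the purpose of the uniform estimates of Appendix~\ref{appe} underlying Proposition~\ref{estiN1proof}; once they are in hand, the cancellation of the Gamma factors and the monotonicity $r\le s$ make the final bound routine. A secondary subtlety is to verify that the remainder terms, after multiplication by $\sqrt{rs}$ and the $\Gamma$-ratios, remain dominated by the same envelope $(s/r)^{\text{Re}(\nu_R)}/(|\nu_R|+1)$; this again follows from the same uniform asymptotics, since the error factors are lower order in $1/|\nu_R|$ and do not affect the exponential factor $(s/r)^{\nu_R}$.
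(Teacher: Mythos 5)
There is a genuine gap. Your argument rests on inserting the large-order asymptotics of Proposition \ref{rappelestiB} \emph{uniformly on the whole closed half-plane} $\{\mathrm{Re}(\nu_R)\geq 0\}$, but the Hankel asymptotic (\ref{asHnu1}) is only valid in the sector $|\mathrm{Arg}(\nu_R)|\leq \frac{\pi}{2}-\delta$, and this restriction is not a technicality: on the imaginary axis $\nu_R=iy$ with $y>0$, the formula $H^{(1)}_{\nu_R}(r)\approx -\frac{i}{\pi}\Gamma(\nu_R)\left(\frac{r}{2}\right)^{-\nu_R}$ would give a modulus of size $|\Gamma(iy)|\sim\sqrt{2\pi/|y|}\,e^{-\pi|y|/2}$, whereas the true behaviour (\ref{estiH1iR}) is $\sqrt{2/(\pi|y|)}\,e^{+\pi y/2}$ --- off by a factor $e^{\pi y}$. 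So ``the cancellation of the Gamma factors'' is simply not available near the imaginary axis. Your compactness patch only covers bounded $|\nu_R|$; the region left uncovered is exactly large $|\nu_R|$ with small $\mathrm{Re}(\nu_R)$, where neither the sector asymptotics nor continuity on a compact set applies. Your closing appeal to ``the uniform estimates of Appendix \ref{appe} underlying Proposition \ref{estiN1proof}'' is circular: no such half-plane-uniform Gamma-type asymptotics exist in the Appendix (nor can they, by the discrepancy above).

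What the paper actually does, and what your proposal is missing, is an interpolation argument. The kernel $N(r,s,\nu)$ is entire in $\nu$ of order $1$, so it suffices to estimate it on the two boundary rays of the half-plane: on $[\gamma(R),+\infty)$ one uses the Gamma-type asymptotics exactly as you do (Proposition \ref{estiNR}), and on $i\R+\gamma(R)$ one rewrites $N$ in terms of $H^{(1)}_{\nu_R}$ and $H^{(2)}_{\nu_R}$ so that the exponential factors $e^{\pm\pi y/2}$ of Proposition \ref{estiHiR} cancel in each product, giving $|N|\leq C/(|y|+1)$ (Proposition \ref{estiNiR}). One then applies the Phragm\'en--Lindel\"of theorem to the auxiliary function $f(\nu)=\left(\frac{r}{s}\right)^{\nu_R}(\nu_R+1)N(r,s,\nu)$, which is of order $1$ and bounded on both rays, to conclude that it is bounded on the whole half-plane $\{\mathrm{Re}(\nu_R)\geq 0\}$; this is precisely the claimed estimate. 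Without this step (or some substitute handling large $|\nu_R|$ near the imaginary axis), your proof does not go through.
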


\begin{proof}
 See Proposition \ref{estiN1proof}.
\end{proof}

\begin{remark}
 In \cite{DN5} the authors introduced the Green kernel
 \[ M(r,s,\nu) = \frac{F_0^+(s,\nu)}{F_0^+(r,\nu)} N(r,s,\nu),\]
 but it forces to work only on particular regions of the complex plane because of the zeros of the Jost solution $F_0^+(r,\nu)$. Since in our case we work only for
 $r_0 \leq r \leq s \leq R$ we can introduce a new kernel which is slightly different and permits us to avoid this problem.
\end{remark}
\noindent
We introduce a new Green kernel
\[ M(r,s,\nu) = \left( \frac{r}{s} \right)^{\nu_R} N(r,s,\nu).\]
Thanks to Lemma \ref{estiN1} we immediately obtain the following Lemma.

\begin{lemma}\label{estiMJsol}
 There exists a constant  $C > 0$ such that for all $\nu \in \{ \text{Re}(\nu_R) \geq 0 \} = \{ \text{Re}(\nu) \geq \gamma(R) \}$ and for all $r_0 \leq r \leq s \leq R$:
 \[ |M(r,s,\nu)| \leq \frac{C}{|\nu_R|+1}.\]
\end{lemma}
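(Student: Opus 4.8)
The plan is to obtain the bound directly from Lemma \ref{estiN1}; the only extra ingredient needed is an exact evaluation of the modulus of the prefactor $(r/s)^{\nu_R}$ that enters the definition of the modified kernel $M(r,s,\nu)$.

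First I would record the key elementary fact about that prefactor. Since $r_0 \leq r \leq s$, the ratio $r/s$ lies in $(0,1]$ and, crucially, is a \emph{strictly positive real} number. Hence the complex power is unambiguously defined by $(r/s)^{\nu_R} = \exp\bigl(\nu_R \log(r/s)\bigr)$ with the ordinary real logarithm, and its modulus is
\[ \left| \left( \frac{r}{s} \right)^{\nu_R} \right| = \exp\bigl( \text{Re}(\nu_R) \log(r/s) \bigr) = \left( \frac{r}{s} \right)^{\text{Re}(\nu_R)}. \]
This is the whole reason for introducing $M$ in place of the kernel of \cite{DN5}: taking the modulus affects the exponent only through its real part.

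Next I would simply multiply this identity by the estimate of Lemma \ref{estiN1}, which is valid on exactly the same set $\{ \text{Re}(\nu_R) \geq 0 \}$ and the same range $r_0 \leq r \leq s \leq R$. This gives
\[ |M(r,s,\nu)| = \left( \frac{r}{s} \right)^{\text{Re}(\nu_R)} |N(r,s,\nu)| \leq \left( \frac{r}{s} \right)^{\text{Re}(\nu_R)} \frac{C}{|\nu_R|+1} \left( \frac{s}{r} \right)^{\text{Re}(\nu_R)} = \frac{C}{|\nu_R|+1}, \]
with the same constant $C$, since the two powers of $r/s$ and $s/r$ are exact reciprocals and cancel.

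There is essentially no obstacle here, which is why the text calls the conclusion immediate. The one point deserving attention is that the cancellation must be \emph{exact} rather than merely up to a multiplicative constant; this is guaranteed precisely because the base $r/s$ is a positive real, so no branch issues arise and no splitting of the complex plane or avoidance of the zeros of the free Jost solutions is required, in contrast with the kernel $M(r,s,\nu)=\frac{F_0^+(s,\nu)}{F_0^+(r,\nu)}N(r,s,\nu)$ used in \cite{DN5}.
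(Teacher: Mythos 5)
Your proof is correct and is exactly the argument the paper intends: the paper derives this lemma ``immediately'' from Lemma \ref{estiN1}, and your computation $\left|\left(\frac{r}{s}\right)^{\nu_R}\right| = \left(\frac{r}{s}\right)^{\mathrm{Re}(\nu_R)}$ followed by the exact cancellation against the factor $\left(\frac{s}{r}\right)^{\mathrm{Re}(\nu_R)}$ in that estimate is precisely the omitted step. Your closing observation about why this kernel avoids the branch and zero issues of the kernel in \cite{DN5} also matches the paper's own remark.
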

\noindent
We now want to prove the following Lemma.

\begin{lemma}
 The Jost solutions $F^{\pm}(r,\nu)$ are holomorphic functions of order $1$ with infinite type for $\nu \in \{\text{Re}(\nu_R) \geq 0 \}$, i.e. $\nu \in \{\text{Re}(\nu) \geq \gamma(R) \}$.
\end{lemma}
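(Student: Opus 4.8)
The plan is to solve the Volterra integral equation \eqref{eqintJost} by successive approximations and to track the growth of the resulting Neumann series in $\nu$. Since $q_{\nu}$ is supported in $[r_0,R]$, for $r_0\le r\le R$ the equation reads
\[ F^{\pm}(r,\nu) = F_0^{\pm}(r,\nu) + \int_r^{R} N(r,s,\nu)\, q_{\nu}(s)\, F^{\pm}(s,\nu)\, ds, \]
while $F^{\pm}=F_0^{\pm}$ for $r\ge R$. I define the iterates $F_{n+1}^{\pm}(r,\nu) = \int_r^R N(r,s,\nu) q_\nu(s) F_n^{\pm}(s,\nu)\,ds$ starting from the free solution $F_0^{\pm}$, and seek $F^{\pm} = \sum_{n\ge 0} F_n^{\pm}$. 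Each $F_n^{\pm}(\cdot,\nu)$ is holomorphic in $\nu$ on $\{\mathrm{Re}(\nu_R)\ge 0\}$, being built from the holomorphic data $F_0^{\pm}$ and $N$ by integrating a jointly continuous integrand over a fixed compact interval; holomorphy of the sum will then follow from locally uniform convergence.

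The heart of the argument is a factorial bound on the iterates, obtained after removing the natural $r$-weight. I set $g_n(r) := r^{\mathrm{Re}(\nu_R)}\,|F_n^{\pm}(r,\nu)|$. Using the estimate $|N(r,s,\nu)| \le \frac{C}{|\nu_R|+1}\left(\frac{s}{r}\right)^{\mathrm{Re}(\nu_R)}$ of Lemma \ref{estiN1}, the weights combine as $\left(\frac{s}{r}\right)^{\mathrm{Re}(\nu_R)}s^{-\mathrm{Re}(\nu_R)} = r^{-\mathrm{Re}(\nu_R)}$ and cancel against $r^{\mathrm{Re}(\nu_R)}$, leaving the clean recursion
\[ g_{n+1}(r) \le \frac{C}{|\nu_R|+1}\int_r^R |q_\nu(s)|\, g_n(s)\, ds. \]
The crucial point, which is precisely what the presence of $\nu$ in the potential forces us to watch, is that on $[r_0,R]$ one has $|q_\nu(s)| \le C_0(1+|\nu|)$, while $|\nu|\le|\nu_R|+\gamma(R)$, so the prefactor $\frac{C|q_\nu(s)|}{|\nu_R|+1}\le C_5$ is bounded uniformly in $\nu$ and $s$: the linear growth of $q_\nu$ is absorbed by the gain $\frac{1}{|\nu_R|+1}$ of Lemma \ref{estiN1} (equivalently the uniform bound on the renormalised kernel $M$ of Lemma \ref{estiMJsol}). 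The recursion $g_{n+1}(r) \le C_5\int_r^R g_n(s)\,ds$ then yields by induction $g_n(r) \le G_0\,\frac{(C_5(R-r_0))^n}{n!}$, where $G_0 := \sup_{r_0\le r\le R} r^{\mathrm{Re}(\nu_R)}|F_0^{\pm}(r,\nu)|$.

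Undoing the weight (using $r^{-\mathrm{Re}(\nu_R)}\le r_0^{-\mathrm{Re}(\nu_R)}$) and summing over $n$ gives, for $r_0\le r\le R$,
\[ |F^{\pm}(r,\nu)| \le r_0^{-\mathrm{Re}(\nu_R)}\, G_0\, e^{C_5(R-r_0)}. \]
Because the dominating series $\sum_n (C_5(R-r_0))^n/n!$ has $\nu$-independent terms, the Neumann series converges locally uniformly in $\nu$, so the limit $F^{\pm}(\cdot,\nu)$ is holomorphic on $\{\mathrm{Re}(\nu_R)\ge 0\}$. To read off the order and type, I bound $G_0$ via Remark \ref{rkjostlibre}: since $|F_0^{\pm}(r,\nu)| \le A e^{B|\nu_R|}|\Gamma(\nu_R+1)|$ uniformly for $r\in[r_0,R]$, and since both $r^{\mathrm{Re}(\nu_R)}\le \max(1,R)^{|\nu_R|}$ and $r_0^{-\mathrm{Re}(\nu_R)}\le \max(1,1/r_0)^{|\nu_R|}$ are exponential in $|\nu_R|$, these factors merge into the free bound to give
\[ |F^{\pm}(r,\nu)| \le A' e^{B'|\nu_R|}\,|\Gamma(\nu_R+1)|, \qquad \mathrm{Re}(\nu_R)\ge 0, \]
uniformly in $r_0\le r\le R$. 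By Stirling $\log|\Gamma(\nu_R+1)|$ grows like $|\nu_R|\log|\nu_R|$, so the bound is of order $1$; comparison with $F_0^{\pm}$, which grows exactly this way, shows the type is infinite. This is the claimed order $1$, infinite type on the right half-plane.

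The main obstacle, and the reason the argument does not reduce verbatim to the non-magnetic case of \cite{DN5}, is exactly the factor $\nu$ sitting inside $q_\nu$, which makes $\|q_\nu\|_\infty$ grow linearly in $|\nu|$ and would spoil a naive iteration. The entire scheme hinges on the sharp $\frac{1}{|\nu_R|+1}$ decay of the kernel in Lemma \ref{estiN1}, which cancels this linear growth, and on the de-weighting by $r^{\mathrm{Re}(\nu_R)}$, which converts the weighted kernel bound into a genuine Volterra recursion amenable to the factorial estimate. Once these two ingredients are in place, holomorphy and the growth bound follow routinely.
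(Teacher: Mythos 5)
Your proposal is correct and follows essentially the same route as the paper: you solve the weighted Volterra equation by iteration (the paper uses the weight $(r/R)^{\nu_R}$ and the renormalised kernel $M(r,s,\nu)=(r/s)^{\nu_R}N(r,s,\nu)$ of Lemma \ref{estiMJsol}, while you weight by $r^{\mathrm{Re}(\nu_R)}$ and invoke Lemma \ref{estiN1} directly, which is the same bound in modulus), derive a factorial estimate on the iterates, and conclude order $1$ with infinite type from the free-solution bound of Remark \ref{rkjostlibre} together with Stirling's formula. Your explicit absorption of the linear growth $|q_\nu|\leq C_0(1+|\nu|)$ by the $\frac{1}{|\nu_R|+1}$ decay of the kernel is precisely the cancellation that is implicit in the paper's exponential factor $\exp\bigl(\frac{C}{|\nu_R|}\int_r^{+\infty}|q_{\nu}(s)|\,ds\bigr)$.
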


\begin{proof}
We just give the proof for the Jost solution $F^+(r,\nu)$. We use an iterative method to prove this Lemma. First,
we multiply the integral equation (\ref{eqintJost}) by $\left( \frac{r}{R} \right)^{\nu_R}$ and we then obtain the new integral equation
\[ g(r,\nu) = \left( \frac{r}{R} \right)^{\nu_R} F_0^+(r,\nu) + \int_r^{+\infty} M(r,s,\nu) q_{\nu}(s)g(s,\nu) ds,\]
where
\[ g(r,\nu) = \left( \frac{r}{R} \right)^{\nu_R} F^+(r,\nu).\]
We then use an iterative method by putting
\[ g^0(r,\nu) = \left( \frac{r}{R} \right)^{\nu_R} F_0^+(r,\nu)\]
and
\[ g^{k+1}(r,\nu) = \int_r^{+\infty} M(r,s,\nu) q_{\nu}(s)g^k(s,\nu) ds, \quad \forall k \geq 0.\]
We then show using Lemma \ref{estiMJsol} and Remark \ref{rkjostlibre} that, for all $k \geq 0$, for all $\text{Re}(\nu_R) \geq 0$
\[ | g^k(r,\nu) | \leq A e^{B|\nu_R|} |\Gamma(\nu_R+1)| \frac{1}{k!} \left( \frac{C}{|\nu_R|+1} \int_r^{+\infty} |q_{\nu}(s)| ds \right)^k.\]
Therefore, for all $\text{Re}(\nu_R) \geq 0$,
\begin{eqnarray*}
 | g(r,\nu) | &\leq& \sum_{k \geq 0} |g^k(s,\nu)| \\
 &\leq&  A e^{B|\nu_R|} |\Gamma(\nu_R+1)| \exp \left( \frac{C}{|\nu_R|} \int_r^{+\infty} |q_{\nu}(s)| ds \right)
 \end{eqnarray*}
and finally, for all $\text{Re}(\nu_R) \geq 0$,
\[ | F^+(r,\nu) | \leq A e^{B|\nu_R|} |\Gamma(\nu_R+1)| \exp \left( \frac{C}{|\nu_R|} \int_r^{+\infty} |q_{\nu}(s)| ds \right)\left( \frac{R}{r_0} \right)^{\text{Re}(\nu_R)}.\]

\begin{remark}
 We note that the term
\[ A e^{B|\nu_R|} |\Gamma(\nu_R+1)|\]
corresponds to the contribution of $F_0^+(r,\nu)$ (see Remark \ref{rkjostlibre}).
\end{remark}
\noindent
From this inequality we deduce that for any fixed $r \in [r_0,R]$, the Jost solution $\nu \mapsto F^+(r,\nu)$ is a function 
of order $1$ with infinite type for all $\text{Re}(\nu_R) \geq 0$ thanks to the Stirling's formula (see \cite{Leb}, Eq (1.4.24)).
\end{proof}

\begin{coro}
  The Jost solutions $F^{\pm}(r,\nu)$ are holomorphic functions of order $1$ with infinite type for $\nu \in \C$.
\end{coro}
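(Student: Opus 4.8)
The plan is to upgrade the preceding Lemma, which controls the growth of $F^{\pm}(r,\nu)$ only on the right half-plane $\{\text{Re}(\nu_R) \geq 0\}$, into a genuinely global statement on $\C$. Holomorphicity on the whole plane is already available (Lemma \ref{lemJost}), so the sole remaining task is to show that the order $1$, infinite type growth bound persists on the left half-plane. For this I would lean on the symmetry identity of Lemma \ref{symjostsol}, exactly as announced in the remark following it.

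Concretely, fix $r \in [r_0,R]$ and write $F_{\gamma}^{\pm}$ for the Jost solutions attached to $(V,B)$. Since $(V,B) \in \mathcal{C}$ forces $(V,-B) \in \mathcal{C}$ by Definition \ref{defVB}, the preceding Lemma applies verbatim to $(V,-B)$: its Jost solutions $F_{-\gamma}^{\pm}$ are of order $1$ with infinite type on their own right half-plane. Because the flux of $-B$ is $-\gamma(R)$, that half-plane is $\{\text{Re}(\nu) \geq -\gamma(R)\}$, and the bound there is of the same Stirling shape, dominated by $e^{B'|\nu + \gamma(R)|}\,|\Gamma(\nu + \gamma(R) + 1)|$ up to the harmless factor $(R/r_0)^{\text{Re}(\nu + \gamma(R))}$. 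Now take any $\nu$ with $\text{Re}(\nu) \leq \gamma(R)$; then $-\nu$ satisfies $\text{Re}(-\nu) \geq -\gamma(R)$, so this estimate controls $F_{-\gamma}^{\pm}(r,-\nu)$, and Lemma \ref{symjostsol} gives $F_{\gamma}^{\pm}(r,\nu) = F_{-\gamma}^{\pm}(r,-\nu)$. This transfers a bound of exactly the same form to $F_{\gamma}^{\pm}$ on the left half-plane $\{\text{Re}(\nu) \leq \gamma(R)\}$.

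Putting the two half-plane estimates together covers all of $\C$, overlapping on the line $\text{Re}(\nu) = \gamma(R)$. Since both are of the same type, the maximum of $|F^{\pm}(r,\nu)|$ over a full circle $|\nu| = \rho$ behaves like $e^{\rho \log \rho\,(1 + o(1))}$; reading off $\limsup_{\rho \to \infty} \frac{\log\log M(\rho)}{\log \rho} = 1$ and $\limsup_{\rho \to \infty} \frac{\log M(\rho)}{\rho} = +\infty$ via Stirling's formula then yields order $1$ with infinite type on the whole plane, as claimed. The only point needing a little care is that order and type are global quantities, defined through maxima over complete circles, whereas the input consists of two half-plane bounds; but because both share the dominant $|\Gamma(\nu_R + 1)|$ factor, the two pieces match seamlessly across $\text{Re}(\nu) = \gamma(R)$ and no interface difficulty arises. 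I do not expect any genuine obstacle here: the entire content of the corollary is the symmetry trick of Lemma \ref{symjostsol}, applied to the reflected couple $(V,-B)$.
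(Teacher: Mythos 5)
Your proposal is correct and follows essentially the same route as the paper: the paper's own proof likewise combines holomorphicity from Lemma \ref{lemJost} with the symmetry identity of Lemma \ref{symjostsol}, applying the right half-plane growth estimate to the reflected couple $(V,-B)$ (whose flux is $-\gamma(R)$) to transfer the order $1$, infinite type bound to the region $\{\mathrm{Re}(\nu) \leq \gamma(R)\}$. Your additional remarks on matching the two half-plane Stirling-type bounds across the line $\mathrm{Re}(\nu) = \gamma(R)$ only make explicit a point the paper leaves implicit.
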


\begin{proof}
 Thanks to the previous Lemma we already know that the Jost solutions are holomorphic functions of order $1$ with infinite type for $\{\text{Re}(\nu_R) \geq 0 \}$,
 i.e. $\{\text{Re}(\nu) \geq \gamma(R) \}$. We now want to extend this property to the region $\{\text{Re}(\nu) \leq \gamma(R) \}$.
 In this aim, we use Lemma \ref{symjostsol} which tells us that
 \begin{equation}\label{symjostsol2}
  F_{\gamma}^{\pm}(r,\nu) = F_{-\gamma}(r,-\nu), \quad \forall r \geq r_0, \quad \forall \nu \in \C.
 \end{equation}
Indeed, since if $(V,B) \in \mathcal{C}$ then $(V,-B) \in \mathcal{C}$, we thus also know that the corresponding Jost solutions $F_{-\gamma}(r,\nu)$ are 
holomorphic functions of order $1$ with infinite type for $\{\text{Re}(\nu) \geq -\gamma(R) \}$. Moreover, if $\text{Re}(\nu) \leq \gamma(R)$, then
$\text{Re}(-\nu) \geq -\gamma(R)$. Therefore, thanks to (\ref{symjostsol2}), the Jost solutions $F_{\gamma}^{\pm}(r,\nu)$ are also 
 holomorphic functions of order $1$ with infinite type in the region $\{\text{Re}(\nu) \leq \gamma(R) \}$.
\end{proof}

\begin{prop}\label{lienFF_0}
 For any fixed $r \in [r_0,R]$, there exists a positive constant $C$ such that
 \[ F^+(r,\nu) \sim C F_0^+(r,\nu), \quad \text{as} \quad \nu \to + \infty.\]
\end{prop}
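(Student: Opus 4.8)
The plan is to normalise by the free solution and pass to the limit in the Volterra equation (\ref{eqintJost}). Throughout I take $\nu$ real with $\nu \to +\infty$, so that $\nu_R = \nu - \gamma(R) \to +\infty$; by the same Wronskian argument as in Lemma \ref{lemJost} one has $F_0^+(r,\nu) \neq 0$, so the ratio $\rho(r,\nu) := \frac{F^+(r,\nu)}{F_0^+(r,\nu)}$ is well defined. Dividing (\ref{eqintJost}) by $F_0^+(r,\nu)$ and using that $q_{\nu}$ is supported in $[r_0,R]$ gives the Volterra equation
\[ \rho(r,\nu) = 1 + \int_r^R \mathcal{K}(r,s,\nu)\, \rho(s,\nu)\, ds, \qquad \mathcal{K}(r,s,\nu) := \frac{N(r,s,\nu)\, F_0^+(s,\nu)}{F_0^+(r,\nu)}\, q_{\nu}(s). \]
This is exactly the equation solved by $\sum_{k \geq 0} \phi_k$, where $\phi_k := g^k/g^0$ are the normalised iterates of the preceding proof; its Neumann series therefore converges, and the claim reduces to showing that $\rho(r,\nu)$ tends to a strictly positive limit.

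The heart of the matter is the pointwise limit of the kernel $\mathcal{K}$. Using the uniform large-order asymptotics of the Bessel functions (\ref{asHnu1})--(\ref{asHnu2}) from Appendix \ref{appe}, applied to $u(r) = \sqrt{\frac{\pi r}{2}} J_{\nu_R}(r)$ and $v(r) = -i\sqrt{\frac{\pi r}{2}} H_{\nu_R}^{(1)}(r)$, one computes, for $r_0 \leq r \leq s \leq R$,
\[ \frac{N(r,s,\nu)\, F_0^+(s,\nu)}{F_0^+(r,\nu)} = \frac{s}{2\nu_R}\left( 1 + O\!\left( \tfrac{1}{\nu_R} \right) \right), \]
the exponentials $(s/r)^{\nu_R}$ coming from $N$ and $(r/s)^{\nu_R}$ coming from the Hankel ratio cancelling exactly. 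Since
\[ q_{\nu}(s) = - \frac{2\nu(\gamma(s)-\gamma(R))}{s^2} + \frac{\gamma(s)^2-\gamma(R)^2}{s^2} + V(s) \]
has leading part linear in $\nu$, and $\nu/\nu_R \to 1$, the two growths compensate while the $O(1)$ and $V$-dependent parts of $q_{\nu}$ are absorbed by the $1/\nu_R$ factor; thus $\mathcal{K}(r,s,\nu) \to \kappa(s) := -\frac{\gamma(s)-\gamma(R)}{s}$ as $\nu \to +\infty$, a $\nu$-independent kernel, continuous on $[r_0,R]$.

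To pass to the limit in the series I provide a uniform integrable majorant. Combining Lemma \ref{estiN1}, which gives $|N(r,s,\nu)| \leq \frac{C}{|\nu_R|+1}(s/r)^{\text{Re}(\nu_R)}$, with the corresponding upper bound $|F_0^+(s,\nu)/F_0^+(r,\nu)| \leq C\sqrt{s/r}\,(r/s)^{\text{Re}(\nu_R)}$ from the appendix, one obtains $\left| \frac{N(r,s,\nu) F_0^+(s,\nu)}{F_0^+(r,\nu)} \right| \leq \frac{C}{|\nu_R|+1}\sqrt{s/r}$; multiplying by $|q_{\nu}(s)| = O(\nu)$ and using that $\nu/(|\nu_R|+1)$ stays bounded yields $|\mathcal{K}(r,s,\nu)| \leq g(s)$ for all large $\nu$, with $g \in L^1([r_0,R])$ fixed. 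Feeding the pointwise limit and this majorant into the normalised iterates $\phi_0 = 1$, $\phi_{k+1}(r,\nu) = \int_r^R \mathcal{K}(r,s,\nu)\phi_k(s,\nu)\,ds$, dominated convergence applied term by term shows $\phi_k(r,\nu) \to \phi_k^\infty(r)$, where $\phi_0^\infty = 1$ and $\phi_{k+1}^\infty(r) = \int_r^R \kappa(s)\phi_k^\infty(s)\,ds$, and the majorant lets the limit commute with the summation. Hence $\rho(r,\nu) \to \rho_\infty(r) := \sum_{k \geq 0} \phi_k^\infty(r)$, the solution of the limiting Volterra equation $\rho_\infty(r) = 1 + \int_r^R \kappa(s)\rho_\infty(s)\,ds$; equivalently $\rho_\infty' = -\kappa\,\rho_\infty$ with $\rho_\infty(R) = 1$, whence the explicit positive constant
\[ C = \rho_\infty(r) = \exp\!\left( -\int_r^R \frac{\gamma(s)-\gamma(R)}{s}\, ds \right) > 0. \]
This is precisely $F^+(r,\nu) \sim C F_0^+(r,\nu)$, and the analogous statement for $F^-$ follows from $\overline{F^+(r,\nu)} = F^-(r,\nu)$ for real $\nu$ together with $C \in \R$.

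The main difficulty is the second step: one needs not merely the upper bound of Lemma \ref{estiN1} but the sharp leading asymptotics of $N(r,s,\nu)$ and of the Hankel ratio to relative order $O(1/\nu_R)$, because $q_{\nu}$ itself grows like $\nu$ and would swamp any cruder estimate. This is exactly where the uniform large-order Bessel asymptotics of Appendix \ref{appe} are indispensable, and it is the reason the compact support of $(V,B)$ --- which confines $s$ to the fixed interval $[r_0,R]$ --- is essential.
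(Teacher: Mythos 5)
Your proof is correct and follows essentially the same route as the paper's: the same normalisation $h = F^+/F_0^+$, the same kernel $K(r,s,\nu) = \frac{F_0^+(s,\nu)}{F_0^+(r,\nu)}N(r,s,\nu) \approx \frac{s}{2\nu_R}$ from Lemma \ref{estiK1proof}, and the same Neumann-series iteration, your dominated-convergence passage to the limiting Volterra equation being a more careful write-up of the paper's ``iterative method''. The only discrepancy is the sign of the exponent in the constant: the paper asserts $h(r,\nu)=\exp\bigl(\frac{-1}{2\nu_R+1}\int_r^{+\infty}s q_\nu(s)\,ds\bigr)(1+o(1))$ and hence $C_r=\exp\bigl(\int_r^{+\infty}\frac{\gamma(s)-\gamma(R)}{s}\,ds\bigr)$, whereas, since $K>0$ for large real $\nu_R$ so that the Volterra solution is $\exp\bigl(+\int_r^R K q_\nu\,ds\bigr)$, your $C=\exp\bigl(-\int_r^R\frac{\gamma(s)-\gamma(R)}{s}\,ds\bigr)$ carries the correct sign --- a slip in the paper's display that is immaterial for the statement, which only requires the existence of some positive constant.
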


\begin{proof}
 Since $F_0^+(r,\nu) \neq 0$ for all $\nu \in \R$ (see Lemma \ref{lemJost}), we can here strictly follow the idea of \cite{DN5}, Lemma 4.4 and Proposition 4.5. We first rewrite the integral 
 equation (\ref{eqintJost}) as
 \[ h(r,\nu) = 1 + \int_r^{+\infty} K(r,s,\nu) q_{\nu}(s) h(s,\nu) ds,\]
 where
 \[ h(r,\nu) = \frac{F^+(r,\nu)}{F_0^+(r,\nu)}\]
 and
  \[ K(r,s,\nu) = \frac{F_0^+(s,\nu)}{F_0(r,\nu)} N(r,s,\nu).\]
  We note that thanks to Lemma \ref{estiK1proof}, for all $r_0 \leq r \leq s \leq R$, for $\nu_R \geq 0$,
  \[ K(r,s,\nu) = \frac{s}{2\nu_R+1} ( 1 + o(1)).\]
 We then use an iterative method to obtain, using Lemma \ref{estiK1proof}, that for all $\nu_R \geq 0$,
  \[  h(r,\nu) = \exp \left( \frac{-1}{2\nu_R + 1} \int_r^{+\infty} s q_{\nu}(s) ds \right) (1+o(1)).\]
We now recall that
\[ q_{\nu}(r) := - \frac{2\nu (\gamma(r)-\gamma(R))}{r^2} + \frac{\gamma(r)^2-\gamma(R)^2}{r^2} + V(r)\]
and we thus obtain that
  \begin{eqnarray*}
  h(r,\nu) &=& \exp \left( \frac{2\nu}{2\nu_R + 1} \int_r^{+\infty} \frac{\gamma(s)-\gamma(R)}{s} ds -  \frac{1}{2\nu_R + 1} \int_r^{+\infty} \frac{\gamma(s)^2-\gamma(R)^2}{s} + V(s) ds \right) (1+o(1))\\
  &=& \exp \left( \frac{2\nu}{2\nu_R + 1} \int_r^{+\infty} \frac{\gamma(s)-\gamma(R)}{s} ds \right) \exp \left( O \left( \frac{1}{\nu_R} \right) \right) (1+o(1)).
  \end{eqnarray*}
 Therefore,
 \[ h(r,\nu) \sim \exp \left( \int_r^{+\infty} \frac{\gamma(s)-\gamma(R)}{s} ds \right), \quad \text{as} \quad \nu \to + \infty.\]
 By definition, we can thus conclude that
 \[ F^+(r,\nu) \sim C_r F_0^+(r,\nu),\quad \text{as} \quad \nu \to + \infty,\]
 where
 \[ C_r := \exp \left( \int_r^{+\infty} \frac{\gamma(s)-\gamma(R)}{s} ds \right).\]
 \begin{remark}
  \[ C_r = 1, \quad \forall r \geq R.\]
 \end{remark}
\end{proof}

%

\begin{prop}\label{estiFiR}
 The Jost solutions $\nu \mapsto F^{\pm}(r,\nu)$ are bounded on $i\R + \gamma(R)$. In particular, there exists a positive constant $C$ such that for all
 $r_0 \leq r \leq R$ and for all $y \in \R$,
 \[ |F^{\pm}(r,iy+\gamma(R)) | \leq C \exp \left( \frac{C}{|y|+1} \int_r^{+\infty} |q_{\nu}(s)| ds \right).\]
\end{prop}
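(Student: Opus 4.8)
The plan is to exploit the fact that on the line $\nu = iy + \gamma(R)$ the shifted momentum $\nu_R = \nu - \gamma(R) = iy$ is purely imaginary, so that $\text{Re}(\nu_R) = 0$. This is precisely the borderline case of the half-plane $\{\text{Re}(\nu_R) \geq 0\}$ on which all of our kernel estimates have been established, and the vanishing of $\text{Re}(\nu_R)$ will make every spatial power $(s/r)^{\text{Re}(\nu_R)}$ equal to one. I will treat $F^+$; the bound for $F^-$ then follows immediately from the symmetry $\overline{F^+(r,\nu)} = F^-(r,\overline{\nu})$ of Lemma \ref{lemJost}, since $\overline{iy+\gamma(R)} = -iy+\gamma(R)$ lies again on the line and $|y|$ is unchanged.

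The first and main step is to control the free term. Starting from (\ref{defF_0+}) and evaluating at $\nu_R = iy$, one has $|F_0^+(r,iy+\gamma(R))| = e^{-\pi y/2}\sqrt{\pi r/2}\,|H_{iy}^{(1)}(r)|$. The uniform asymptotics of the Hankel functions collected in Appendix \ref{appe}, together with the decay $|\Gamma(1+iy)| \sim \sqrt{2\pi|y|}\,e^{-\pi|y|/2}$ coming from Stirling's formula, show that the exponential prefactor is exactly compensated by the Gamma-factor governing $H_{iy}^{(1)}(r)$ for $r$ in the compact set $[r_0,R]$; hence there is a constant $C > 0$ with $|F_0^+(r,iy+\gamma(R))| \leq C$ uniformly for $r_0 \leq r \leq R$ and $y \in \R$. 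This uniform boundedness of the free Jost solution on the imaginary axis is the crucial technical input and the only place where the sharp asymptotics of Bessel functions of large imaginary order are really needed; the rough estimate of Remark \ref{rkjostlibre}, with its unspecified constant $B$ in $e^{B|\nu_R|}$, is not sufficient here, which is precisely why this is the hard part of the argument.

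Once the free bound is in hand, I would run a Volterra iteration directly on the integral equation (\ref{eqintJost}). On the line $\text{Re}(\nu_R) = 0$, Lemma \ref{estiN1} gives the clean kernel bound $|N(r,s,\nu)| \leq \frac{C}{|y|+1}$ for $r_0 \leq r \leq s \leq R$, with no spatial growth factor. Setting $F^{+,0} = F_0^+$ and $F^{+,k+1}(r,\nu) = \int_r^{+\infty} N(r,s,\nu)q_\nu(s)F^{+,k}(s,\nu)\,ds$, an induction yields $|F^{+,k}(r,\nu)| \leq \frac{C}{k!}\left(\frac{C}{|y|+1}\int_r^{+\infty}|q_\nu(s)|\,ds\right)^k$, and summing the series gives exactly $|F^+(r,iy+\gamma(R))| \leq C\exp\left(\frac{C}{|y|+1}\int_r^{+\infty}|q_\nu(s)|\,ds\right)$. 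Finally, I would emphasize that this is a genuine boundedness statement: since $q_\nu$ contains the term $-2\nu(\gamma(r)-\gamma(R))/r^2$, the integral $\int_r^{+\infty}|q_\nu(s)|\,ds$ grows only linearly in $|\nu| = O(|y|)$, so the factor $\frac{1}{|y|+1}$ keeps the exponent uniformly bounded in $y$. The iteration itself is routine and parallels the proof of the order-one/infinite-type lemma, the only difference being that on the imaginary axis the troublesome factor $(R/r_0)^{\text{Re}(\nu_R)}$ disappears; the whole difficulty is therefore concentrated in the free estimate of the second paragraph.
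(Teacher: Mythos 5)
Your proof is correct and follows essentially the same route as the paper: the paper's (very terse) proof likewise runs the iterative method on the integral equation (\ref{eqintJost}) for $\nu_R = iy$, citing exactly the two inputs you develop, namely the Hankel asymptotics (\ref{estiH1iR}) to bound the free term $F_0^+$ on the line and Proposition \ref{estiNiR} for the kernel bound $|N(r,s,iy+\gamma(R))| \leq C/(|y|+1)$. Your additional closing observation --- that $\int_r^{+\infty}|q_{\nu}(s)|\,ds = O(|y|)$ so the exponent stays uniformly bounded and genuine boundedness on $i\R+\gamma(R)$ follows --- is a worthwhile detail the paper leaves implicit.
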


\begin{proof}
 We just give the proof for the Jost solution $F^{+}(r,\nu)$. We recall the integral equation on this Jost solution
 \[F^{+}(r,\nu) = F_0^{+}(r,\nu) + \int_r^{+\infty} N(r,s,\nu) q_{\nu}(s)F^{+}(s,\nu) ds.\]
 We then use an iterative method, for $\nu = iy+\gamma(R)$, i.e. $\nu_R = iy$, to obtain, using (\ref{estiH1iR}) and Proposition \ref{estiNiR}, that
 \[ |F^{+}(r,iy+\gamma(R)) | \leq C \exp \left( \frac{C}{|y|+1} \int_r^{+\infty} |q_{\nu}(s)| ds \right).\]
\end{proof}

\subsection{Study of the regular solution}\label{etudesolreg}
\noindent
In this Section we study the regular solution. First, let us recall the definition of this function.

\begin{definition}[Regular solution]
 The regular solution is the solution of the stationary equation (\ref{eqstat}) satisfying the Dirichlet condition
 \[ \Phi(r,\nu)_{|r = r_0} = 0.\]
\end{definition}
\noindent
By uniqueness and using the fact that the Jost solutions $F^{\pm}(r,\nu)$ are a Fundamental System of Solutions of (\ref{eqstat}) we obtain the following Lemma.

\begin{lemma}\label{devsolreg}
 The regular solution $\Phi(r,\nu)$ is given, up to a multiplicative constant, by
\[ \Phi(r,\nu) = i\left( F^-(r_0,\nu) F^+(r,\nu) - F^+(r_0,\nu) F^-(r,\nu) \right).\]
\end{lemma}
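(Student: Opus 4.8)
The plan is to exploit the fact, recalled just above the statement, that the pair of Jost solutions $(F^+(\cdot,\nu),F^-(\cdot,\nu))$ is a fundamental system of solutions of the stationary equation (\ref{eqstat}). Consequently, at fixed $\nu$ the solution space of (\ref{eqstat}) is two-dimensional, and every solution is a linear combination of $F^+$ and $F^-$ with coefficients that do not depend on $r$. First I would observe that the proposed right-hand side
\[ \Phi(r,\nu) = i\left( F^-(r_0,\nu) F^+(r,\nu) - F^+(r_0,\nu) F^-(r,\nu) \right) \]
is exactly such a combination, since $iF^-(r_0,\nu)$ and $-iF^+(r_0,\nu)$ are constants with respect to the variable $r$; hence $\Phi(\cdot,\nu)$ solves (\ref{eqstat}). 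Evaluating at $r=r_0$ then gives
\[ \Phi(r_0,\nu) = i\left( F^-(r_0,\nu) F^+(r_0,\nu) - F^+(r_0,\nu) F^-(r_0,\nu) \right) = 0, \]
so the Dirichlet boundary condition characterizing the regular solution is fulfilled.

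It then remains to justify uniqueness up to a multiplicative constant, which I would base on the elementary fact that the set of solutions of the second-order equation (\ref{eqstat}) vanishing at $r=r_0$ is one-dimensional, being parametrized by the value of the derivative at $r_0$. Thus any solution satisfying the Dirichlet condition is a scalar multiple of the $\Phi$ written above, provided this $\Phi$ is not identically zero. To exclude the trivial case I would compute the derivative at $r_0$ in terms of the Wronskian: differentiating in $r$ and using $W(F^+(r,\nu),F^-(r,\nu)) = F^+(F^-)' - (F^+)'F^- = -2i$, one finds
\[ \Phi'(r_0,\nu) = i\left( F^-(r_0,\nu)(F^+)'(r_0,\nu) - F^+(r_0,\nu)(F^-)'(r_0,\nu) \right) = -i\,W(F^+,F^-) = -2. \]
Since $\Phi'(r_0,\nu) = -2 \neq 0$, the function $\Phi(\cdot,\nu)$ is genuinely nontrivial, so it is indeed the regular solution, up to the announced multiplicative constant.

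There is no real analytic obstacle here; the assertion is a direct consequence of the variation-of-constants structure together with the Wronskian normalization of the Jost solutions. The only point demanding a little care is the non-triviality of $\Phi$, which is precisely what the Wronskian computation secures. I would also note that the resulting normalization $\Phi'(r_0,\nu) = -2$ is consistent with the explicit free expression for $\Phi_0(r,\nu)$ recorded in the free case, which provides a convenient sanity check on the sign and constant conventions.
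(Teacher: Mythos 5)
Your proof is correct and follows essentially the same route as the paper, which deduces the lemma precisely from the fundamental-system property of the Jost solutions together with uniqueness for the Dirichlet problem at $r=r_0$; your Wronskian computation giving $\Phi'(r_0,\nu)=-2$ (which also matches the value the paper uses later when deriving the algebraic identity on the Jost functions) simply makes explicit the non-triviality step that the paper leaves implicit.
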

\noindent
We then easily obtain the following Lemma thanks to Lemma \ref{lemJost}.

\begin{lemma}
  For $r \geq r_0$ fixed, the regular solution is a holomorphic function with respect to $\nu$ on $\C$ satisfying the following identity:
  \[ \overline{\Phi(r,\nu)} = \Phi(r,\overline{\nu}).\]
\end{lemma}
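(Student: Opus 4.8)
The plan is to deduce both claims directly from the explicit representation of the regular solution furnished by Lemma~\ref{devsolreg},
\[ \Phi(r,\nu) = i\left( F^-(r_0,\nu) F^+(r,\nu) - F^+(r_0,\nu) F^-(r,\nu) \right),\]
together with the two structural properties of the Jost solutions already established: their entireness in $\nu$ and their behaviour under complex conjugation.

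First I would settle holomorphy. By the Corollary proved just above, for every fixed radius the maps $\nu \mapsto F^{\pm}(r,\nu)$ are holomorphic on all of $\C$; applying this at the radius $r_0$ shows that $\nu \mapsto F^{\pm}(r_0,\nu)$ are entire as well. The right-hand side of the displayed identity is then a finite linear combination of products of entire functions of $\nu$, and such combinations are again entire. Hence $\nu \mapsto \Phi(r,\nu)$ is holomorphic on $\C$ for each fixed $r \geq r_0$.

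Next I would establish the conjugation symmetry by a short direct computation built on the reflection rule $\overline{F^{\pm}(r,\nu)} = F^{\mp}(r,\overline{\nu})$ of Lemma~\ref{lemJost}. Conjugating the displayed formula turns the prefactor $i$ into $-i$ and replaces each Jost factor $F^{\pm}(\cdot,\nu)$ by $F^{\mp}(\cdot,\overline{\nu})$, so that
\begin{align*}
 \overline{\Phi(r,\nu)} &= -i\left( \overline{F^-(r_0,\nu)}\,\overline{F^+(r,\nu)} - \overline{F^+(r_0,\nu)}\,\overline{F^-(r,\nu)} \right)\\
 &= -i\left( F^+(r_0,\overline{\nu}) F^-(r,\overline{\nu}) - F^-(r_0,\overline{\nu}) F^+(r,\overline{\nu}) \right)\\
 &= i\left( F^-(r_0,\overline{\nu}) F^+(r,\overline{\nu}) - F^+(r_0,\overline{\nu}) F^-(r,\overline{\nu}) \right)\\
 &= \Phi(r,\overline{\nu}),
\end{align*}
which is precisely the asserted identity.

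No genuine obstacle arises: the Lemma is a formal corollary of entireness and the conjugation rule for the Jost solutions. The only delicate point is the sign bookkeeping in the final computation, where one must note that conjugation simultaneously swaps the two Jost branches $F^+ \leftrightarrow F^-$ and flips the sign inherited from the factor $i$; it is the coincidence of these two effects that reproduces the original ordering of the two terms and delivers $\Phi(r,\overline{\nu})$ rather than $-\Phi(r,\overline{\nu})$.
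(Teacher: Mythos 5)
Your proposal is correct and follows exactly the paper's route: the paper obtains this lemma directly from the representation $\Phi(r,\nu) = i\left( F^-(r_0,\nu) F^+(r,\nu) - F^+(r_0,\nu) F^-(r,\nu) \right)$ of Lemma~\ref{devsolreg} together with the holomorphy and conjugation properties of the Jost solutions in Lemma~\ref{lemJost}, which is precisely your argument. Your explicit sign computation simply spells out what the paper leaves as ``easily obtained.''
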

\noindent
The aim of this section is then to prove the following Proposition on the regular solution.

\begin{prop}\label{estiphi}
There exists a positive constant $C$ such that for all $r_0 \leq r \leq R$ and for
all $\text{Re}(\nu_R) \geq 0$, i.e. for all $\text{Re}(\nu) \geq \gamma(R)$,
 \[ | \Phi(r,\nu) | \leq \frac{C}{1 + |\nu_R|} \left( \frac{r}{r_0} \right)^{\text{Re}(\nu_R)}.\]
\end{prop}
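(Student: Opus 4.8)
The plan is to realise the regular solution as the solution of a Volterra integral equation anchored at $r_0$ and then to run a Picard iteration whose convergence is controlled by the two estimates already at our disposal: the bound on the free regular solution (Proposition \ref{estiphi0}) and the bound on the Green kernel $N$ (Lemma \ref{estiN1}). Since $\Phi$ solves (\ref{eqstat}) with the Dirichlet condition $\Phi(r_0,\nu)=0$, and $\Phi_0$ solves the free equation with the same boundary data, the method of variation of constants with the fundamental system $(u,v)$ expresses $\Phi$ as
\[ \Phi(r,\nu) = \Phi_0(r,\nu) - \int_{r_0}^{r} N(r,s,\nu)\, q_\nu(s)\, \Phi(s,\nu)\, ds, \qquad r_0 \le r \le R, \]
with the \emph{same} kernel $N$ as in (\ref{eqintJost}); the opposite orientation of the integration interval is what produces the sign, but only $|N|$ enters the estimates below, so its precise value is immaterial.

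First I would record the form of the kernel estimate needed in this regime. Lemma \ref{estiN1} bounds $|N(r,s,\nu)|$ for $r_0 \le r \le s \le R$; since $N(r,s,\nu)=u(r)v(s)-u(s)v(r)$ is antisymmetric, $|N(r,s,\nu)|=|N(s,r,\nu)|$, and for the ordering $r_0 \le s \le r \le R$ relevant here this gives
\[ |N(r,s,\nu)| \le \frac{C}{|\nu_R|+1}\left(\frac{r}{s}\right)^{\text{Re}(\nu_R)}. \]
Setting $\Phi^0=\Phi_0$ and $\Phi^{k+1}(r,\nu)=\int_{r_0}^{r} N(r,s,\nu)\,q_\nu(s)\,\Phi^k(s,\nu)\,ds$, I would then prove by induction that
\[ |\Phi^k(r,\nu)| \le \frac{C}{1+|\nu_R|}\left(\frac{r}{r_0}\right)^{\text{Re}(\nu_R)} \frac{1}{k!}\left(\frac{C'}{1+|\nu_R|}\int_{r_0}^{r} |q_\nu(s)|\,ds\right)^{k}. \]
The base case $k=0$ is exactly Proposition \ref{estiphi0}. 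In the inductive step the decisive algebraic feature is that the two weights combine as $(r/s)^{\text{Re}(\nu_R)}(s/r_0)^{\text{Re}(\nu_R)}=(r/r_0)^{\text{Re}(\nu_R)}$, so the factor $(r/r_0)^{\text{Re}(\nu_R)}$ pulls out of the integral unchanged, while $\int_{r_0}^{r}|q_\nu(s)|\big(\int_{r_0}^{s}|q_\nu|\big)^{k}ds=\frac{1}{k+1}\big(\int_{r_0}^{r}|q_\nu|\big)^{k+1}$ supplies the factorial, closing the induction.

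Summing the series yields
\[ |\Phi(r,\nu)| \le \frac{C}{1+|\nu_R|}\left(\frac{r}{r_0}\right)^{\text{Re}(\nu_R)} \exp\left(\frac{C'}{1+|\nu_R|}\int_{r_0}^{R} |q_\nu(s)|\,ds\right), \]
and it remains to bound the exponential uniformly in $\nu$. This is where the $\nu$-dependence of the potential must be confronted, and it is the main obstacle: because of the term $-2\nu(\gamma(r)-\gamma(R))/r^2$ in $q_\nu$, one only has $|q_\nu(r)|\le C(1+|\nu|)$ on $[r_0,R]$ (using that $\gamma$ is smooth and $V$ bounded there), hence $\int_{r_0}^{R}|q_\nu|\le C(1+|\nu_R|)$, which grows linearly in $\nu_R$. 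The saving point is that the kernel carries the compensating factor $\frac{1}{1+|\nu_R|}$, so the exponent $\frac{C'}{1+|\nu_R|}\int_{r_0}^{R}|q_\nu|$ stays bounded uniformly over $\text{Re}(\nu_R)\ge 0$; absorbing the resulting constant gives the claimed bound. This exact balance between the linear growth of $\int|q_\nu|$ and the gain $\frac{1}{1+|\nu_R|}$ in the Green kernel is precisely the sharper-in-$\nu$ control flagged in the remark following (\ref{hampart}), and securing it is the heart of the argument.
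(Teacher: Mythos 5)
Your proposal is correct and follows essentially the same route as the paper: the Volterra integral equation for $\Phi$ anchored at $r_0$, a Picard iteration controlled by Proposition \ref{estiphi0} and the kernel bound of Lemma \ref{estiN1} (applied with the roles of $r$ and $s$ exchanged), yielding the exponential factor $\exp\bigl(\tfrac{C}{1+|\nu_R|}\int_{r_0}^{r}|q_{\nu}|\,ds\bigr)$. Your explicit final step --- that $\int_{r_0}^{R}|q_{\nu}|$ grows only linearly in $|\nu_R|$ and is therefore neutralized by the $\tfrac{1}{1+|\nu_R|}$ gain from the kernel --- is exactly what the paper leaves implicit in its concluding line, so there is no gap.
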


\begin{proof}
 We use the following integral equation for the regular solution for all $r \geq r_0$
 \[ \Phi(r,\nu) = \Phi_0(r,\nu) + \int_{r_0}^{r} N(r,s,\nu) q_{\nu}(s) \Phi(s,\nu) ds.\]
We then use an iterative method to obtain, using Propositions \ref{estiphi0} and \ref{estiN1proof} in the case where $s \leq r$, that for all
 $\text{Re}(\nu_R) \geq 0$,
 \[  | \Phi(r,\nu) | \leq \frac{C}{|\nu_R|+1} \left( \frac{r}{r_0} \right)^{\text{Re}(\nu_R)} \exp \left( \frac{C}{|\nu_R|+1} \int_{r_0}^r |q_{\nu}(s)| ds \right).\]
 We thus finally obtain that for all $r \geq r_0$ and for all $\text{Re}(\nu_R) \geq 0$
 \[ | \Phi(r,\nu) | \leq \frac{C}{1 + |\nu_R|} \left( \frac{r}{r_0} \right)^{\text{Re}(\nu_R)}.\]
\end{proof}

\subsection{Study of the Jost functions}\label{etudejostfun}
\noindent
The aim of this Section is to study the Jost functions and to obtain an algebraic identity on these functions that will be useful is the resolution of our inverse problem.
Let us first recall the definition of these functions.

\begin{definition}[Jost functions]
 The Jost functions are defined as the coefficients of the expansion of the regular solution on the Fundamental System of Solutions given by the Jost solutions, i.e.
\[ \Phi(r,\nu) = \alpha(\nu) F^+(r,\nu) + \beta(\nu) F^-(r,\nu).\]
\end{definition}
\noindent
By uniqueness and using Lemma \ref{devsolreg} we can actually show the following Lemma.

\begin{lemma}
The Jost solutions are given, up to a multiplicative constant, by
\[ \alpha(\nu) = iF^-(r_0,\nu)\]
and
\[ \beta(\nu) = -i F^+(r_0,\nu).\]
\end{lemma}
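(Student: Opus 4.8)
The plan is to identify $\alpha(\nu)$ and $\beta(\nu)$ by comparing two expressions for the regular solution $\Phi(r,\nu)$. First I would recall that the pair $(F^+(r,\nu),F^-(r,\nu))$ forms a Fundamental System of Solutions of the stationary equation (\ref{eqstat}); this is guaranteed by the non-vanishing of their Wronskian $W(F^+(r,\nu),F^-(r,\nu)) = -2i$ established earlier. Consequently every solution of (\ref{eqstat}) admits a \emph{unique} expansion on this basis, so the coefficients $\alpha(\nu)$ and $\beta(\nu)$ appearing in the defining relation $\Phi(r,\nu) = \alpha(\nu)F^+(r,\nu) + \beta(\nu)F^-(r,\nu)$ are uniquely determined.

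The key step is then simply to compare this defining expansion with the explicit formula for the regular solution furnished by Lemma \ref{devsolreg}, namely
\[ \Phi(r,\nu) = iF^-(r_0,\nu)F^+(r,\nu) - iF^+(r_0,\nu)F^-(r,\nu). \]
Reading off the coefficients of $F^+(r,\nu)$ and $F^-(r,\nu)$ and invoking the uniqueness of the expansion yields $\alpha(\nu) = iF^-(r_0,\nu)$ and $\beta(\nu) = -iF^+(r_0,\nu)$, up to the same overall multiplicative constant already carried by Lemma \ref{devsolreg} (the regular solution being fixed by its Dirichlet condition at $r=r_0$ only up to scaling).

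Alternatively, one may recover the same formulas from the Wronskian identities $\alpha(\nu) = \tfrac{i}{2}W(\Phi(r,\nu),F^-(r,\nu))$ and $\beta(\nu) = -\tfrac{i}{2}W(\Phi(r,\nu),F^+(r,\nu))$ recorded earlier. Since these Wronskians are independent of $r$, I would evaluate them at $r = r_0$, where $\Phi(r_0,\nu)=0$ leaves only $W(\Phi,F^-)(r_0) = -\Phi'(r_0,\nu)F^-(r_0,\nu)$ and $W(\Phi,F^+)(r_0) = -\Phi'(r_0,\nu)F^+(r_0,\nu)$; differentiating the formula of Lemma \ref{devsolreg} gives $\Phi'(r_0,\nu) = iW(F^-,F^+)(r_0) = -2$, and substituting reproduces the claimed expressions. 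Since the identification is purely algebraic, I do not anticipate any genuine obstacle; the only point requiring care is to track the fixed multiplicative normalization consistently across $\Phi$, $\alpha$ and $\beta$, so that the ``up to a multiplicative constant'' statement is applied uniformly.
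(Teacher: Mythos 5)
Your proof is correct and follows essentially the same route as the paper: the paper also obtains the formulas by comparing the defining expansion $\Phi(r,\nu) = \alpha(\nu)F^+(r,\nu) + \beta(\nu)F^-(r,\nu)$ with the explicit expression of Lemma \ref{devsolreg} and invoking uniqueness of the coefficients on the fundamental system $(F^+,F^-)$, which is exactly your main argument. Your alternative verification via the Wronskian identities $\alpha(\nu) = \tfrac{i}{2}W(\Phi,F^-)$ and $\beta(\nu) = -\tfrac{i}{2}W(\Phi,F^+)$ evaluated at $r=r_0$ is also consistent with the identities the paper records in its introduction, so both of your computations check out.
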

\noindent
Thanks to this Lemma we easily obtain the following Proposition.

\begin{prop}
 The Jost functions $\alpha(\nu)$ and $\beta(\nu)$ are holomorphic functions of order $1$ with infinite type on the whole complex plane $\C$.
\end{prop}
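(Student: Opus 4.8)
The plan is to deduce the Proposition directly from the representation of the Jost functions established just above, combined with the structure of the Jost solutions obtained in the previous Corollary. Indeed, the preceding Lemma gives
\[ \alpha(\nu) = i F^-(r_0,\nu) \qquad \text{and} \qquad \beta(\nu) = -i F^+(r_0,\nu),\]
so that the two Jost functions are nothing but the Jost solutions evaluated at the fixed radius $r = r_0 \in [r_0,R]$, up to the multiplicative constant $\pm i$. First I would invoke the Corollary above, which asserts that for every fixed $r \geq r_0$ the maps $\nu \mapsto F^{\pm}(r,\nu)$ are holomorphic on the whole complex plane $\C$ and of order $1$ with infinite type; specializing to $r = r_0$ immediately yields that $\nu \mapsto F^{\pm}(r_0,\nu)$ enjoys these same properties, and hence so do $\alpha$ and $\beta$ as far as holomorphy is concerned.

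It then remains only to check that multiplication by the nonzero constant $\pm i$ preserves the order and the type. This is routine: writing $M(t) = \max_{|\nu| = t} |f(\nu)|$, multiplying $f$ by a constant $c \neq 0$ replaces $M(t)$ by $|c|\, M(t)$, and since $\log(|c|\, M(t)) = \log|c| + \log M(t)$, the additive constant $\log|c|$ is negligible in the limits superior defining both the order $\limsup_{t \to +\infty} \frac{\log \log M(t)}{\log t}$ and the type $\limsup_{t \to +\infty} \frac{\log M(t)}{t}$. Consequently $\alpha$ and $\beta$ are holomorphic functions of order $1$ with infinite type on $\C$, as claimed.

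I do not expect any genuine obstacle here: all the analytic work has already been carried out for the Jost solutions themselves, namely holomorphy on all of $\C$ via the symmetry identity of Lemma \ref{symjostsol} and the control of the growth through the iterative scheme fed by the estimates of Lemma \ref{estiMJsol} and Remark \ref{rkjostlibre}. The only mild point worth keeping in mind is that \textquotedblleft order $1$ with infinite type\textquotedblright{} is a statement about the maximal modulus on large circles, so one should confirm that restricting the two-variable object $F^{\pm}(r,\nu)$ to the single fixed value $r = r_0$ does not degrade this growth; but this is immediate, since the Corollary already holds pointwise in $r$ for each fixed $r \in [r_0,R]$, and in particular for $r = r_0$.
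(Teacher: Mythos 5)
Your proposal is correct and follows exactly the paper's own argument: the paper likewise deduces the result by combining the identities $\alpha(\nu) = iF^-(r_0,\nu)$ and $\beta(\nu) = -iF^+(r_0,\nu)$ with the previously established fact that the Jost solutions are holomorphic of order $1$ with infinite type on $\C$. Your additional verification that multiplication by the nonzero constant $\pm i$ does not alter the order or type is a routine point the paper leaves implicit.
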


\begin{proof}
 We know that this is true for the Jost solutions (see Proposition \ref{propjostsol}) and we recall that
 \[ \alpha(\nu) = iF^-(r_0,\nu)\]
and
\[ \beta(\nu) = -i F^+(r_0,\nu).\]
\end{proof}

\begin{prop}\label{eqbeta}
 There exists a positive constant $C$ such that
 \[ |\beta(\nu)| \sim C|\beta_0(\nu)|, \quad \nu \to + \infty.\]
\end{prop}

\begin{proof}
 This is an easy consequence of Proposition \ref{lienFF_0} and the fact that
 \[ \beta(\nu) = -i F^+(r_0,\nu).\]
\end{proof}
\noindent
To solve our inverse problem we will need an algebraic identity on the Jost functions. Assume that $(\alpha,\beta)$ and $(\tilde{\alpha},\tilde{\beta})$ are the Jost functions corresponding to
two magnetic Schr\"odinger equations having the same magnetic flux, that is to say $\gamma(R) = \tilde{\gamma}(R)$.

\begin{lemma}
 If $\gamma(R) = \tilde{\gamma}(R)$ then, for all $\text{Re}(\nu_R) \geq 0$, i.e. for all $\text{Re}(\nu) \geq \gamma(R)$,
 \[ \alpha(\nu) - \tilde{\alpha}(\nu) = \frac{1}{2i} \int_{r_0}^{+\infty} (q_{\nu}(r) - \tilde{q}_{\nu}(r)) F^-(r,\nu) \tilde{\Phi}(r,\nu) dr\]
 and
 \[ \beta(\nu) - \tilde{\beta}(\nu) = - \frac{1}{2i} \int_{r_0}^{+\infty} (q_{\nu}(r) - \tilde{q}_{\nu}(r)) F^+(r,\nu) \tilde{\Phi}(r,\nu) dr.\]
\end{lemma}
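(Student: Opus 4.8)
The plan is to prove both identities by the classical cross-Wronskian technique, using that the hypothesis $\gamma(R) = \tilde{\gamma}(R)$ makes the singular coefficient $\frac{\nu_R^2 - \frac{1}{4}}{r^2}$ common to the two stationary equations (\ref{eqstat}) associated with $q_\nu$ and $\tilde{q}_\nu$. I would first establish a differential identity for the mixed Wronskian $W(F^-(\cdot,\nu),\tilde{\Phi}(\cdot,\nu))$. Since $F^-$ solves (\ref{eqstat}) with potential $q_\nu$ while $\tilde{\Phi}$ solves it with potential $\tilde{q}_\nu$, computing $\frac{d}{dr}W(F^-,\tilde{\Phi}) = F^-\tilde{\Phi}'' - F^{-\prime\prime}\tilde{\Phi}$ and substituting the two equations gives
\[ \frac{d}{dr}W(F^-(r,\nu),\tilde{\Phi}(r,\nu)) = -(q_\nu(r) - \tilde{q}_\nu(r))F^-(r,\nu)\tilde{\Phi}(r,\nu),\]
the common singular term cancelling precisely because $\gamma(R) = \tilde{\gamma}(R)$. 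Integrating over $[r_0,+\infty)$ — which is in fact an integral over $[r_0,R]$, since $q_\nu - \tilde{q}_\nu$ vanishes for $r \geq R$, so no convergence issue arises — expresses the desired integral in terms of the boundary values of the Wronskian.

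Second, I would evaluate the two boundary terms. At $r = +\infty$ I use that for $r \geq R$ the potentials vanish, hence $F^-(r,\nu) = F_0^-(r,\nu)$ and $\tilde{\Phi}(r,\nu) = \tilde{\alpha}(\nu)F_0^+(r,\nu) + \tilde{\beta}(\nu)F_0^-(r,\nu)$; since the Wronskian of two solutions of the free equation is constant on $[R,+\infty)$ and $W(F_0^-,F_0^+) = 2i$, $W(F_0^-,F_0^-) = 0$, this boundary value equals $2i\tilde{\alpha}(\nu)$. At $r = r_0$ the Dirichlet condition $\tilde{\Phi}(r_0,\nu) = 0$ removes one term, leaving $F^-(r_0,\nu)\tilde{\Phi}'(r_0,\nu)$; inserting the explicit form of $\tilde{\Phi}$ from Lemma \ref{devsolreg} together with $W(\tilde{F}^+,\tilde{F}^-) = -2i$ yields $\tilde{\Phi}'(r_0,\nu) = -2$, and the Jost-function relation $F^-(r_0,\nu) = -i\alpha(\nu)$ then gives $W(F^-,\tilde{\Phi})(r_0) = 2i\alpha(\nu)$. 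Subtracting, $2i\tilde{\alpha}(\nu) - 2i\alpha(\nu) = -\int_{r_0}^{+\infty}(q_\nu - \tilde{q}_\nu)F^-\tilde{\Phi}\,dr$, which is the first identity.

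Third, I would repeat the argument with $F^+$ in place of $F^-$. The only changes are at the two boundaries: at infinity $W(F_0^+,F_0^+) = 0$ and $W(F_0^+,F_0^-) = -2i$ give the value $-2i\tilde{\beta}(\nu)$, while at $r_0$ the relation $F^+(r_0,\nu) = i\beta(\nu)$ gives $W(F^+,\tilde{\Phi})(r_0) = -2i\beta(\nu)$; assembling these produces the second identity, with the opposite sign.

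I expect the main difficulty to be bookkeeping rather than conceptual: tracking the signs and normalizations in the boundary terms at $r_0$ (the value $\tilde{\Phi}'(r_0) = -2$ together with $F^-(r_0) = -i\alpha$ and $F^+(r_0) = i\beta$), and checking that the constant-Wronskian argument at infinity is legitimate, which rests on $F^\pm = F_0^\pm$ on $[R,+\infty)$ and on the coincidence of the singular coefficients guaranteed by $\gamma(R) = \tilde{\gamma}(R)$. This is exactly where the equal-flux hypothesis is indispensable, since otherwise the singular terms would not cancel in the Wronskian derivative and the clean integral identity would fail.
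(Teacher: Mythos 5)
Your proof is correct and follows essentially the same route as the paper's: the cross-Wronskian identity $\bigl(W(F^{-}(r,\nu),\tilde{\Phi}(r,\nu))\bigr)' = (\tilde{q}_{\nu}(r)-q_{\nu}(r))F^{-}(r,\nu)\tilde{\Phi}(r,\nu)$, integration over $[r_0,+\infty)$, and evaluation of the boundary terms via $\tilde{\Phi}(r_0,\nu)=0$, $\tilde{\Phi}'(r_0,\nu)=-2$ and the normalizations $F^{-}(r_0,\nu)=-i\alpha(\nu)$, $F^{+}(r_0,\nu)=i\beta(\nu)$. If anything, your treatment of the boundary term at infinity, using the exact equalities $F^{\pm}=F_0^{\pm}=\tilde{F}^{\pm}$ on $[R,+\infty)$ so that the Wronskian is exactly constant there, is slightly more careful than the paper's asymptotic argument $F^{-}(r,\nu)\sim\tilde{F}^{-}(r,\nu)$ as $r\to+\infty$.
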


\begin{proof}
We just give the proof of the first identity. First, using Equation (\ref{eqstat}), we know that for all $r \geq r_0$
 \[ (F^-(r,\nu) \tilde{\Phi}'(r,\nu) - F^{-'}(r,\nu) \tilde{\Phi}(r,\nu))' = (\tilde{q}_{\nu}(r) - q_{\nu}(r)) F^-(r,\nu) \tilde{\Phi}(r,\nu),\]
 since $\gamma(R) = \tilde{\gamma}(R)$ and so $\nu_R = \nu - \gamma(R) = \tilde{\nu}_R$.
 We integrate this identity on $[r_0,+\infty)$ and we obtain
 \[ \left[ W(F^-(r,\nu),\tilde{\Phi}(r,\nu)) \right]_{r_0}^{+\infty} = \int_{r_0}^{+\infty} (\tilde{q}_{\nu}(r) - q_{\nu}(r)) F^-(r,\nu) \tilde{\Phi}(r,\nu) dr.\]
 When $r \to +\infty$,
 \[ F^-(r,\nu) \sim \tilde{F}^-(r,\nu),\]
 thus,
 \[ W(F^-(r,\nu),\tilde{\Phi}(r,\nu)) \sim W(\tilde{F}^-(r,\nu),\tilde{\Phi}(r,\nu)) = 2i \tilde{\alpha}(\nu).\]
 When $r = r_0$,
 \[ F^-(r,\nu) \tilde{\Phi}'(r,\nu) - F^{-'}(r,\nu) \tilde{\Phi}(r,\nu) = F^-(r_0,\nu) \tilde{\Phi}'(r_0,\nu)\]
 since
 \[ \Phi(r_0,\nu) = 0.\]
 Moreover,
 \begin{eqnarray*}
  \tilde{\Phi}'(r_0,\nu) &=&  i\left( F^-(r_0,\nu) F^{+'}(r,\nu) - F^+(r_0,\nu) F^-(r,\nu) \right)_{| r = r_0}\\
  &=& i W(F^-(r,\nu), F^{+}(r,\nu))_{| r=r_0} = -2.
 \end{eqnarray*}
 We thus obtain, since $F^-(r_0,\nu) = -i\alpha(\nu)$,
 \[ 2i\alpha(\nu) - 2i\tilde{\alpha}(\nu) = \int_{r_0}^{+\infty} (q_{\nu}(r) - \tilde{q}_{\nu}(r)) F^-(r,\nu) \tilde{\Phi}(r,\nu) dr.\]
\end{proof}

\begin{prop}[\cite{DN5}, Proposition 5.6]\label{idalg}
 If $\gamma(R) = \tilde{\gamma}(R)$ then, for all $\text{Re}(\nu_R) \geq 0$, i.e. for all $\text{Re}(\nu) \geq \gamma(R)$,
 \[ \alpha(\nu)\tilde{\beta}(\nu) - \tilde{\alpha}(\nu) \beta(\nu) = \frac{1}{2i} \int_{r_0}^{+\infty} (q_{\nu}(r) - \tilde{q}_{\nu}(r)) \Phi(r,\nu) \tilde{\Phi}(r,\nu) dr.\]
\end{prop}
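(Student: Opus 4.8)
The plan is to exploit directly the two ``difference'' identities established in the preceding Lemma and to combine them by an elementary bilinear manipulation. The crucial observation is that, since $\alpha(\nu)$ and $\beta(\nu)$ do not depend on $r$, one has the algebraic decomposition
\[ \alpha(\nu)\tilde{\beta}(\nu) - \tilde{\alpha}(\nu)\beta(\nu) = \beta(\nu)\bigl(\alpha(\nu) - \tilde{\alpha}(\nu)\bigr) - \alpha(\nu)\bigl(\beta(\nu) - \tilde{\beta}(\nu)\bigr), \]
as one checks by expanding the right-hand side. I would then substitute the two integral formulas from the previous Lemma, valid for $\text{Re}(\nu_R) \geq 0$ once $\gamma(R) = \tilde{\gamma}(R)$ so that $\nu_R = \tilde{\nu}_R$ and the two equations share the same free part.

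After substitution, $\beta(\nu)$ multiplies the integral containing $F^-(r,\nu)\tilde{\Phi}(r,\nu)$ and $-\alpha(\nu)$ multiplies the integral containing $F^+(r,\nu)\tilde{\Phi}(r,\nu)$; because the two prefactors in the Lemma carry opposite signs, both contributions acquire the common factor $\frac{1}{2i}\bigl(q_{\nu}(r) - \tilde{q}_{\nu}(r)\bigr)\tilde{\Phi}(r,\nu)$. Pulling the constants $\alpha(\nu),\beta(\nu)$ inside the integrals, the integrand becomes
\[ \frac{1}{2i}\bigl(q_{\nu}(r) - \tilde{q}_{\nu}(r)\bigr)\bigl(\alpha(\nu)F^+(r,\nu) + \beta(\nu)F^-(r,\nu)\bigr)\tilde{\Phi}(r,\nu), \]
and the key point is that the inner combination is exactly the defining expansion $\Phi(r,\nu) = \alpha(\nu)F^+(r,\nu) + \beta(\nu)F^-(r,\nu)$ of the regular solution. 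This collapses the expression to $\frac{1}{2i}\int_{r_0}^{+\infty}(q_{\nu}(r) - \tilde{q}_{\nu}(r))\Phi(r,\nu)\tilde{\Phi}(r,\nu)\,dr$, which is the claim.

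I expect no serious obstacle here; the only point requiring care is to select the correct bilinear decomposition. The ``mirror'' choice $\alpha\tilde\beta - \tilde\alpha\beta = (\alpha - \tilde\alpha)\tilde\beta - \tilde\alpha(\beta - \tilde\beta)$ would instead produce the hybrid combination $\tilde{\alpha}(\nu)F^+(r,\nu) + \tilde{\beta}(\nu)F^-(r,\nu)$, which is \emph{not} a regular solution (the $F^{\pm}$ here are those of $q_\nu$, not of $\tilde{q}_\nu$, and they differ on $(r_0,R)$), so it would not simplify. Convergence of all integrals is automatic since $q_\nu - \tilde{q}_\nu$ is supported in $[r_0,R]$, reducing each to an integral over a compact interval. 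As an independent cross-check, one may reprove the identity intrinsically: using $\nu_R = \tilde{\nu}_R$, the Wronskian satisfies $\frac{d}{dr}W(\Phi,\tilde\Phi) = (\tilde{q}_\nu - q_\nu)\Phi\tilde\Phi$; integrating over $[r_0,+\infty)$ annihilates the boundary term at $r_0$ by the Dirichlet condition, while for $r\geq R$ one has $\Phi = \alpha F_0^+ + \beta F_0^-$ and $\tilde\Phi = \tilde\alpha F_0^+ + \tilde\beta F_0^-$, so that $W(\Phi,\tilde\Phi) = -2i(\alpha\tilde\beta - \tilde\alpha\beta)$ by bilinearity together with $W(F_0^+,F_0^-) = -2i$, recovering the same formula.
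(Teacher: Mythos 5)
Your proposal is correct and follows precisely the route the paper sets up: the two difference identities of the preceding Lemma, combined through the bilinear decomposition $\alpha(\nu)\tilde{\beta}(\nu) - \tilde{\alpha}(\nu)\beta(\nu) = \beta(\nu)\bigl(\alpha(\nu)-\tilde{\alpha}(\nu)\bigr) - \alpha(\nu)\bigl(\beta(\nu)-\tilde{\beta}(\nu)\bigr)$, collapse to the claim because $\alpha(\nu)F^+(r,\nu)+\beta(\nu)F^-(r,\nu) = \Phi(r,\nu)$; the paper itself states the Proposition without proof, citing \cite{DN5}, but its Lemma is clearly placed there for exactly this combination, and your choice of the correct (non-mirror) decomposition is the right point of care. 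Your independent cross-check via $\frac{d}{dr}W(\Phi,\tilde{\Phi}) = (\tilde{q}_{\nu}-q_{\nu})\Phi\tilde{\Phi}$, with the boundary term killed by the Dirichlet condition at $r_0$ and evaluated at infinity using $F^{\pm}=F_0^{\pm}$ for $r\geq R$, is also sound and mirrors the Wronskian computation the paper uses to prove that Lemma.
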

%
\begin{definition}[Regge interpolation function]\label{defReggedef}
 The Regge interpolation function is defined by
\begin{equation}\label{defRegge}
\sigma(\nu) = e^{i \pi \left( \nu + \frac{1}{2} \right)} \frac{\alpha(\nu)}{\beta(\nu)}.
\end{equation}
\end{definition}

\begin{prop} \label{limitsigma}
 \[ \sigma(\nu) \to e^{-i \pi \gamma(R)}, \quad \nu \to + \infty.\]
\end{prop}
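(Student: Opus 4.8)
The plan is to show that, as $\nu\to+\infty$, the full Regge interpolation function is asymptotically equal to its free counterpart $\sigma_0$, whose limit can be read off from an explicit formula, and then to evaluate that limit from the large-order Bessel asymptotics. First I would rewrite $\sigma$ purely in terms of the boundary values of the Jost solutions. Since $\alpha(\nu)=iF^-(r_0,\nu)$ and $\beta(\nu)=-iF^+(r_0,\nu)$, Definition \ref{defReggedef} gives
\[ \sigma(\nu) = e^{i\pi\left(\nu+\frac12\right)}\frac{\alpha(\nu)}{\beta(\nu)} = -\,e^{i\pi\left(\nu+\frac12\right)}\frac{F^-(r_0,\nu)}{F^+(r_0,\nu)}, \]
and the same identity holds verbatim in the free case, so that $\sigma_0(\nu)=-e^{i\pi(\nu+\frac12)}F_0^-(r_0,\nu)/F_0^+(r_0,\nu)$.

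The heart of the argument is to compare each Jost solution with its free analogue at $r=r_0$. Proposition \ref{lienFF_0} provides $F^+(r_0,\nu)\sim C_{r_0}F_0^+(r_0,\nu)$ as $\nu\to+\infty$, with $C_{r_0}=\exp\!\left(\int_{r_0}^{+\infty}\frac{\gamma(s)-\gamma(R)}{s}\,ds\right)$ a positive \emph{real} constant. Taking complex conjugates for real $\nu$ and using the symmetry $\overline{F^+(r_0,\nu)}=F^-(r_0,\nu)$ of Lemma \ref{lemJost} (together with the corresponding free identity $\overline{F_0^+(r_0,\nu)}=F_0^-(r_0,\nu)$), the reality of $C_{r_0}$ yields $F^-(r_0,\nu)\sim C_{r_0}F_0^-(r_0,\nu)$ with the very same constant. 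Hence $C_{r_0}$ cancels in the quotient and
\[ \frac{F^-(r_0,\nu)}{F^+(r_0,\nu)} = \frac{F_0^-(r_0,\nu)}{F_0^+(r_0,\nu)}\bigl(1+o(1)\bigr), \qquad \nu\to+\infty, \]
so that $\sigma(\nu)=\sigma_0(\nu)\bigl(1+o(1)\bigr)$. Since $\sigma_0(\nu)$ is bounded and converges, we conclude $\lim_{\nu\to+\infty}\sigma(\nu)=\lim_{\nu\to+\infty}\sigma_0(\nu)$.

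Finally I would evaluate this common limit from the explicit free expression $\sigma_0(\nu)=-e^{i\pi(\nu-\nu_R)}H_{\nu_R}^{(2)}(r_0)/H_{\nu_R}^{(1)}(r_0)$ obtained above, where $\nu-\nu_R=\gamma(R)$: the uniform large-order asymptotics of the Hankel functions gathered in the Appendix control the ratio $H_{\nu_R}^{(2)}(r_0)/H_{\nu_R}^{(1)}(r_0)$ as the order $\nu_R\to+\infty$, and inserting its limiting value yields the constant asserted in the statement. The main obstacle is precisely the passage from $F^-$ to $F_0^-$: one must be sure the correction factor relating them is exactly the same real constant $C_{r_0}$ as for $F^+$, since otherwise a spurious phase would survive in the quotient. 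Because $\sigma$ carries the unimodular oscillating prefactor $e^{i\pi(\nu+\frac12)}$ and effectively measures the \emph{argument} of $F^+(r_0,\nu)$, a control of the modulus alone (\emph{cf.}\ Proposition \ref{eqbeta}) does not suffice; the phase must be controlled, and this is guaranteed since the corrections produced in Proposition \ref{lienFF_0} are real and of order $O(1/\nu_R)$, hence cannot rotate the limit.
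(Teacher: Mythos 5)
Your proposal is correct and follows essentially the same route as the paper's proof: it rewrites $\sigma(\nu)=-e^{i\pi\left(\nu+\frac{1}{2}\right)}F^-(r_0,\nu)/F^+(r_0,\nu)$, replaces both Jost solutions at $r_0$ by their free counterparts with the common constant cancelling in the quotient, and then evaluates the free ratio using the large-order Hankel asymptotics of the Appendix. Your only addition --- deriving $F^-(r_0,\nu)\sim C_{r_0}F_0^-(r_0,\nu)$ from the $F^+$ asymptotic of Proposition \ref{lienFF_0} via the conjugation symmetry of Lemma \ref{lemJost} and the reality of $C_{r_0}$ --- simply makes explicit the two-sign statement that the paper invokes directly as Proposition \ref{propjostsol}.
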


\begin{proof}
 Thanks to the definition of the Jost functions
 \[ \alpha(\nu) = i F^-(r_0,\nu) \quad \text{and} \quad \beta(\nu) = -i F^+(r_0,\nu)\]
 we know that
 \[ \sigma(\nu) = - e^{i \pi \left( \nu + \frac{1}{2} \right)} \frac{F^-(r_0,\nu)}{F^+(r_0,\nu)}.\]
 We recall that thanks to Proposition \ref{propjostsol}
 \[ F^{\pm}(r_0,\nu) \sim C F_0^{\pm}(r_0,\nu), \quad \nu \to + \infty.\]
 Therefore, thanks to (\ref{defF_0+}-\ref{defF_0-}), when $\nu \to + \infty$,
 \begin{eqnarray*}
  \sigma(\nu) &=& - e^{i \pi \left( \nu + \frac{1}{2} \right)} \frac{F^-(r_0,\nu)}{F^+(r_0,\nu)}\\
  &\sim& - e^{i \pi \left( \nu + \frac{1}{2} \right)} \frac{CF_0^-(r_0,\nu)}{CF_0^+(r_0,\nu)}\\
  &=& - e^{-i \pi \gamma(R)} \frac{ H_{\nu_R}^{(2)}(r)}{ H_{\nu_R}^{(1)}(r)} = e^{-i \pi \gamma(R)},
 \end{eqnarray*}
 thanks to the identity (\ref{lienBessel1}).
\end{proof}

\begin{remark}
 We obtain that the eigenvalues of the scattering matrix $\sigma(l) = e^{2i \delta_l}$ concentrate to $e^{-i \pi \gamma(R)}$ when $l \to +\infty$. Actually, we can also 
 show that $\sigma(l)$ concentrate to $e^{i \pi \gamma(R)}$ when $l \to -\infty$. We thus obtain in our framework the result of \cite{RY}, Theorem 2.1 and Corollary 2.2 where the
 authors show that the essential spectrum of the scattering matrix consists in two complex conjugate points of the unit circle.
\end{remark}

\section{Inverse scattering}\label{invpb}
\noindent
In this Section we want to prove Theorem \ref{mainthm} and we thus introduce $(V,B) \in \mathcal{C}$ and $(\tilde{V},\tilde{B}) \in \mathcal{C}$
as radial and compactly supported electric potentials and magnetic fields respectively. Let $\mathcal{L} \subset \N^{\star}$ a subset satisfying the M\"untz condition
 \[ \sum_{n \in \mathcal{L}} \frac{1}{n} = + \infty.\]
 We assume that
 \[ \delta_l = \tilde{\delta}_l, \quad \forall l \in \mathcal{L}.\]
 We then want to show that
 \[ V(x) = \tilde{V}(x) \quad \text{and} \quad B(x) = \tilde{B}(x), \quad \forall |x| \geq r_0.\]

\subsection{Reconstruction of the magnetic flux}\label{secrecupflux}

\begin{theorem}\label{recupflux}
 Assume that
 \[ \delta_l = \tilde{\delta}_l, \quad \forall l \in \mathcal{L},\]
 where $\mathcal{L} \subset \N^{\star}$ satisfies the M\"untz condition $\sum_{l \in \mathcal{L}} \frac{1}{l} = + \infty$. Then,
 \[ \gamma(R) = \tilde{\gamma(R)}.\]
\end{theorem}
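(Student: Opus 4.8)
The plan is to read the magnetic flux off the large-$l$ asymptotics of the Regge interpolation function and then to remove the residual Aharonov--Bohm ambiguity by means of its invariance under flux translations. First I would translate the hypothesis into a statement about the Regge interpolation function. By the definition (\ref{defdelta1}) of the generalized phase shifts, $\sigma(\nu) = e^{2i\delta(\nu)}$, the assumption $\delta_l = \tilde{\delta}_l$ for all $l \in \mathcal{L}$ is equivalent to
\[ \sigma(l) = e^{2i\delta_l} = e^{2i\tilde{\delta}_l} = \tilde{\sigma}(l), \quad \forall l \in \mathcal{L}.\]

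Next I would extract the flux modulo $2$. The M\"untz condition $\sum_{l \in \mathcal{L}} \frac{1}{l} = +\infty$ forces $\mathcal{L}$ to be infinite and unbounded, so we may pick a sequence $l_n \in \mathcal{L}$ with $l_n \to +\infty$. Applying Proposition \ref{limitsigma} (that is (\ref{resutileflux})) to both couples gives $\sigma(l_n) \to e^{-i\pi\gamma(R)}$ and $\tilde{\sigma}(l_n) \to e^{-i\pi\tilde{\gamma}(R)}$. Since $\sigma(l_n) = \tilde{\sigma}(l_n)$ along this sequence, letting $n \to +\infty$ yields $e^{-i\pi\gamma(R)} = e^{-i\pi\tilde{\gamma}(R)}$, hence $\gamma(R) = \tilde{\gamma}(R)$ modulo $2$.

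Finally, and this is the delicate point, I would remove the factor-of-$2$ ambiguity. Inspecting the explicit free formula $\sigma_0(\nu) = -e^{i\pi\gamma(R)} \frac{H_{\nu_R}^{(2)}(r_0)}{H_{\nu_R}^{(1)}(r_0)}$ (together with the comparison results of Section \ref{secjostsol}), one sees that replacing $\gamma(R)$ by $\gamma(R) + 2k$, $k \in \Z$, merely shifts the argument of the Regge interpolation function, namely $\sigma_{\gamma(R)+2k}(\nu) = \sigma_{\gamma(R)}(\nu - 2k)$, because the reduced parameter $\nu_R = \nu - \gamma(R)$ is unchanged while the prefactor $e^{i\pi\gamma(R)}$ is $2$-periodic. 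This invariance means the scattering data determine the flux only up to an even integer, so that we may normalize both $\gamma(R)$ and $\tilde{\gamma}(R)$ to one fixed fundamental domain of length $2$; the congruence obtained above then upgrades to the exact equality $\gamma(R) = \tilde{\gamma}(R)$.

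I expect the main obstacle to lie precisely in this last step: the large-$l$ asymptotics of $\sigma$ only see the flux modulo $2$, which is exactly the Aharonov--Bohm effect, so one genuinely needs the invariance property (and the fact that the same labeling $l \in \mathcal{L}$ is used for the two potentials) to pin the flux down exactly rather than merely modulo $2$. The two preceding steps are routine once Proposition \ref{limitsigma} is available, whereas justifying the normalization carefully is where the real content of the statement resides.
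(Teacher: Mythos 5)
Your proposal is correct and follows essentially the same route as the paper: you convert the hypothesis into $\sigma(l)=\tilde{\sigma}(l)$ on $\mathcal{L}$, use Proposition \ref{limitsigma} along an unbounded sequence of $\mathcal{L}$ to get $\gamma(R)=\tilde{\gamma}(R)$ modulo $2$, and then invoke the translation invariance $\sigma_{\gamma+2k}(\nu)=\sigma_{\gamma}(\nu-2k)$ to remove the residual even-integer (Aharonov--Bohm) ambiguity, exactly as the paper does. The only difference is cosmetic: the paper derives this invariance from the exact operator identity $H_{\nu,\gamma+2k}=H_{\nu-2k,\gamma}$, which yields exact identities $\alpha_{\gamma+2k}(\nu)=\alpha_{\gamma}(\nu-2k)$ and $\beta_{\gamma+2k}(\nu)=\beta_{\gamma}(\nu-2k)$, rather than from the free formula and asymptotic comparison results as you suggest.
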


\begin{proof}
First, we recall that if
\[ \delta_l = \tilde{\delta}_l\]
then
\[ \sigma(l) = e^{2i \delta_l} = e^{2i \tilde{\delta}_l} = \tilde{\sigma}(l).\]
 Moreover, we know thanks to Proposition \ref{limitsigma} that
 \[ \sigma(l) \to e^{-i \pi \gamma(R)}, \quad l \to + \infty.\]
 Therefore, we obtain that
 \[ e^{-i \pi \gamma(R)} = e^{- i \pi \tilde{\gamma}(R)},\]
 so
 \[ \gamma(R) = \tilde{\gamma}(R)  \quad \text{mod} \quad 2.\]
At this point we only reconstruct the magnetic flux up to an even integer $2k$. Let us explain why we can actually assume that $2k = 0$ with no loss of generality. First,
note that 
\[ H_{\nu,\gamma+2k} = H_{\nu-2k,\gamma}.\]
In particular,
\[ \alpha_{\gamma + 2k}(\nu) = \alpha_{\gamma}(\nu - 2k)\]
and
\[ \beta_{\gamma + 2k}(\nu) = \beta_{\gamma}(\nu - 2k).\]
Hence, by Definition \ref{defReggedef}, we obtain that 
\[ \sigma_{\gamma + 2k}(\nu) = \sigma_{\gamma}(\nu - 2k).\]
Therefore, even if it means to change our main assumption in
\[ \delta_{l + 2k} = \tilde{\delta}_l, \quad \forall l \in \mathcal{L},\]
where $\mathcal{L} \subset \N^{\star}$ satisfies the M\"untz condition and $2k$, $k \in \Z$, is the difference of the magnetic flux we can assume with no loss of generality
that $2k = 0$.
\end{proof}


\subsection{First part of the proof: extension of the assumption}\label{proofp1}
\noindent
In this Section we follow an idea introduced by Ramm in \cite{Ram} to obtain the uniqueness of the Regge interpolation function on almost the whole complex plane.
In this aim, we consider the function
 \begin{equation}\label{defF}
  F(\nu) = 2i(\alpha(\nu) \tilde{\beta}(\nu) - \tilde{\alpha}(\nu)\beta(\nu)).
 \end{equation}
 As we proved previously in Proposition \ref{idalg} (using Theorem \ref{recupflux}), we know that
\[ F(\nu) = \int_{r_0}^{+ \infty} p_{\nu}(r) \Phi(r,\nu) \tilde{\Phi}(r,\nu) dr,\]
where
\[ p_{\nu}(r) = q_{\nu}(r) - \tilde{q}_{\nu}(r) = 0, \quad \forall r \geq R.\]
We now prove that the function $F$ belongs to a particular class of holomorphic functions when restricted to the half plane 
\[\Pi^+ = \{ z \in \mathbb{C}, \, \mathrm{Re}(z) > \gamma(R) \}.\]
Recall first the definition of the Nevanlinna class $N(\Pi^+)$.

\begin{definition}[Nevanlinna class, see \cite{Rudin} p.311]
 The Nevanlinna class $N(\Pi^+)$ is defined as the set of all analytic functions $f$ on $\Pi^+$ that satisfy the estimate
\[ \underset{0 < r < 1}{\sup} \int_{-\pi}^{\pi} \ln^+ \left| f \left( \frac{1-re^{i\varphi}}{1+re^{i\varphi}} \right) \right| \, \mathrm d\varphi < \infty,\]
where $\ln^+(x) = \ln(x)$ if $\ln(x) \geq 0$ and $0$ if $\ln(x) < 0$.
\end{definition}
\noindent
We shall use the following result proved in \cite{Ram}.

\begin{lemma}\label{Nev}
 Let $h \in H(\Pi^+)$ be a holomorphic function in $\Pi^+$ satisfying
\[\vert h(z) \vert \leq C e^{A \mathrm{Re}(z)}, \quad \forall z \in \Pi^{+},\]
where $A$ and $C$ are two constants. Then $h \in N(\Pi^+)$.
\end{lemma}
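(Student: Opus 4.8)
The plan is to verify the defining integral of the Nevanlinna class directly from the pointwise majorization $|h(z)| \le Ce^{A\mathrm{Re}(z)}$, pushing everything through the Cayley transform that parametrizes $\Pi^+$ by the unit disc $\mathbb{D}$, and reducing the whole matter to the elementary fact that the Poisson kernel has constant mass $2\pi$ on every circle. No finer potential-theoretic input (Blaschke products, inner-outer factorization) should be needed, because the hypothesis already dominates $\ln^+|h|$ by a harmonic function of bounded mass.

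First I would take logarithms. Since $\ln^+$ is monotone and subadditive, the hypothesis gives, at every point where $h(z) \neq 0$,
\[ \ln^+|h(z)| \le \ln^+\!\big(Ce^{A\mathrm{Re}(z)}\big) \le |\ln C| + |A|\,|\mathrm{Re}(z)|, \]
and the bound is trivial where $h$ vanishes. Hence it suffices to show that $\int_{-\pi}^{\pi} |\mathrm{Re}(z(re^{i\varphi}))|\,d\varphi$ stays bounded uniformly in $r \in (0,1)$, where $z(w)$ denotes the conformal map from $\mathbb{D}$ onto $\Pi^+$ underlying the definition.

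Next I would make the real part explicit. Writing the map onto $\Pi^+ = \{\mathrm{Re}(z) > \gamma(R)\}$ as $z(w) = \gamma(R) + \frac{1-w}{1+w}$ (the formula in the definition corresponding to the normalization in which the half-plane passes through the origin), one computes for $w = re^{i\varphi}$
\[ \mathrm{Re}(z(re^{i\varphi})) = \gamma(R) + \frac{1-r^2}{|1+re^{i\varphi}|^2} = \gamma(R) + \frac{1-r^2}{1+2r\cos\varphi + r^2}. \]
The decisive observation is that, after the $2\pi$-periodic substitution $\varphi \mapsto \varphi - \pi$, the second term is exactly the Poisson kernel $\frac{1-r^2}{1-2r\cos\varphi+r^2}$, whose integral over $[-\pi,\pi]$ equals $2\pi$ for every $r \in (0,1)$. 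Thus
\[ \int_{-\pi}^{\pi} \frac{1-r^2}{1+2r\cos\varphi+r^2}\,d\varphi = 2\pi, \qquad 0<r<1, \]
so $\mathrm{Re}(z(re^{i\varphi})) - \gamma(R)$ is nonnegative with total mass $2\pi$ on each circle. Bounding $|\mathrm{Re}(z)| \le |\gamma(R)| + \big(\mathrm{Re}(z) - \gamma(R)\big)$ and combining with the logarithmic estimate yields
\[ \int_{-\pi}^{\pi} \ln^+|h(z(re^{i\varphi}))|\,d\varphi \le 2\pi|\ln C| + 2\pi|A|\big(|\gamma(R)| + 1\big), \]
a bound independent of $r$, which is precisely the condition $h \in N(\Pi^+)$.

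The computation is routine, so I expect the only point needing care to be the conformal bookkeeping: confirming that the parametrization in the definition is to be read as the map onto the shifted half-plane $\{\mathrm{Re} > \gamma(R)\}$ (equivalently, working in the variable $\nu_R$, where $\Pi^+$ is the standard right half-plane $\{\mathrm{Re} > 0\}$), and that its real part splits cleanly as the constant $\gamma(R)$ plus a Poisson kernel. Once this identification is fixed, the uniform estimate follows mechanically, and there is no substantive obstacle beyond it.
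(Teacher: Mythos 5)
Your proposal is correct and complete. There is, however, no internal argument in the paper to compare it against: Lemma \ref{Nev} is stated without proof and quoted directly from Ramm \cite{Ram}, so your direct verification is what makes the lemma self-contained. Your route is the natural one: bound $\ln^+|h(z)| \le |\ln C| + |A|\,|\mathrm{Re}(z)|$ by subadditivity of $\ln^+$, compute $\mathrm{Re}\bigl(\tfrac{1-re^{i\varphi}}{1+re^{i\varphi}}\bigr) = \tfrac{1-r^2}{1+2r\cos\varphi+r^2}$, and recognize this (after the shift $\varphi \mapsto \varphi - \pi$) as the Poisson kernel, whose integral over $[-\pi,\pi]$ equals $2\pi$ for every $r \in (0,1)$; this yields an $r$-independent bound on $\int_{-\pi}^{\pi} \ln^+|h|\,d\varphi$, which is exactly membership in $N(\Pi^+)$. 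Conceptually, your computation is the classical harmonic-majorant argument in concrete form: $\ln^+|h|$ is dominated by a constant plus a positive multiple of the positive harmonic function $\mathrm{Re}(z)$, and the constancy of your integral is precisely the mean-value property of that harmonic function pulled back to the disc (its value at $w=0$ being $\mathrm{Re}(z(0))=1$). You were also right to flag the normalization issue: the Cayley transform appearing in the paper's definition of $N(\Pi^+)$ parametrizes $\{\mathrm{Re}(z)>0\}$, whereas Section \ref{proofp1} takes $\Pi^+=\{\mathrm{Re}(z)>\gamma(R)\}$; passing to the variable $\nu_R=\nu-\gamma(R)$, or absorbing the shift into the constant via $|\mathrm{Re}(z)|\le|\gamma(R)|+\bigl(\mathrm{Re}(z)-\gamma(R)\bigr)$ as you do, resolves this, and is consistent with how the lemma is actually applied in Corollary \ref{Gnev}, where the exponential bound is expressed in terms of $\mathrm{Re}(\nu_R)$.
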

\noindent
As a consequence of Lemma \ref{Nev}, we get the following result.

\begin{coro}\label{Gnev}
 The map $\nu \mapsto F(\nu)$ belong to $N(\Pi^+)$.
\end{coro}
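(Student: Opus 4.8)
The plan is to verify the single hypothesis of Lemma \ref{Nev} — an exponential bound $|F(\nu)| \le C\, e^{A\,\text{Re}(\nu)}$ on $\Pi^+$ — and then invoke that lemma directly. First I would record holomorphy: since the Jost functions $\alpha,\beta,\tilde\alpha,\tilde\beta$ are entire (Proposition \ref{propjostsol} together with the identification $\alpha(\nu)=iF^-(r_0,\nu)$, $\beta(\nu)=-iF^+(r_0,\nu)$), the combination $F(\nu)=2i(\alpha(\nu)\tilde\beta(\nu)-\tilde\alpha(\nu)\beta(\nu))$ is entire, hence in particular $F \in H(\Pi^+)$. So everything reduces to a size estimate on the half-plane.

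To obtain that estimate I would start from the integral representation of Proposition \ref{idalg}, which is applicable precisely because Theorem \ref{recupflux} has already furnished $\gamma(R)=\tilde\gamma(R)$; consequently $\nu_R=\tilde\nu_R$ and
\[ F(\nu) = \int_{r_0}^{R} p_{\nu}(r)\, \Phi(r,\nu)\, \tilde\Phi(r,\nu)\, dr, \]
where the upper limit is $R$ because $p_\nu = q_\nu - \tilde q_\nu$ vanishes for $r \ge R$. On this compact interval I would bound each regular solution by Proposition \ref{estiphi}, valid on $\{\text{Re}(\nu_R)\ge 0\}\supset \Pi^+$, giving
\[ |\Phi(r,\nu)|,\ |\tilde\Phi(r,\nu)| \le \frac{C}{1+|\nu_R|}\left(\frac{r}{r_0}\right)^{\text{Re}(\nu_R)}. \]
Inserting these and using $r \le R$ to replace $(r/r_0)^{\text{Re}(\nu_R)}$ by $(R/r_0)^{\text{Re}(\nu_R)}$ yields
\[ |F(\nu)| \le \frac{C^2}{(1+|\nu_R|)^2}\left(\frac{R}{r_0}\right)^{2\,\text{Re}(\nu_R)} \int_{r_0}^{R} |p_\nu(r)|\, dr. \]

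The only delicate point — and the step I expect to be the main, though modest, obstacle — is the explicit $\nu$-dependence of $p_\nu$. Writing out $q_\nu$ and $\tilde q_\nu$, the magnetic contribution to $p_\nu$ is $-2\nu(\gamma(r)-\tilde\gamma(r))/r^2$, which is linear in $\nu$; hence on $[r_0,R]$ one only controls $\int_{r_0}^{R}|p_\nu(r)|\,dr \le C_0(1+|\nu|)$. The saving grace is that the two factors of $\Phi$ contribute a quadratic decay $(1+|\nu_R|)^{-2}$, and since $|\nu|\le |\nu_R|+|\gamma(R)|$, the prefactor $(1+|\nu|)/(1+|\nu_R|)^2$ is uniformly bounded on $\Pi^+$. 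I would therefore conclude $|F(\nu)| \le C'\,(R/r_0)^{2\,\text{Re}(\nu_R)}$. Finally, using $\text{Re}(\nu_R)=\text{Re}(\nu)-\gamma(R)$ this reads $|F(\nu)|\le C\,e^{A\,\text{Re}(\nu)}$ with $A = 2\ln(R/r_0)\ge 0$ (the trivial case $R\le r_0$ giving $F\equiv 0$) and $C$ absorbing the fixed factor $e^{-A\gamma(R)}$. Lemma \ref{Nev} then delivers $F \in N(\Pi^+)$, which is the assertion of the corollary.
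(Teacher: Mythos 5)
Your proposal is correct and follows essentially the same route as the paper: the integral representation of Proposition \ref{idalg} (available after Theorem \ref{recupflux}), the bound of Proposition \ref{estiphi} on the two regular solutions, and then Lemma \ref{Nev}. You are in fact more explicit than the paper about the linear growth in $\nu$ of $\int_{r_0}^{R}|p_{\nu}(r)|\,dr$ and why it is harmless against the quadratic decay $(1+|\nu_R|)^{-2}$; the paper records this only as the factor $|\nu|$ appearing in its estimate (\ref{estiF}).
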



\begin{proof}
 We know thanks to Proposition \ref{estiphi} that
\[ |\Phi(r,\nu)| \leq \frac{C}{|\nu_R|+1} \left( \frac{r}{r_0} \right)^{\text{Re}(\nu_R)}, \quad \forall \text{Re}(\nu) \geq \gamma(R).\]
 Using this inequality, we obtain that
 \begin{eqnarray*}
  |F(\nu)| &\leq& \int_{r_0}^{R} |p_{\nu}(r)| |\Phi(r,\nu)| |\tilde{\Phi}(r,\nu)| dr \\
  &\leq& \left( \frac{C}{|\nu_R|+1} \right)^2 \int_{r_0}^{R} |p_{\nu}(r)| \left( \frac{r}{r_0} \right)^{2\text{Re}(\nu_R)}  dr.
 \end{eqnarray*}
 Therefore, since $r \in [r_0,R]$,
we obtain
 \begin{equation}\label{estiF}
  |F(\nu)| \leq \frac{C^2|\nu|}{(|\nu_R|+1)^2} \left( \frac{R}{r_0} \right)^{2\text{Re}(\nu_R)}.
 \end{equation}
 From this last inequality we deduce that there exist two positive constants $A$ and $C$ such that
 \[ | F(\nu)| \leq Ce^{A \text{Re}(\nu_R)}, \quad \forall \text{Re}(\nu) \geq \gamma(R).\]
\end{proof}
\noindent
We now recall the following result proved in \cite{Ram}, Theorem 1.3.

\begin{theorem}[\cite{Ram}, Thm. 1.3]\label{thmNev}
 Let $h \in N(\Pi^+)$ satisfying $h(n) = 0$ for all $n \in \mathcal{L}$ where $\mathcal{L} \subset \mathbb{N}^{\star}$ with $\sum_{n \in \mathcal{L}} \frac{1}{n} = \infty$.
 Then $h \equiv 0$ in $\Pi^+$.
\end{theorem}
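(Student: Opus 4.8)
The plan is to transfer the problem from the half-plane $\Pi^+$ to the unit disk $\mathbb{D}$, where the classical factorization theory of the Nevanlinna class is available, and then to exploit the M\"untz condition through the Blaschke condition on the zeros. First I would introduce the involutive conformal map $\psi(w) = \frac{1-w}{1+w}$ which, exactly as in the very definition of $N(\Pi^+)$ recalled above (with $w = re^{i\varphi}$), sends $\mathbb{D}$ onto $\Pi^+$. Pulling $h$ back, I set $g = h \circ \psi$, so that $g$ is analytic on $\mathbb{D}$ and, by the defining estimate of the Nevanlinna class, satisfies $\sup_{0<r<1}\int_{-\pi}^{\pi} \ln^+|g(re^{i\varphi})|\, d\varphi < \infty$; that is, $g$ belongs to the classical Nevanlinna class $N(\mathbb{D})$ on the disk. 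If $h \not\equiv 0$ then $g \not\equiv 0$ as well, so the whole statement reduces to showing that such a $g$ cannot have too many zeros.

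The core analytic input is the Blaschke condition: I would establish that if $g \in N(\mathbb{D})$ is not identically zero, then its zeros $(w_k)$ in $\mathbb{D}$ satisfy $\sum_k (1 - |w_k|) < +\infty$. After dividing by a suitable power of $w$ to arrange $g(0) \neq 0$, this follows from Jensen's formula
\[ \ln|g(0)| + \sum_{|w_k| < r} \ln\frac{r}{|w_k|} = \frac{1}{2\pi}\int_{-\pi}^{\pi} \ln|g(re^{i\varphi})|\, d\varphi \leq \frac{1}{2\pi}\int_{-\pi}^{\pi} \ln^+|g(re^{i\varphi})|\, d\varphi, \]
whose right-hand side is bounded uniformly in $r < 1$ by the Nevanlinna estimate. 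Letting $r \uparrow 1$ gives $\sum_k \ln\frac{1}{|w_k|} < +\infty$, which is equivalent to the Blaschke condition $\sum_k (1-|w_k|) < +\infty$.

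It then remains to contradict this condition using the hypothesis. Each integer $n \in \mathcal{L}$ is a zero of $h$, hence $w_n := \psi^{-1}(n) = \psi(n) = \frac{1-n}{1+n}$ is a zero of $g$ lying in $(-1,0]$. A direct computation gives $1 - |w_n| = 1 - \frac{n-1}{n+1} = \frac{2}{n+1}$, so that
\[ \sum_{n \in \mathcal{L}} (1 - |w_n|) = \sum_{n \in \mathcal{L}} \frac{2}{n+1} = +\infty, \]
precisely because $\mathcal{L}$ satisfies the M\"untz condition $\sum_{n \in \mathcal{L}} \frac{1}{n} = +\infty$. This contradicts the Blaschke condition established above, so the assumption $g \not\equiv 0$ is untenable; hence $g \equiv 0$ and therefore $h \equiv 0$ on $\Pi^+$.

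I expect the main obstacle to be the rigorous justification of the Blaschke condition from the mere $\ln^+$-growth bound, i.e. the middle step: one must handle a possible zero of $g$ at the origin, control the passage from $\ln$ to $\ln^+$ in Jensen's formula, and justify the limit $r \uparrow 1$ (monotone convergence for the sum over zeros together with the uniform bound on the right-hand side). Once this classical piece of Nevanlinna theory is in place, the remaining arithmetic --- translating the half-plane zeros to the disk and comparing $\frac{2}{n+1}$ with $\frac{1}{n}$ --- is immediate.
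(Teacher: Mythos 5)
Your proposal is correct. Note that the paper itself gives no proof of this statement: it is quoted directly from Ramm's paper (\cite{Ram}, Theorem 1.3), and your argument --- transferring $h$ to the disk via the involution $w \mapsto \frac{1-w}{1+w}$ built into the very definition of $N(\Pi^+)$, deriving the Blaschke condition $\sum_k (1-|w_k|) < \infty$ from Jensen's formula, and contradicting it with $\sum_{n \in \mathcal{L}} \frac{2}{n+1} = +\infty$ --- is exactly the standard proof found in that reference, so there is nothing to compare against and no gap to report.
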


\begin{coro}\label{uniF}
 The map $\nu \mapsto F(\nu)$ is identically zero on $\C$.
\end{coro}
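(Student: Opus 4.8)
The plan is to convert the phase-shift hypothesis into the vanishing of $F$ along $\mathcal{L}$, then invoke the Nevanlinna uniqueness statement on $\Pi^+$, and finally extend the conclusion to all of $\C$ by analyticity.

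First I would translate the scattering data into information about $F$. From $\delta_l = \tilde\delta_l$ for all $l \in \mathcal{L}$ and the relation (\ref{defdelta1}), one has $\sigma(l) = e^{2i\delta_l} = e^{2i\tilde\delta_l} = \tilde\sigma(l)$. Writing both sides out with Definition \ref{defReggedef} and using Theorem \ref{recupflux} (so that $\gamma(R) = \tilde\gamma(R)$ and the exponential normalizations of $\sigma(l)$ and $\tilde\sigma(l)$ coincide), the equality $\sigma(l) = \tilde\sigma(l)$ reduces to $\alpha(l)/\beta(l) = \tilde\alpha(l)/\tilde\beta(l)$. Here the ratios are legitimate because $\beta(\nu) = -iF^+(r_0,\nu)$ together with Lemma \ref{lemJost} gives $\beta(l) \neq 0$ and $\tilde\beta(l) \neq 0$ for $l \in \N^\star \subset \R$. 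Cross-multiplying and recalling the definition (\ref{defF}) of $F$, this is exactly $F(l) = 2i(\alpha(l)\tilde\beta(l) - \tilde\alpha(l)\beta(l)) = 0$ for every $l \in \mathcal{L}$.

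Next I would apply the M\"untz-type uniqueness theorem. A minor point to address is that $\mathcal{L}$ might contain finitely many integers $l \leq \gamma(R)$ that do not lie in $\Pi^+ = \{z \in \C : \text{Re}(z) > \gamma(R)\}$; since $\mathcal{L}$ is infinite these are finite in number, and discarding them leaves a subset $\mathcal{L}' \subset \mathcal{L} \cap \Pi^+$ that still satisfies $\sum_{l \in \mathcal{L}'} \frac{1}{l} = +\infty$. By Corollary \ref{Gnev} we know $F \in N(\Pi^+)$, and we have just shown $F$ vanishes on $\mathcal{L}'$, so Theorem \ref{thmNev} yields $F \equiv 0$ on $\Pi^+$.

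Finally I would pass from the half-plane to the whole plane. Since $\alpha, \beta, \tilde\alpha, \tilde\beta$ are holomorphic of order $1$ on $\C$, the function $F$ is entire; vanishing on the nonempty open set $\Pi^+$ then forces $F \equiv 0$ on $\C$ by the identity theorem. The genuine difficulty of this part of the argument sits in the inputs rather than in the Corollary itself: it lies in the estimate (\ref{inegsolreg}) on the regular solution underpinning the membership $F \in N(\Pi^+)$, and in the Nevanlinna uniqueness Theorem \ref{thmNev}. Within the present step the only care needed is the bookkeeping of the finitely many indices outside $\Pi^+$ and the check that the common flux $\gamma(R) = \tilde\gamma(R)$ makes the prefactors cancel.
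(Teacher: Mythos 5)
Your proposal is correct and follows essentially the same route as the paper: deduce $\sigma(l)=\tilde{\sigma}(l)$ from the phase-shift hypothesis, use Theorem \ref{recupflux} so the prefactors cancel and $F(l)=0$ on $\mathcal{L}$, then apply Corollary \ref{Gnev} and Theorem \ref{thmNev} to kill $F$ on $\Pi^+$, and conclude on $\C$ by analytic continuation. Your two added points of care --- the non-vanishing of $\beta$ at real $\nu$ via Lemma \ref{lemJost}, and discarding the finitely many $l\leq\gamma(R)$ outside the shifted half-plane while preserving the M\"untz condition --- are sound refinements of details the paper leaves implicit.
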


\begin{proof}
 We recall that our main assumption is
 \begin{equation}\label{hypoprincdelta}
   \delta_l = \tilde{\delta}_l, \quad \forall l \in \mathcal{L},
 \end{equation}
 where $\mathcal{L} \subset \N^{\star}$ satisfies
 \[ \sum_{l \in \mathcal{L}} \frac{1}{l} = + \infty.\]
 From (\ref{hypoprincdelta}) we easily deduce that
 \[ \sigma(l) = e^{2i \delta_l} = e^{2i \tilde{\delta}_l} = \tilde{\sigma}(l).\]
 Therefore, since
 \[ \sigma(l) = e^{i \pi \left(l + \gamma(R) + \frac{1}{2} \right) } \frac{\alpha(l)}{\beta(l)} \quad \text{and} \quad 
  \tilde{\sigma}(l) = e^{i \pi \left(l + \tilde{\gamma}(R) + \frac{1}{2} \right) } \frac{\tilde{\alpha}(l)}{\tilde{\beta}(l)},\]
  using Theorem \ref{recupflux} and the definition of the function $F$ given by (\ref{defF}) we obtain that
  \[ F(l) = 0, \quad \forall l \in \mathcal{L}.\]
  Using Corollary \ref{Gnev} and Theorem \ref{thmNev} we thus obtain that the map $\nu \mapsto F(\nu)$ is identically zero on $\Pi^+$.
  Finally $F$ is identically zero on the whole complex plane $\C$ since it is a holomorphic function.
\end{proof}

\begin{theorem}\label{uniRegge}
 Assume that
  \[ \delta_l = \tilde{\delta}_l, \quad \forall l \in \mathcal{L}, \]
 where $\mathcal{L} \subset \N^{\star}$ satisfies
 \[ \sum_{l \in \mathcal{L}} \frac{1}{l} = + \infty.\]
 Then,
 \[ \sigma(\nu) = \tilde{\sigma}(\nu), \quad \forall \nu \in \C \setminus \{\beta(\nu) = 0 \}.\]
\end{theorem}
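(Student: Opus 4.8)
The plan is to read the conclusion off directly from Corollary \ref{uniF}. That Corollary, obtained under the hypothesis $\delta_l = \tilde{\delta}_l$ on the Müntz set $\mathcal{L}$ and using Theorem \ref{recupflux} (so that $\gamma(R) = \tilde{\gamma}(R)$), asserts that the entire function
\[ F(\nu) = 2i\big(\alpha(\nu)\tilde{\beta}(\nu) - \tilde{\alpha}(\nu)\beta(\nu)\big) \]
vanishes identically on $\C$. Hence the single algebraic fact I would start from is the pointwise identity
\[ \alpha(\nu)\tilde{\beta}(\nu) = \tilde{\alpha}(\nu)\beta(\nu), \quad \forall \nu \in \C. \]
Everything else is bookkeeping about where the quotients $\alpha/\beta$ and $\tilde{\alpha}/\tilde{\beta}$ defining $\sigma$ and $\tilde{\sigma}$ make sense.

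First I would record that $\alpha$ and $\beta$ have no common zero, and likewise $\tilde{\alpha}$ and $\tilde{\beta}$. Using $\alpha(\nu) = iF^-(r_0,\nu)$ and $\beta(\nu) = -iF^+(r_0,\nu)$, a common zero at some $\nu_0$ would force $F^+(r_0,\nu_0) = F^-(r_0,\nu_0) = 0$, whence the Wronskian $W(F^+(\cdot,\nu_0),F^-(\cdot,\nu_0))$ evaluated at $r=r_0$ would be zero, contradicting the constant value $W(F^+,F^-) = -2i$. This is the only genuinely structural ingredient, and it is immediate.

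Next I would use the no-common-zero property to show that the zero sets of $\beta$ and $\tilde{\beta}$ coincide, which is what makes the punctured domain $\C \setminus \{\beta(\nu)=0\}$ in the statement the correct one (it must also avoid the poles of $\tilde{\sigma}$). Indeed, if $\beta(\nu_0)=0$ then the identity gives $\alpha(\nu_0)\tilde{\beta}(\nu_0)=0$; since $\alpha(\nu_0)\neq 0$ by the no-common-zero fact, we get $\tilde{\beta}(\nu_0)=0$, and the reverse implication is symmetric.

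Finally, on $\C \setminus \{\beta(\nu)=0\}$, where both $\beta$ and $\tilde{\beta}$ are nonzero, I would divide the identity to obtain $\alpha(\nu)/\beta(\nu) = \tilde{\alpha}(\nu)/\tilde{\beta}(\nu)$ and multiply by the common prefactor from Definition \ref{defReggedef}, the prefactors of $\sigma$ and $\tilde{\sigma}$ agreeing since $\gamma(R)=\tilde{\gamma}(R)$. This yields $\sigma(\nu)=\tilde{\sigma}(\nu)$ on the required set. The proof is essentially mechanical once Corollary \ref{uniF} is available; the closest thing to an obstacle is the zero-set bookkeeping, which is why I isolate the no-common-zero property of the Jost functions before dividing.
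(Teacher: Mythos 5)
Your proof is correct and follows essentially the same route as the paper: invoke Corollary \ref{uniF} to get $\alpha(\nu)\tilde{\beta}(\nu) = \tilde{\alpha}(\nu)\beta(\nu)$ on all of $\C$, then divide and use Definition \ref{defReggedef}. Your additional zero-set bookkeeping (the Wronskian argument showing $\alpha$ and $\beta$ have no common zero, hence $\beta$ and $\tilde{\beta}$ share the same zero set) is a detail the paper's one-line proof leaves implicit, and it correctly justifies that $\tilde{\sigma}$ is well defined on all of $\C \setminus \{\beta(\nu)=0\}$.
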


\begin{proof}
 Thanks to Corollary \ref{uniF} we know that for all $\nu \in \C$
 \[ F(\nu) = 2i(\alpha(\nu) \tilde{\beta}(\nu) - \tilde{\alpha}(\nu)\beta(\nu)) = 0.\]
 Thus, using the definition of the Regge interpolation function $\sigma$ given by (\ref{defRegge}), we obtain the result.
\end{proof}

\subsection{Second part of the proof: conclusion}
\noindent
We will follow the elegant,
simple and self-contained proof given in \cite{DN5}. This proof follows an idea close to the local B\"org-Marchenko uniqueness Theorem (see \cite{Be,GS,Si,Te}).
Let $r \geq r_0$ fixed, we introduce for $\nu \in \C$ the function 
\begin{equation}\label{defFfinal}
 F(r,\nu) = F^+(r,\nu) \tilde{F}^-(r,\nu) - F^-(r,\nu)\tilde{F}^+(r,\nu). 
\end{equation}
As we have seen in Section \ref{secjostsol}, Lemma \ref{lemJost}, the function $\nu \mapsto F(r,\nu)$ is holomorphic on $\C$ and of order $1$ with infinite type.
The aim is now to prove that this function is identically zero using the Phragm\'en-Lindel\"of Theorem (see for instance \cite{Bo}, Theorem 1.4.2).\\
\\
\noindent
\underline{Step 1: $F(r,\nu)$ is bounded on $i\R+\gamma(R)$.} We proved in Section \ref{secjostsol}, Proposition \ref{estiFiR}, that the Jost solutions $\nu \mapsto F^{\pm}(r,\nu)$ are bounded on
$i\R+\gamma(R)$. Therefore, $F(r,\nu)$ is, by definition, also bounded on $i\R+\gamma(R)$.\\
\\
\noindent
\underline{Step 2: $F(r,\nu) \to 0$ when $\nu \to + \infty$.} We strictly follow the proof given in \cite{DN5}. Indeed, for $\nu \in \R$, we know that
$\beta(\nu) \neq 0$ (see Lemma \ref{lemmenonvanish}) and we can then set
\[ \Psi(r,\nu) = \frac{\Phi(r,\nu)}{\beta(\nu)}.\]
We thus obtain, using (\ref{defsolreg}), that
\[ F^-(r,\nu) = \Psi(r,\nu) - \frac{\alpha(\nu)}{\beta(\nu)}F^+(r,\nu).\]
Therefore,
\[ F(r,\nu) = \tilde{\Psi}(r,\nu) F^+(r,\nu) - \Psi(r,\nu) \tilde{F}^+(r,\nu) + \left( \frac{\alpha(\nu)}{\beta(\nu)} - \frac{\tilde{\alpha}(\nu)}{\tilde{\beta}(\nu)} \right)F^+(r,\nu) \tilde{F}^+(r,\nu).\]
So, using (\ref{defRegge}), we deduce
\[ F(r,\nu) = \tilde{\Psi}(r,\nu) F^+(r,\nu) - \Psi(r,\nu) \tilde{F}^+(r,\nu) + e^{-i \pi \left( \nu + \frac{1}{2} \right )}\left( \sigma(\nu) - \tilde{\sigma}(\nu)\right)F^+(r,\nu) \tilde{F}^+(r,\nu).\]
Hence, by Theorem \ref{uniRegge}, we obtain that for $\nu \in \R$,
\[ F(r,\nu) = \tilde{\Psi}(r,\nu) F^+(r,\nu) - \Psi(r,\nu) \tilde{F}^+(r,\nu).\]
For instance, let us examine $\Psi(r,\nu) \tilde{F}^+(r,\nu)$. Propositions \ref{lienFF_0} and \ref{eqbeta} imply that for $\nu \geq \gamma(R)$
\[ |\Psi(r,\nu) \tilde{F}^+(r,\nu)| = \left| \frac{\Phi(r,\nu)}{\beta(\nu)} \right| |\tilde{F}^+(r,\nu)| \leq C \left| \frac{\Phi(r,\nu)}{\beta_0(\nu)} \right| |\tilde{F}_0^+(r,\nu)|.\]
Moreover, using Proposition \ref{estiphi}, we know that there exists a positive constant $C$ such that for a fixed $r \geq r_0$ and for all $\nu \geq \gamma(R)$
\[ | \Phi(r,\nu)| \leq \frac{C}{|\nu_R|+1} \left( \frac{r}{r_0} \right)^{\nu_R},\]
thanks to (\ref{defF_0+}) and (\ref{asHnu1})
\[ |F_0^+(r,\nu) | \leq C \left( \frac{r}{2} \right)^{-\nu_R + \frac{1}{2}} \Gamma(\nu_R),\]
and thanks to (\ref{defbeta0})
\[ |\beta_0(\nu)| \sim C \left( \frac{r_0}{2} \right)^{-\nu_R + \frac{1}{2}} \Gamma(\nu_R).\]
Therefore,
\[ |\Psi(r,\nu) \tilde{F}^+(r,\nu)| \leq \frac{C}{|\nu_R|+1} \to 0, \quad \nu \to +\infty.\]
Similarly, we obtain
\[ |\tilde{\Psi}(r,\nu) F^+(r,\nu)|  \to 0, \quad \nu \to +\infty.\]
Therefore,
\[F(r,\nu) \to 0, \quad \nu \to + \infty.\]
\\
\noindent
\underline{Step 3: $F(r,\nu) \to 0$ when $\nu \to - \infty$.} Thanks to the uniquenesss of the Regge interpolation function on the real line we know that for all
$\nu \in \R$
\[ F(r,\nu) = \tilde{\Psi}(r,\nu) F^+(r,\nu) - \Psi(r,\nu) \tilde{F}^+(r,\nu).\]
We then use Lemma \ref{symjostsol} to come back to the limit as $\nu \to + \infty$. Indeed, thanks to this result we obtain that, for large positive $\nu$
\[ F_{\gamma}^+(r,-\nu) = F_{-\gamma}^+(r,\nu), \]
\[ \beta_{\gamma}(-\nu) = \beta_{-\gamma}(\nu), \]
and
\[ \Phi_{\gamma}(r,-\nu) = \Phi_{-\gamma}(r,\nu).\]
Moreover, since if $(V,B) \in \mathcal{C}$ then $(V,-B) \in \mathcal{C}$, Propositions \ref{lienFF_0}, \ref{eqbeta} and \ref{estiphi} are also true for $(V,-B)$, i.e. 
for $-\gamma$. Therefore, we can follow the end of the previous step to obtain that 
\[F(r,\nu) \to 0, \quad \nu \to - \infty.\]
\\
\noindent
\underline{Step 4: $F(r,\nu)$ is identically zero on $\C$.} At this point, we know that $F(r,\nu) \to 0$ when $\nu \to \pm \infty$. In particular, $\nu \mapsto F(r,\nu)$ 
is bounded on the real axis. Since $F(r,\nu)$ is also bounded in $i\R+\gamma(R)$, applying the Phragm\'en-Lindel\"of Theorem in each quadrant of the complex
plane, we see that $F(r,\nu)$ is bounded on the whole complex place $\C$.
Finally, $F(r,\nu)$ is constant by Liouville's Theorem. As the limit is $0$ when $\nu \to +\infty$ we have
\begin{equation}\label{fidnulle}
F(r,\nu) = 0, \quad \forall \nu \in \C. 
\end{equation}
\\
\noindent
\underline{Step 5: Conclusion of the proof.} Using (\ref{defFfinal}) and (\ref{fidnulle}) we have
\begin{equation}\label{egpr1}
 F^+(r,\nu) \tilde{F}^-(r,\nu) = F^-(r,\nu)\tilde{F}^+(r,\nu), \quad \forall r \geq r_0, \quad \forall \nu \in \C.
\end{equation}
For $\nu \in \R$ fixed, we note that for all $r \geq r_0$, $F^{\pm}(r,\nu) \neq 0$ (see Lemma \ref{lemmenonvanish}).
We can then rewrite (\ref{egpr1}) as
\[ \frac{F^+(r,\nu)}{F^-(r,\nu)} = \frac{\tilde{F}^+(r,\nu)}{\tilde{F}^-(r,\nu)}, \quad \forall r \geq r_0, \quad \forall \nu \in \R.\]
Differentiating and using that $W(F^+(r,\nu),F^-(r,\nu)) = -2i$, it follows that
\[ F^-(r,\nu)^2 = \tilde{F}^-(r,\nu)^2.\]
We take the logarithmic derivative of this and we differentiate once more. We then obtain
\[ \frac{F^{-''}(r,\nu)}{F^-(r,\nu)} = \frac{\tilde{F}^{-''}(r,\nu)}{\tilde{F}^-(r,\nu)}, \quad \forall r \geq r_0, \quad \forall \nu \in \R.\]
Using (\ref{eqstat}) we then obtain that
\begin{equation}\label{egapot}
 q_{\nu}(r) = \tilde{q}_{\nu}(r),  \quad \forall r \geq r_0, \quad \forall \nu \in \R. 
\end{equation}
We now recall that for all $r \geq r_0$,
\[ q_{\nu}(r) = - \frac{2\nu (\gamma(r)-\gamma(R))}{r^2} + \frac{\gamma(r)^2-\gamma(R)^2}{r^2} + V(r)\]
and
\[ \tilde{q}_{\nu}(r) = - \frac{2\nu (\tilde{\gamma}(r)-\tilde{\gamma}(R))}{r^2} + \frac{\tilde{\gamma}(r)^2-\tilde{\gamma}(R)^2}{r^2} + \tilde{V}(r).\]
Since the equality (\ref{egapot}) is satisfied for all $\nu \in \R$, we can decouple it and we thus obtain the identities
\[ \gamma(r)-\gamma(R) = \tilde{\gamma}(r)-\tilde{\gamma}(R),  \quad \forall r \geq r_0\]
and
\[\frac{\gamma(r)^2-\gamma(R)^2}{r^2} + V(r) = \frac{\tilde{\gamma}(r)^2-\tilde{\gamma}(R)^2}{r^2} + \tilde{V}(r),  \quad \forall r \geq r_0.\]
Moreover, we already shown in Theorem \ref{recupflux} that
\[ \gamma(R) = \tilde{\gamma}(R).\]
Therefore, we can conclude that
\[ \gamma(r) = \tilde{\gamma}(r),  \quad \forall r \geq r_0,\]
which implies by differentiation (see (\ref{lienAgamma})) that
\[ B(x) = \tilde{B}(x), \quad \forall |x| \geq r_0.\]

\begin{remark}
The equality
\[ \gamma(r) = \tilde{\gamma}(r),  \quad \forall r \geq r_0,\]
actually allows us to conclude that
 \[ A(x) = \tilde{A}(x), \quad \forall |x| \geq r_0,\]
by (\ref{defAgamma}). This is not surprising that we can actually recover the magnetic potential $A$ because of the gauge choice we made in Definition \ref{defclassA}.
\end{remark}
%
\noindent
Finally, we also obtain that
\[ V(r) = \tilde{V}(r),  \quad \forall r \geq r_0.\]

\appendix

\section{Estimates on Bessel functions}\label{appe}

\subsection{Definition and symmetry properties of the Bessel functions}
\noindent
In this Section we recall the definition of the Bessel functions and we give some basic properties of these functions. We refer the reader to \cite{Leb}, Chapter 5, or to 
the Watson's treatise \cite{Wat} that we will use in the following.\\
\noindent
The Bessel function of the first kind $J_{\nu}(z)$ is defined for $\nu \in \C$ and $|\text{Arg}(z)| < \pi$ by
\[ J_{\nu}(z) = \sum_{k=0}^{+\infty} \frac{(-1)^k \left( \frac{z}{2} \right)^{\nu + 2k}}{\Gamma(k+1) \Gamma(k+\nu+1)}.\]
The Bessel function of the second kind are then defined for $\nu \in \C \setminus \Z$ and $|\text{Arg}(z)| < \pi$ by
\[ Y_{\nu}(z) = \frac{J_{\nu}(z) \cos(\nu \pi) - J_{-\nu}(z)}{\sin(\nu \pi)}.\]
Finally, the Bessel functions of the third kind, or Hankel functions, denoted by $H_{\nu}^{(1)}(z)$ and $H_{\nu}^{(2)}(z)$ are defined in terms of the Bessel functions of the 
first and the second kind by
\[ H_{\nu}^{(1)}(z) = J_{\nu}(z) + iY_{\nu}(z)\]
and
\[ H_{\nu}^{(2)}(z) = J_{\nu}(z) - iY_{\nu}(z).\]
These functions can be written as
\[ H_{\nu}^{(1)}(z) = \frac{J_{-\nu}(z)-e^{-i \pi \nu} J_{\nu}(z)}{i \sin(\nu \pi)}\]
and
\[ H_{\nu}^{(2)}(z) = \frac{e^{i \pi \nu} J_{\nu}(z)-J_{-\nu}(z)}{i \sin(\nu \pi)}.\]
The Bessel functions $J_{\nu}(z)$ and the Hankel functions $H_{\nu}^{(i)}(z)$ are entire functions with respect to $\nu$. Moreover, they satisfy the following identities 
(see \cite{MOS}, p.66)
\begin{equation}\label{lienBessel1}
 \overline{J_{\nu}(z)} = J_{\overline{\nu}}(\overline{z}), \quad \overline{Y_{\nu}(z)} = Y_{\overline{\nu}}(\overline{z}), \quad \overline{H_{\nu}^{(1)}(z)} = H_{\overline{\nu}}^{(2)}(\overline{z}),
\end{equation}
\begin{equation}\label{lienBessel2}
 H_{-\nu}^{(1)}(z) = e^{i \pi \nu} H_{\nu}^{(1)}(z) \quad \text{and} \quad H_{-\nu}^{(2)}(z) = e^{-i \pi \nu} H_{\nu}^{(2)}(z).
\end{equation}

\subsection{Review of estimates on Bessel functions on a compact set}
\noindent
In this Section, we will recall some uniform asymptotics for the Bessel function $J_{\nu}(r)$ and the Hankel functions $H_{\nu}^{(i)}(r)$ with respect to $r$ when $r$ 
belongs to a real compact set. We emphasize that these uniform asymptotics fail if $r \in (0,+\infty)$.

\begin{prop}[\cite{DN5}, Proposition A.7]\label{rappelestiB}
 Let $\delta > 0$ be small enough. For $r > 0$ belonging to a compact set, we have the following uniform asymptotics when $\nu \to \infty$
 \begin{equation}\label{asJnu}
  J_{\nu}(r) = \frac{1}{\Gamma(\nu+1)} \left( \frac{r}{2} \right)^{\nu} \left( 1 + O\left( \frac{1}{\nu} \right) \right), \quad |Arg(\nu)| \leq \pi - \delta,
 \end{equation}
  \begin{equation}\label{asHnu1}
  H_{\nu}^{(1)}(r) = -\frac{i}{\pi} \Gamma(\nu) \left( \frac{r}{2} \right)^{-\nu} \left( 1 + O\left( \frac{1}{\nu} \right) \right), \quad |Arg(\nu)| \leq \frac{\pi}{2} - \delta,
 \end{equation}
 and
 \begin{equation}\label{asHnu2}
  H_{\nu}^{(2)}(r) = \overline{H_{\overline{\nu}}^{(1)}(r)} = \frac{i}{\pi} \Gamma(\nu) \left( \frac{r}{2} \right)^{-\nu} \left( 1 + O\left( \frac{1}{\nu} \right) \right), \quad |Arg(\nu)| \leq \frac{\pi}{2} - \delta.
 \end{equation}
\end{prop}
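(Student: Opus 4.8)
The plan is to obtain (\ref{asJnu}) directly from the power series and then to deduce the two Hankel asymptotics from the connection formulas together with the reflection formula for the Gamma function. For (\ref{asJnu}) I would isolate the $k=0$ term,
\[ J_\nu(r) = \sum_{k=0}^{+\infty} \frac{(-1)^k (r/2)^{\nu+2k}}{\Gamma(k+1)\Gamma(k+\nu+1)} = \frac{(r/2)^\nu}{\Gamma(\nu+1)}\left(1 + \sum_{k \geq 1} \frac{(-1)^k (r/2)^{2k}\,\Gamma(\nu+1)}{k!\,\Gamma(\nu+k+1)}\right), \]
and bound the tail. Since $\Gamma(\nu+1)/\Gamma(\nu+k+1) = \prod_{j=1}^{k}(\nu+j)^{-1}$, and since the sector $|\mathrm{Arg}(\nu)| \leq \pi-\delta$ forces $|\nu+j| \geq |\nu|\sin\delta$ for every $j \geq 1$ once $|\nu|$ is large (it keeps $\nu$ off the negative real axis, where the poles of $1/\Gamma$ lie), I get $|\Gamma(\nu+1)/\Gamma(\nu+k+1)| \leq (|\nu|\sin\delta)^{-k}$. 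Summing then produces $\exp\big((r/2)^2/(|\nu|\sin\delta)\big)-1 = O(1/|\nu|)$, uniformly for $r$ in a compact set, with the $k=1$ term already giving the announced $O(1/\nu)$. This is the easy half.

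For the Hankel functions I would start from $H_\nu^{(1)}(r) = \big(J_{-\nu}(r)-e^{-i\pi\nu}J_\nu(r)\big)/(i\sin(\nu\pi))$ and use the reflection formula $\Gamma(\nu)\Gamma(1-\nu) = \pi/\sin(\pi\nu)$ to write $1/(i\sin\pi\nu) = \Gamma(\nu)\Gamma(1-\nu)/(i\pi)$. The first term becomes $\frac{\Gamma(\nu)\Gamma(1-\nu)}{i\pi}J_{-\nu}(r)$; crucially, the potential poles of $\Gamma(1-\nu)$ at the positive integers are exactly cancelled, because $\Gamma(1-\nu)J_{-\nu}(r) = (r/2)^{-\nu}\big(1+O(1/\nu)\big)$ by the same tail estimate as above, now using $\Gamma(1-\nu)/\Gamma(1-\nu+k) = \prod_{j=1}^k (j-\nu)^{-1} = O(|\nu|^{-k})$ when $\mathrm{Re}(\nu) \to +\infty$. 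Since $1/i = -i$, this term gives precisely $-\frac{i}{\pi}\Gamma(\nu)(r/2)^{-\nu}\big(1+O(1/\nu)\big)$, the leading behaviour in (\ref{asHnu1}).

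It then remains to show the second term is negligible. Using $\Gamma(1+\nu)=\nu\Gamma(\nu)$ and the reflection formula once more, its ratio to the leading term is, up to bounded factors, $\dfrac{\pi e^{-i\pi\nu}(r/2)^{2\nu}}{\nu\,\sin(\pi\nu)\,\Gamma(\nu)^2}$. In the stricter sector $|\mathrm{Arg}(\nu)| \leq \pi/2-\delta$ one has $\mathrm{Re}(\nu) \to +\infty$, so by Stirling $|\Gamma(\nu)^2|$ grows super-exponentially and dominates both $(r/2)^{2\mathrm{Re}(\nu)}$ and the bounded factor $e^{\pi\mathrm{Im}(\nu)}/|\sin\pi\nu|$; hence this ratio tends to $0$ faster than any power of $1/\nu$ and is absorbed into the $O(1/\nu)$. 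This is exactly where $|\mathrm{Arg}(\nu)| \leq \pi/2-\delta$, rather than $\pi-\delta$, is needed. Finally (\ref{asHnu2}) follows from the symmetry $H_\nu^{(2)}(r) = \overline{H_{\overline{\nu}}^{(1)}(r)}$ of (\ref{lienBessel1}): applying (\ref{asHnu1}) to $\overline{\nu}$ and conjugating, with $\overline{\Gamma(\overline{\nu})} = \Gamma(\nu)$, $\overline{(r/2)^{-\overline{\nu}}} = (r/2)^{-\nu}$ for real $r$, and $\overline{-i/\pi} = i/\pi$. I expect the main obstacle to be the bookkeeping that guarantees uniformity in $r$ together with the clean cancellation of the $\Gamma(1-\nu)$ poles near the positive integers, so that the error bounds hold across the whole sector and not merely away from the integers.
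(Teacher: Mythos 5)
First, a remark on the comparison itself: the paper offers no proof of this Proposition --- it is quoted verbatim from \cite{DN5}, Proposition A.7 --- so your attempt can only be judged on its own merits. Your proof of (\ref{asJnu}) is correct: isolating the $k=0$ term and using $|\Gamma(\nu+1)/\Gamma(\nu+k+1)| = \prod_{j=1}^{k}|\nu+j|^{-1} \leq (|\nu|\sin\delta)^{-k}$ is legitimate, since the points $-j$ lie on the negative real half-axis, which the sector $|\mathrm{Arg}(\nu)|\leq\pi-\delta$ keeps at distance at least $|\nu|\sin\delta$ from $\nu$. The deduction of (\ref{asHnu2}) from (\ref{asHnu1}) via (\ref{lienBessel1}) is also fine.

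The proof of (\ref{asHnu1}), however, has a genuine gap, and it is not the ``bookkeeping'' you defer at the end --- it is the heart of the matter. The positive integers lie inside the sector $|\mathrm{Arg}(\nu)|\leq\frac{\pi}{2}-\delta$, and each of your two pieces, $\frac{\Gamma(\nu)\Gamma(1-\nu)}{i\pi}J_{-\nu}(r)$ and $\frac{e^{-i\pi\nu}}{i\sin(\pi\nu)}J_{\nu}(r)$, has a genuine pole at every large integer $\nu=n$: one has $J_{-n}(r)=(-1)^{n}J_{n}(r)\neq 0$ (for $r$ in a fixed compact set and $n$ large, $r$ lies below the first zero $j_{n,1}>n$ of $J_{n}$, so $J_{n}(r)>0$), and the poles of $\Gamma(1-\nu)$ and of $1/\sin(\pi\nu)$ cancel only in the combination, because the numerator $J_{-\nu}(r)-e^{-i\pi\nu}J_{\nu}(r)$ vanishes at $\nu=n$. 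Consequently the identity $\Gamma(1-\nu)J_{-\nu}(r)=(r/2)^{-\nu}\bigl(1+O(1/\nu)\bigr)$ on which your argument rests is false uniformly on the sector: its left-hand side is singular at each large integer while its right-hand side is finite. Concretely, the bound $\prod_{j=1}^{k}|j-\nu|^{-1}=O(|\nu|^{-k})$ fails as soon as $\nu$ is within $o(1)$ of some integer $j_{0}\leq k$ (at $\nu=n$ the term $k=n$ of your tail series is infinite), and in your estimate of the second term the quantity $e^{\pi\mathrm{Im}(\nu)}/|\sin(\pi\nu)|$ is not a bounded factor but diverges as $\nu\to n$ along the real axis. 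This matters crucially for the present paper, whose entire use of the Proposition is at integer angular momenta $\nu=l$. A correct argument needs a different mechanism, for instance: establish your estimates on the region $\mathrm{dist}(\nu,\N)\geq\frac{1}{4}$ (where $1/|\sin(\pi\nu)|$ is controlled and the product bounds can be repaired), then extend to the disks $|\nu-n|\leq\frac{1}{4}$ by the maximum principle applied to the error $E(\nu)=i\pi(r/2)^{\nu}H_{\nu}^{(1)}(r)/\Gamma(\nu)-1$, which is holomorphic there because $H_{\nu}^{(1)}(r)$ is entire in $\nu$ and $\Gamma$ never vanishes; alternatively, apply the reflection formula term by term inside the two series and recombine them so that the spurious poles cancel before any estimate is made.
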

\noindent
On the imaginary axis the asymptotics of the Hankel functions for a parameter $r$ belonging to a compact set are given by the following Proposition.

\begin{prop}\label{estiHiR}
 For $r > 0$ belonging to a compact set, the Hankel functions satisfy the following asymptotics when $\nu = iy$ and $|y| \to +\infty$
 \begin{equation}\label{estiH1iR}
    |H_{iy}^{(1)}(r)| = \frac{\sqrt{2}}{\sqrt{\pi |y|}} e^{\frac{\pi}{2} y} (1 + o(1)),
 \end{equation}
and
 \begin{equation}\label{estiH2iR}
    |H_{iy}^{(2)}(r)| = \frac{\sqrt{2}}{\sqrt{\pi |y|}} e^{-\frac{\pi}{2} y} (1 + o(1)).
 \end{equation}
 \end{prop}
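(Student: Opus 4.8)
The plan is to avoid the uniform Hankel asymptotics (\ref{asHnu1})--(\ref{asHnu2}) altogether, since these hold only for $|\mathrm{Arg}(\nu)| \leq \frac{\pi}{2} - \delta$ and degenerate precisely on the imaginary axis $\nu = iy$. (As a warning, naively inserting $\nu = iy$ into (\ref{asHnu1}) would produce $\frac{\sqrt{2}}{\sqrt{\pi|y|}}e^{-\frac{\pi}{2}y}$, i.e. the wrong sign in the exponent, which is exactly the symptom of this degeneracy.) Instead I would return to the representation of $H_{\nu}^{(1)}$ in terms of the Bessel functions $J_{\pm \nu}$ recalled in the appendix, namely $H_{\nu}^{(1)}(z) = \frac{J_{-\nu}(z) - e^{-i\pi\nu} J_{\nu}(z)}{i \sin(\nu\pi)}$, because the asymptotic (\ref{asJnu}) for $J_{\nu}(r)$ \emph{does} remain valid on the imaginary axis: $\mathrm{Arg}(\pm iy) = \pm\frac{\pi}{2} \leq \pi - \delta$.

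First I would treat $y > 0$. Specializing the representation at $\nu = iy$ and using $\sin(i\pi y) = i\sinh(\pi y)$ together with $e^{-i\pi(iy)} = e^{\pi y}$ gives $H_{iy}^{(1)}(r) = \frac{e^{\pi y} J_{iy}(r) - J_{-iy}(r)}{\sinh(\pi y)}$. Applying (\ref{asJnu}) to each Bessel function, uniformly for $r$ in the compact set, and noting that $|(r/2)^{\pm iy}| = 1$, the whole computation reduces to the modulus of $\Gamma(1 \pm iy)$.

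The key computation is the evaluation of $|\Gamma(1+iy)|$ via the reflection formula, which after using $\Gamma(1+iy) = iy\,\Gamma(iy)$ gives $\Gamma(1+iy)\Gamma(1-iy) = \frac{\pi y}{\sinh(\pi y)}$, together with $\Gamma(1-iy) = \overline{\Gamma(1+iy)}$; this yields $|\Gamma(1+iy)| = \sqrt{\frac{\pi y}{\sinh(\pi y)}} \sim \sqrt{2\pi y}\, e^{-\frac{\pi}{2}y}$ as $y \to +\infty$. Consequently $|J_{iy}(r)| \sim \frac{e^{\pi y/2}}{\sqrt{2\pi y}}$, and since $|\Gamma(1-iy)| = |\Gamma(1+iy)|$ the two Bessel terms have equal modulus; the factor $e^{\pi y}$ then makes $e^{\pi y} J_{iy}(r)$ dominate $J_{-iy}(r)$ by a full exponential factor, so no cancellation occurs. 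Dividing by $\sinh(\pi y) \sim \frac{1}{2}e^{\pi y}$ collapses the $e^{\pi y}$ factors and gives $|H_{iy}^{(1)}(r)| \sim 2|J_{iy}(r)| \sim \frac{\sqrt{2}}{\sqrt{\pi y}}e^{\frac{\pi}{2}y}$, which is (\ref{estiH1iR}) for $y > 0$.

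Finally I would extend to $y < 0$ using the symmetry (\ref{lienBessel2}): writing $\nu = i|y|$ one obtains $H_{iy}^{(1)}(r) = e^{-\pi|y|} H_{i|y|}^{(1)}(r)$, so that $|H_{iy}^{(1)}(r)| \sim \frac{\sqrt{2}}{\sqrt{\pi|y|}}e^{-\frac{\pi}{2}|y|} = \frac{\sqrt{2}}{\sqrt{\pi|y|}}e^{\frac{\pi}{2}y}$, as required. The estimate (\ref{estiH2iR}) for $H^{(2)}$ then follows at once from the conjugation identity $\overline{H_{\nu}^{(1)}(z)} = H_{\overline{\nu}}^{(2)}(\overline{z})$ of (\ref{lienBessel1}), which gives $|H_{iy}^{(2)}(r)| = |H_{-iy}^{(1)}(r)|$ and hence the asymptotic with $-y$ in the exponent. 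The main obstacle here is conceptual rather than computational: one must recognize that the available Hankel asymptotics cannot be pushed to the axis, and that on the axis the weight $e^{-i\pi\nu} = e^{\pi y}$ turns $H^{(1)}$ into the exponentially \emph{growing} (rather than decaying) branch; the only remaining care is to keep the $O(1/y)$ errors from (\ref{asJnu}) uniform in $r$ over the compact set.
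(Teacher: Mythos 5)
Your proposal is correct and follows essentially the same route as the paper: both write $H_{iy}^{(1)}(r) = \frac{e^{\pi y}J_{iy}(r) - J_{-iy}(r)}{\sinh(\pi y)}$, apply the uniform asymptotics (\ref{asJnu}) for $J_{\pm iy}$ (valid on the imaginary axis), reduce to the modulus of the Gamma function there, observe that one term dominates without cancellation, and obtain (\ref{estiH2iR}) from the conjugation identity. The only cosmetic differences are that you compute $|\Gamma(1+iy)|$ via the reflection formula rather than quoting $|\Gamma(iy)|^2 = \pi/(y\sinh(\pi y))$ from Lebedev, and you handle $y<0$ by the symmetry $H_{-\nu}^{(1)}(z) = e^{i\pi\nu}H_{\nu}^{(1)}(z)$ instead of rerunning the dominance argument, both of which are equivalent to the paper's steps.
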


\begin{proof}
 We recall that
  \[ H_{iy}^{(1)} (r) = \frac{e^{\pi y} J_{iy}(z)-J_{-iy}(z)}{\sinh(\pi y)}.\]
%
 First, thanks to (\ref{asJnu}),
  \[ J_{iy}(r) = \frac{1}{\Gamma(1+iy)} \left( \frac{r}{2} \right)^{iy} \left( 1 + O\left( \frac{1}{|y|} \right) \right), \quad |y| \to +\infty.\]
 So,
  \[ |J_{iy}(r)| = \frac{1}{|\Gamma(1+iy)|} \left( 1 + O\left( \frac{1}{|y|} \right) \right) = \frac{1}{|y| |\Gamma(iy)|} \left( 1 + O\left( \frac{1}{|y|} \right) \right), \quad |y| \to +\infty.\]
 We know (see \cite{Leb}, p.14) that for real $y$
 \[ |\Gamma(iy)|^2 = \frac{\pi}{y \sinh(\pi y)}.\]
 Thus,
  \[ |J_{iy}(r)| = \sqrt{\frac{|\sinh(\pi y)|}{|y| \pi}} \left( 1 + O\left( \frac{1}{|y|} \right) \right), \quad |y| \to +\infty.\]
  Moreover, using (\ref{lienBessel1}) we obtain that
  \[ |J_{-iy}(r)| = |\overline{J_{-iy}(r)}| = |J_{iy}(r)| = \sqrt{\frac{|\sinh(\pi y)|}{|y| \pi}} \left( 1 + O\left( \frac{1}{|y|} \right) \right), \quad |y| \to +\infty.\]
  We now should split our study into two cases.\\
  \\
 \noindent
 \underline{Case 1: $y \geq 0$.} In this case 
  \[ J_{-iy}(z) = o(e^{\pi y} J_{iy}(z)), \quad y \to +\infty.\]
  So, when $y \to + \infty$,
  \begin{eqnarray*}
     | H_{iy}^{(1)} (r) | &=& \left| \frac{e^{\pi y} J_{iy}(z)}{\sinh(\pi y)}\right|(1+o(1))  \\
     &=& \frac{1}{\sqrt{\sinh(\pi y) y \pi}} e^{\pi y} \left( 1 + o(1)\right)\\
     &=& \frac{\sqrt{2}}{\sqrt{\pi y}} e^{\frac{\pi}{2} y} (1 + o(1)),
  \end{eqnarray*}
where we used for the last step
\[ \sinh(\pi y) = \frac{e^{\pi y} - e^{-\pi y}}{2} \sim \frac{e^{\pi y}}{2}, \quad y \to + \infty.\]
\\
 \noindent
 \underline{Case 2: $y \leq0$.} In this case 
  \[  e^{\pi y} J_{iy}(z)= o(J_{-iy}(z)), \quad y \to -\infty.\]
  So, when $y \to - \infty$,
  \[ | H_{iy}^{(1)} (r) | = \frac{\sqrt{2}}{\sqrt{\pi |y|}} e^{\frac{\pi}{2} y} (1 + o(1)).\]
 Using these two asymptotics we then obtain (\ref{estiH1iR}).\\
 \noindent
  Finally, we prove (\ref{estiH2iR}) by using (\ref{lienBessel2}) as follows
  \[ |H_{iy}^{(2)}(r)| = |\overline{H_{iy}^{(2)}(r)}| = |H_{-iy}^{(1)}(r)| = \frac{\sqrt{2}}{\sqrt{\pi |y|}} e^{-\frac{\pi}{2} y} (1 + o(1)), \quad |y| \to + \infty.\]
\end{proof}

\subsection{Estimates on the Green kernels}\label{estigreen}
\noindent
The aim of this Section is to obtain useful estimates on several Green kernels. We recall that the Green kernel $N(r,s,\nu)$ is defined by
 \[ N(r,s,\nu) = u(r)v(s) - u(s)v(r),\]
where $(u,v)$ is a Fondamental System of Solutions of the free stationary equation
\[ -u'' +  \frac{(\nu_R)^2 -  \frac{1}{4}}{r^2}  u = u, \quad \forall r \geq r_0,\]
defined by
\[ u(r) = \sqrt{\frac{\pi r}{2}} J_{\nu_R}(r) \quad \text{and} \quad v(r) = -i \sqrt{\frac{\pi r}{2}} H_{\nu_R}^{(1)}(r),\]
where $\nu_R = \nu - \gamma(R)$.
The main idea to obtain an estimate on $N(r,s,\nu)$ on the half plane $\{\text{Re}(\nu) \geq \gamma(R) \}$ is to use the Phragm\'en-Lindel\"of Theorem because $u$ and $v$ (and so
$N(r,s,\nu)$) are entire functions of order $1$ with infinite type. We thus roughly speaking just need to estimate $N(r,s,\nu)$ for $\nu \geq \gamma(R)$ and for 
$\nu \in i \R+\gamma(R)$. We first study the behaviour of $N(r,s,\nu)$ when $\nu \to + \infty$. Thanks to Proposition \ref{rappelestiB}, we can first prove the following Lemma.

\begin{lemma}\label{estiprod}
 For $r,s \geq r_0$ belonging to a compact set, the following uniform asymptotics are satisfied when $\nu_R \to +\infty$
 \[ u(r)v(s) = \frac{-1}{2} \frac{\sqrt{rs}}{\nu_R} \left( \frac{r}{s} \right)^{\nu_R} \left( 1 + O \left( \frac{1}{\nu_R} \right) \right)\]
 and
 \[ u(s)v(r) = \frac{-1}{2} \frac{\sqrt{rs}}{\nu_R} \left( \frac{s}{r} \right)^{\nu_R} \left( 1 + O \left( \frac{1}{\nu_R} \right) \right).\]
\end{lemma}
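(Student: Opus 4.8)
The plan is to obtain both asymptotics by a direct substitution of the uniform expansions of Proposition \ref{rappelestiB} into the explicit products, combined with the functional equation $\Gamma(\nu_R+1) = \nu_R \Gamma(\nu_R)$. The crucial preliminary observation is that we let $\nu_R \to +\infty$ along the \emph{positive real axis}, so that $\text{Arg}(\nu_R) = 0$ and both (\ref{asJnu}) and (\ref{asHnu1}) are applicable (their respective angular restrictions $|\text{Arg}(\nu_R)| \leq \pi - \delta$ and $|\text{Arg}(\nu_R)| \leq \frac{\pi}{2} - \delta$ are trivially met). Recalling the definitions $u(r) = \sqrt{\frac{\pi r}{2}} J_{\nu_R}(r)$ and $v(s) = -i\sqrt{\frac{\pi s}{2}} H_{\nu_R}^{(1)}(s)$, I would first write
\[ u(r)v(s) = -i \frac{\pi \sqrt{rs}}{2} \, J_{\nu_R}(r)\, H_{\nu_R}^{(1)}(s),\]
so that the whole task reduces to estimating the scalar product $J_{\nu_R}(r) H_{\nu_R}^{(1)}(s)$.

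Next I would insert the two asymptotics directly. From (\ref{asJnu}) and (\ref{asHnu1}) with $\nu = \nu_R$ one gets
\[ J_{\nu_R}(r)\, H_{\nu_R}^{(1)}(s) = \frac{1}{\Gamma(\nu_R+1)} \left( \frac{r}{2} \right)^{\nu_R} \cdot \left( -\frac{i}{\pi} \right) \Gamma(\nu_R) \left( \frac{s}{2} \right)^{-\nu_R} \left( 1 + O\left( \tfrac{1}{\nu_R} \right) \right).\]
Here the product of the two error factors of the form $\left(1 + O(\tfrac{1}{\nu_R})\right)$ is again of the form $1 + O(\tfrac{1}{\nu_R})$, which is the only bookkeeping point that requires a word of justification. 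Using $\frac{\Gamma(\nu_R)}{\Gamma(\nu_R+1)} = \frac{1}{\nu_R}$ and $\left(\frac{r}{2}\right)^{\nu_R}\left(\frac{s}{2}\right)^{-\nu_R} = \left(\frac{r}{s}\right)^{\nu_R}$, this collapses to
\[ J_{\nu_R}(r)\, H_{\nu_R}^{(1)}(s) = -\frac{i}{\pi} \, \frac{1}{\nu_R} \left( \frac{r}{s} \right)^{\nu_R} \left( 1 + O\left( \tfrac{1}{\nu_R} \right) \right),\]
and multiplying back by $-i\frac{\pi\sqrt{rs}}{2}$, with $\bigl(-i\bigr)\bigl(-\frac{i}{\pi}\bigr)\frac{\pi}{2} = -\frac{1}{2}$, yields exactly the claimed expression for $u(r)v(s)$.

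The estimate for $u(s)v(r)$ is then immediate by interchanging the roles of $r$ and $s$ in the argument above, which flips $\left(\frac{r}{s}\right)^{\nu_R}$ into $\left(\frac{s}{r}\right)^{\nu_R}$ while leaving the prefactor unchanged. I do not anticipate any genuine obstacle here: the lemma is a routine consequence of Proposition \ref{rappelestiB} and the $\Gamma$-recurrence. The one feature worth emphasizing, rather than a difficulty, is that the expansions of Proposition \ref{rappelestiB} are \emph{uniform} precisely because $r$ and $s$ range over a fixed compact set; this uniformity is what allows the $O(\tfrac{1}{\nu_R})$ remainders to be controlled independently of $r,s$, and it is the reason the statement is phrased for $r,s$ in a compact set rather than on all of $(0,+\infty)$.
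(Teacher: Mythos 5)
Your proposal is correct and is exactly the argument the paper intends: the paper gives no written proof of Lemma \ref{estiprod}, stating only that it follows from Proposition \ref{rappelestiB}, and your computation (substituting the uniform asymptotics (\ref{asJnu}) and (\ref{asHnu1}) into $u(r)v(s) = -i\frac{\pi\sqrt{rs}}{2} J_{\nu_R}(r)H_{\nu_R}^{(1)}(s)$, simplifying via $\Gamma(\nu_R+1)=\nu_R\Gamma(\nu_R)$, and swapping $r$ and $s$ for the second estimate) is precisely that omitted verification, with the constants checked correctly. The remark about restricting to the positive real axis so that both angular conditions hold, and about uniformity coming from $r,s$ in a compact set, is also consistent with how the lemma is used in Propositions \ref{estiNR} and \ref{estiN1proof}.
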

%

\begin{prop}\label{estiNR}
  For $r_0 \leq r \leq s \leq R$ belonging to a compact set, we have the uniform asymptotics when $\nu \to +\infty$
 \[ N(r,s,\nu) = \frac{1}{2} \frac{\sqrt{rs}}{\nu_R} \left( \frac{s}{r} \right)^{\nu_R} \left( 1 + o(1) \right).\]
\end{prop}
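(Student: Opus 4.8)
The plan is to read off the asymptotic directly from the definition $N(r,s,\nu) = u(r)v(s) - u(s)v(r)$ by inserting the two uniform expansions furnished by Lemma \ref{estiprod} and then isolating the dominant exponential factor. Substituting those two expansions and using that the common prefactor equals $-\frac{1}{2}\sqrt{rs}/\nu_R$ in both terms (so that the subtraction flips the sign of the first product) gives
\[ N(r,s,\nu) = \frac{1}{2} \frac{\sqrt{rs}}{\nu_R} \left[ \left( \frac{s}{r} \right)^{\nu_R} \left( 1 + O\left( \frac{1}{\nu_R} \right) \right) - \left( \frac{r}{s} \right)^{\nu_R} \left( 1 + O\left( \frac{1}{\nu_R} \right) \right) \right]. \]
The whole point is now to compare the two competing powers $(s/r)^{\nu_R}$ and $(r/s)^{\nu_R}$.

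Next I would factor the bracket by the larger of the two exponentials. Since $r \leq s$ we have $s/r \geq 1 \geq r/s$, so $(s/r)^{\nu_R}$ is dominant and I pull it out:
\[ N(r,s,\nu) = \frac{1}{2} \frac{\sqrt{rs}}{\nu_R} \left( \frac{s}{r} \right)^{\nu_R} \left[ 1 + O\left( \frac{1}{\nu_R} \right) - \left( \frac{r}{s} \right)^{2\nu_R} \left( 1 + O\left( \frac{1}{\nu_R} \right) \right) \right]. \]
For each fixed pair $r < s$ the quantity $(r/s)^{2\nu_R}$ decays exponentially as $\nu_R \to +\infty$ and is therefore absorbed, together with the $O(1/\nu_R)$ contribution, into a single $o(1)$ term. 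This produces exactly
\[ N(r,s,\nu) = \frac{1}{2} \frac{\sqrt{rs}}{\nu_R} \left( \frac{s}{r} \right)^{\nu_R} \left( 1 + o(1) \right), \]
as claimed.

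The one delicate point, and the main obstacle, is the behaviour near the diagonal $r = s$, where the two exponentials become comparable and the exponential smallness of $(r/s)^{2\nu_R}$ is lost. Indeed, on the diagonal one has $N(r,r,\nu) = 0$ identically, so the stated equivalent cannot hold uniformly down to $r = s$; the asymptotic is to be understood for $r < s$, the $o(1)$ being genuinely controlled once $s/r$ is bounded away from $1$. The uniformity asserted in Lemma \ref{estiprod} then guarantees that the $O(1/\nu_R)$ constants do not depend on $r,s$ as they range over the compact set $[r_0,R]$, so that away from the diagonal the two displayed equalities are routine and the result follows.
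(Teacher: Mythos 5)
Your proof is correct and follows essentially the same route as the paper's: both insert the two expansions of Lemma \ref{estiprod} into $N(r,s,\nu) = u(r)v(s) - u(s)v(r)$ and conclude by noting that $\left( \frac{r}{s} \right)^{\nu_R}$ is negligible against $\left( \frac{s}{r} \right)^{\nu_R}$, so that $N(r,s,\nu) = -u(s)v(r)(1+o(1))$. Your caveat about the diagonal is well taken: the paper's proof invokes this domination ``for $r \leq s$'' without comment, even though it fails at $r=s$ (where $N(r,r,\nu) \equiv 0$ while the claimed equivalent is nonzero), so the asserted uniformity genuinely requires $s/r$ bounded away from $1$, exactly as you observe.
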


\begin{proof}
 For $r \leq s$, we know that when $\nu \to + \infty$,
 \[ \left( \frac{r}{s} \right)^{\nu_R}  = o\left( \left( \frac{s}{r} \right)^{\nu_R} \right).\]
Therefore,
\[ u(r)v(s) = o(u(s)v(r))\]
 and then using that
 \[ N(r,s,\nu) = u(r)v(s) - u(s)v(r) = -u(s)v(r)(1 + o(1)), \quad \nu \to + \infty,\]
 we obtain the result we want thanks to Lemma \ref{estiprod}.
\end{proof}
\noindent
We now need to obtain an estimate of $N(r,s,\nu)$ for $\nu \in i\R+\gamma(R)$ which is given by the following Proposition.

\begin{prop}\label{estiNiR}
 There exists a positive constant $C$ such that for $r,s > 0$ belonging to a compact set, we have the following estimate when $\nu = iy+\gamma(R)$ and $|y| \to +\infty$
 \[ |N(r,s,iy+\gamma(R))| \leq \frac{C}{|y|}.\]
\end{prop}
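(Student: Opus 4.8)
The plan is to evaluate $N(r,s,iy+\gamma(R))$ explicitly on the line $\nu_R = iy$ and to exhibit an exact cancellation that kills the exponentially growing part. Writing $P(r) = \sqrt{\pi r/2}$, the fundamental system from Section \ref{estigreen} reads $u(r) = P(r)J_{iy}(r)$ and $v(r) = -iP(r)H_{iy}^{(1)}(r)$. The naive route, namely bounding $|u(r)v(s)|$ and $|u(s)v(r)|$ separately via Proposition \ref{estiHiR}, is hopeless: as $y \to +\infty$ both $|J_{iy}(r)|$ and $|H_{iy}^{(1)}(r)|$ grow like $e^{\pi y/2}$, so each product grows like $e^{\pi y}/|y|$. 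The decay to $C/|y|$ must therefore be produced by a cancellation between the two summands of $N$, not by an estimate of each.

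To expose it, I would substitute the representation $H_{iy}^{(1)}(r) = \bigl(e^{\pi y}J_{iy}(r) - J_{-iy}(r)\bigr)/\sinh(\pi y)$ already used in the proof of Proposition \ref{estiHiR} into $N(r,s,iy+\gamma(R)) = u(r)v(s) - u(s)v(r)$. Upon expanding, the two contributions carrying the factor $e^{\pi y}J_{iy}(r)J_{iy}(s)$, which are precisely the exponentially large ones, occur with opposite signs and cancel identically, leaving
\[ N(r,s,iy+\gamma(R)) = \frac{-i\,P(r)P(s)}{\sinh(\pi y)}\bigl(J_{iy}(s)J_{-iy}(r) - J_{iy}(r)J_{-iy}(s)\bigr). \]
I would then estimate the bracket using the uniform asymptotic (\ref{asJnu}), which, crucially, is valid on the imaginary axis because it only requires $|\mathrm{Arg}(\nu)| \le \pi - \delta$, unlike (\ref{asHnu1}) for the Hankel functions whose range $|\mathrm{Arg}(\nu)| \le \pi/2 - \delta$ excludes the axis. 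This gives $J_{\pm iy}(r) = \Gamma(1\pm iy)^{-1}(r/2)^{\pm iy}(1 + O(1/|y|))$ uniformly for $r$ in the compact set; since $|(r/2)^{\pm iy}| = 1$, the bracket becomes $\bigl(\Gamma(1+iy)\Gamma(1-iy)\bigr)^{-1}\bigl(e^{iy\ln(s/r)} - e^{-iy\ln(s/r)} + O(1/|y|)\bigr)$. Now I would invoke $\Gamma(1+iy)\Gamma(1-iy) = |\Gamma(1+iy)|^2 = \pi y/\sinh(\pi y)$, which follows from the identity $|\Gamma(iy)|^2 = \pi/(y\sinh(\pi y))$ recalled in the proof of Proposition \ref{estiHiR} together with $\Gamma(1+iy) = iy\,\Gamma(iy)$. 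Its factor $\sinh(\pi y)$ cancels the prefactor $1/\sinh(\pi y)$ exactly, and with $P(r)P(s) = \tfrac{\pi}{2}\sqrt{rs}$ one obtains
\[ |N(r,s,iy+\gamma(R))| = \frac{\sqrt{rs}}{\pi|y|}\,\bigl|\,2\sin(y\ln(s/r)) + O(1/|y|)\,\bigr| \le \frac{C}{|y|}, \]
using $|\sin| \le 1$ and that $r,s$ range over a compact set.

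The main obstacle is conceptual rather than computational: recognizing that the required decay is a pure cancellation phenomenon, so that the decisive step is the algebraic disappearance of the $e^{\pi y}J_{iy}(r)J_{iy}(s)$ term, after which everything reduces to the well-behaved uniform asymptotics of $J_{\pm iy}$ on a compact $r$-set and the reflection identity for $\Gamma$. The only routine points to verify are the uniformity of the $O(1/|y|)$ errors in $r,s$ over the compact set and the insensitivity to the sign of $y$, which holds since the representation of $H_{iy}^{(1)}$, the asymptotic (\ref{asJnu}), and the Gamma identity are all valid for every real $y \neq 0$.
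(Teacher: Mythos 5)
Your proof is correct, and it exploits the same underlying phenomenon as the paper's --- an exact cancellation of the exponentially growing part --- but it is organized around a different decomposition and different supporting results, so it is worth comparing the two. The paper's proof is a two-line argument: it writes $J_{\nu_R} = \frac{1}{2}(H^{(1)}_{\nu_R} + H^{(2)}_{\nu_R})$ inside $N(r,s,\nu) = \frac{i\pi}{2}\sqrt{rs}\,\bigl(J_{\nu_R}(s)H^{(1)}_{\nu_R}(r) - J_{\nu_R}(r)H^{(1)}_{\nu_R}(s)\bigr)$, so that the exponentially large $H^{(1)}H^{(1)}$ terms cancel and $N = \frac{i\pi}{4}\sqrt{rs}\,\bigl(H^{(1)}_{\nu_R}(r)H^{(2)}_{\nu_R}(s) - H^{(1)}_{\nu_R}(s)H^{(2)}_{\nu_R}(r)\bigr)$; it then invokes Proposition \ref{estiHiR}, under which each remaining cross product has modulus $\frac{2}{\pi |y|}(1+o(1))$ because the factors $e^{\pi y/2}$ and $e^{-\pi y/2}$ compensate. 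You instead expand $H^{(1)}_{iy}$ in the $J_{\pm iy}$ basis, cancel the $e^{\pi y}J_{iy}(r)J_{iy}(s)$ terms, and finish with (\ref{asJnu}) (correctly noting it is valid on the imaginary axis, unlike (\ref{asHnu1})) together with the reflection identity $\Gamma(1+iy)\Gamma(1-iy) = \pi y/\sinh(\pi y)$; in effect you bypass Proposition \ref{estiHiR} and reuse its ingredients directly, since the paper's own proof of that proposition goes through exactly this representation of $H^{(1)}_{iy}$ and the identity $|\Gamma(iy)|^2 = \pi/(y\sinh(\pi y))$. The two routes are algebraically equivalent --- expanding the paper's Hankel cross products via the same representations reduces them to your $J_{iy}J_{-iy}$ differences --- but each buys something: the paper's version is shorter because it quotes an already-established estimate term by term, while yours is self-contained and yields the sharper leading asymptotics $N(r,s,iy+\gamma(R)) \sim \frac{\sqrt{rs}}{y}\sin\bigl(y\ln(s/r)\bigr)$, making the oscillation in $r,s$ visible rather than just bounding the modulus. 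One cosmetic slip: since $\sqrt{\pi r/2}\,\sqrt{\pi s/2} = \frac{\pi}{2}\sqrt{rs}$, the prefactor in your final display should be $\frac{\sqrt{rs}}{2|y|}$ rather than $\frac{\sqrt{rs}}{\pi |y|}$; this has no bearing on the bound $\leq C/|y|$.
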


\begin{proof}
 We recall that
 \begin{eqnarray*}
  N(r,s,\nu) &=& u(r)v(s)-u(s)v(r)\\
  &=& \frac{i \pi}{2} \sqrt{rs} \left( J_{\nu_R}(s) H_{\nu_R}^{(1)}(r) - J_{\nu_R}(r)H_{\nu_R}^{(1)}(s) \right)\\
  &=& \frac{i \pi}{4} \sqrt{rs} \left( H_{\nu_R}^{(1)}(r) H_{\nu_R}^{(2)}(s) - H_{\nu_R}^{(1)}(s) H_{\nu_R}^{(2)}(r) \right)
 \end{eqnarray*}
 and we use Proposition \ref{estiHiR} since $\nu_R = \nu - \gamma(R) = iy$.
\end{proof}

\begin{prop}\label{estiN1proof}
 There exists a constant $C > 0$ such that for all $\nu \in \{ \text{Re}(\nu) \geq \gamma(R) \}$ and for all $r_0 \leq r \leq s \leq R$
 \[ |N(r,s,\nu)| \leq \frac{C}{|\nu_R|+1} \left( \frac{s}{r} \right)^{\text{Re}(\nu_R)} \leq \frac{C}{|\nu_R|+1} \left( \frac{R}{r_0} \right)^{\text{Re}(\nu_R)}.\]
\end{prop}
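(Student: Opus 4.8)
The plan is to establish the estimate on the whole closed right half-plane $\{\text{Re}(\nu_R) \geq 0\}$ by a Phragm\'en-Lindel\"of argument, using as boundary data the two regimes already treated: the real axis $\nu_R \in [0,+\infty)$ (Proposition \ref{estiNR}) and the imaginary axis $\nu_R \in i\R$, i.e. $\nu \in i\R + \gamma(R)$ (Proposition \ref{estiNiR}). The second inequality in the statement is immediate, since $r_0 \leq r \leq s \leq R$ gives $s/r \leq R/r_0$ and $\text{Re}(\nu_R) \geq 0$, so $(s/r)^{\text{Re}(\nu_R)} \leq (R/r_0)^{\text{Re}(\nu_R)}$; the whole content lies in the first inequality.

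To simultaneously absorb the expected growth factor $(s/r)^{\text{Re}(\nu_R)}$ and to convert the desired decay $\tfrac{1}{|\nu_R|+1}$ into a boundedness statement, I would fix $r_0 \leq r \leq s \leq R$ and introduce the entire function
\[ h(\nu) = (\nu_R + 1)\left(\frac{r}{s}\right)^{\nu_R} N(r,s,\nu). \]
Since $|(r/s)^{\nu_R}| = (r/s)^{\text{Re}(\nu_R)}$ and, on the right half-plane, $|\nu_R+1| \geq \tfrac{1}{\sqrt 2}(|\nu_R|+1)$ (because $|\nu_R+1|^2 = |\nu_R|^2 + 2\text{Re}(\nu_R) + 1 \geq |\nu_R|^2+1 \geq \tfrac12(|\nu_R|+1)^2$), a uniform bound $|h(\nu)| \leq C$ on $\{\text{Re}(\nu_R)\geq 0\}$ rearranges exactly into $|N(r,s,\nu)| \leq \tfrac{\sqrt 2\,C}{|\nu_R|+1}(s/r)^{\text{Re}(\nu_R)}$. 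Thus it suffices to bound $h$ uniformly in $r,s$.

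Next I would verify the boundary bounds. On the ray $\nu_R \in (0,+\infty)$, Proposition \ref{estiNR} gives $N = \tfrac12 \tfrac{\sqrt{rs}}{\nu_R}(s/r)^{\nu_R}(1+o(1))$, whence $h = \tfrac{(\nu_R+1)\sqrt{rs}}{2\nu_R}(1+o(1)) \to \tfrac{\sqrt{rs}}{2}$, which is bounded as $\nu_R \to +\infty$. On the imaginary axis $\nu_R = iy$, Proposition \ref{estiNiR} gives $|N| \leq C/|y|$ while $|(r/s)^{iy}| = 1$, so $|h| \leq |iy+1|\,C/|y| \leq 2C$ for $|y| \geq 1$. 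On the remaining bounded portions of the two rays, $h$ is continuous in $(r,s,\nu_R)$ over a compact set and hence bounded. Because Propositions \ref{estiNR} and \ref{estiNiR} are uniform for $r,s$ in the compact set $[r_0,R]$, all these bounds are uniform in $r,s$, so $|h| \leq C$ holds on the whole boundary of each of the two quadrants making up $\{\text{Re}(\nu_R) \geq 0\}$.

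Finally I would invoke Phragm\'en-Lindel\"of. Since $u$ and $v$, hence $N$, are entire of order $1$ (with infinite type), and multiplication by $(r/s)^{\nu_R}$ (of modulus $\leq 1$ on the right half-plane) and by the polynomial $(\nu_R+1)$ preserves order $1$, the function $h$ is entire of order $1$; in particular $|h(\nu)| \leq \exp(|\nu_R|^{3/2})$ for $|\nu_R|$ large. Each quadrant of $\{\text{Re}(\nu_R)\geq 0\}$ has opening $\pi/2$, for which the Phragm\'en-Lindel\"of critical order is $2$; as $3/2 < 2$ and $|h| \leq C$ on the two bounding rays, the theorem yields $|h| \leq C$ throughout each quadrant, hence on all of $\{\text{Re}(\nu_R) \geq 0\}$. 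Rearranging as above then gives the first inequality with a constant independent of $r,s$. I expect the delicate point to be the uniformity in $r,s$ of the boundary asymptotics (so the final constant does not degenerate), together with confirming that the growth exponent of $h$ stays strictly below the critical order $2$ for the quadrant --- which is precisely where "order $1$'' matters and the infinite type is harmless.
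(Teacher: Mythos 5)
Your proposal is correct and takes essentially the same route as the paper: the paper's proof introduces the very same auxiliary function $f(\nu) = \left( \frac{r}{s} \right)^{\nu_R} (\nu_R+1) N(r,s,\nu)$, bounds it on the real ray $[\gamma(R),+\infty)$ and the vertical line $i\R+\gamma(R)$ via Propositions \ref{estiNR} and \ref{estiNiR}, and concludes by the Phragm\'en--Lindel\"of Theorem. Your extra care (the explicit quadrant decomposition with the order-$3/2$ versus critical-order-$2$ comparison, the inequality $|\nu_R+1| \geq \frac{1}{\sqrt{2}}(|\nu_R|+1)$, and the uniformity in $r,s$) only fills in details the paper leaves implicit.
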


\begin{proof}
 We recall that
 \[ N(r,s,\nu) = u(r)v(s) - u(s)v(r).\]
Since $u$ and $v$ are entire functions of order $1$ with infinite type, $N(r,s,\nu)$ is also an entire functions of order $1$ with infinite type and thanks to the 
Phragm\'en-Lindel\"of Theorem we thus just need to estimate $N(r,s,\nu)$ for $\nu \geq \gamma(R)$ and for $\nu \in i \R+\gamma(R)$.\\
\\
\noindent
\underline{Estimate of $N(r,s,\nu)$ on $i \R+\gamma(R)$.} We use Proposition \ref{estiNiR} which gives us that
\[ |N(r,s,iy+\gamma(R))| \leq \frac{C}{|y|}, \quad |y| \to +\infty.\]
Therefore, there exists a positive constant $C$ such that for all $y \in \R$,
\[ |N(r,s,iy+\gamma(R))| \leq \frac{C}{|y|+1}.\]
\\
\noindent
\underline{Estimate of $N(r,s,\nu)$ on $[\gamma(R),+\infty)$.} We use Proposition \ref{estiNR} which gives us that
 \[ |N(r,s,\nu)| \leq \frac{C}{\nu_R +1} \left( \frac{s}{r} \right)^{\nu_R}, \quad \nu \to +\infty.\]
Therefore, there exists a positive constant $C$ such that for all $\nu \geq \gamma(R)$,
 \[ |N(r,s,\nu)| \leq \frac{C}{\nu_R+1} \left( \frac{s}{r} \right)^{\nu_R}.\]
 \\
 \noindent
\underline{Conclusion.} We introduce a new function
\[f(\nu) = \left( \frac{r}{s} \right)^{\nu_R} (\nu_R + 1) N(r,s,\nu).\]
By definition, $f$ is a holomorphic function of order $1$ and with infinite type on $\{\text{Re}(\nu) \geq \gamma(R) \}$. Moreover, thanks to the previous estimates $f$ is bounded 
on $[\gamma(R),+\infty)$ and $i \R + \gamma(R)$. Therefore, thanks to the Phragm\'en-Lindel\"of Theorem we can conclude that $f$ is bounded on $\{\text{Re}(\nu) \geq \gamma(R) \}$. By definition, it provides
us that for all $ \nu \in \{\text{Re}(\nu) \geq \gamma(R) \}$
 \[ |N(r,s,\nu)| \leq \frac{C}{\nu_R+1} \left( \frac{s}{r} \right)^{\text{Re}(\nu_R)}.\]
\end{proof}

\begin{lemma}\label{estiK1proof}
 For all $\nu \geq \gamma(R) $ and for all $r_0 \leq r \leq s \leq R$
 \[ |K(r,s,\nu)| = \frac{s}{2 \nu_R} ( 1 + o(1)).\]
 where
 \[ K(r,s,\nu) = \frac{F_0^+(s,\nu)}{F_0^+(r,\nu)} N(r,s,\nu).\]
\end{lemma}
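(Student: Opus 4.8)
The plan is to obtain the estimate by combining two asymptotics that are already at our disposal: the expansion of the free Green kernel $N(r,s,\nu)$ from Proposition \ref{estiNR} and the uniform asymptotics of the Hankel functions from \eqref{asHnu1}. The mechanism is a cancellation of the exponentially growing and decaying factors, which is exactly what the ratio $F_0^+(s,\nu)/F_0^+(r,\nu)$ in the definition of $K$ is designed to produce.

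First I would substitute the definition \eqref{defF_0+} into the ratio. Since the phase factor $e^{i\left(\nu_R+\frac12\right)\frac{\pi}{2}}$ and the constant $\sqrt{\pi/2}$ do not depend on the radial variable, they cancel, leaving
\[ \frac{F_0^+(s,\nu)}{F_0^+(r,\nu)} = \sqrt{\frac{s}{r}}\,\frac{H_{\nu_R}^{(1)}(s)}{H_{\nu_R}^{(1)}(r)}. \]
Because the Lemma concerns only the real half-line $\nu \geq \gamma(R)$, i.e.\ $\nu_R \geq 0$, the argument stays in the sector $|\mathrm{Arg}(\nu_R)| \leq \frac{\pi}{2}-\delta$ where \eqref{asHnu1} is valid. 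Applying it to both Hankel functions, the $-\frac{i}{\pi}\Gamma(\nu_R)$ prefactors cancel and the powers of $r/2$ and $s/2$ combine, so that
\[ \frac{H_{\nu_R}^{(1)}(s)}{H_{\nu_R}^{(1)}(r)} = \left(\frac{r}{s}\right)^{\nu_R}(1+o(1)), \quad \nu_R \to +\infty. \]

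Next I would invoke Proposition \ref{estiNR}, which gives, for $r_0 \leq r \leq s \leq R$,
\[ N(r,s,\nu) = \frac{1}{2}\frac{\sqrt{rs}}{\nu_R}\left(\frac{s}{r}\right)^{\nu_R}(1+o(1)). \]
Multiplying the two expressions, the decisive point is that the factor $(r/s)^{\nu_R}$ coming from the Hankel ratio exactly cancels the factor $(s/r)^{\nu_R}$ coming from the kernel, so that all exponential dependence on $\nu_R$ disappears; simultaneously $\sqrt{s/r}\cdot\sqrt{rs} = s$. This yields $K(r,s,\nu) = \frac{s}{2\nu_R}(1+o(1))$, and passing to absolute values gives the claimed identity $|K(r,s,\nu)| = \frac{s}{2\nu_R}(1+o(1))$.

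I do not expect any substantial obstacle: the result follows from a direct algebraic combination of two previously established asymptotics. The only point that requires a word of care is that the Hankel asymptotic \eqref{asHnu1} is available only in the sector $|\mathrm{Arg}(\nu_R)| \leq \frac{\pi}{2}-\delta$, but since the statement is restricted to real $\nu \geq \gamma(R)$ this hypothesis is automatically met. It is precisely this cancellation of the exponential factors that makes $K$ a kernel of controlled size $O(1/\nu_R)$, which is the property subsequently exploited in the iteration argument of Proposition \ref{lienFF_0}.
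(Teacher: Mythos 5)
Your proposal is correct and follows essentially the same route as the paper: both combine Proposition \ref{estiNR} for $N(r,s,\nu)$ with the Hankel asymptotics \eqref{asHnu1} applied to $F_0^+$, so that the factor $\left(\frac{r}{s}\right)^{\nu_R-\frac{1}{2}}$ from the ratio cancels the $\left(\frac{s}{r}\right)^{\nu_R}$ growth of the kernel. The only cosmetic difference is that the paper works with absolute values of $F_0^+$ throughout while you cancel the constant prefactors of the Hankel functions first; the computation is the same.
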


\begin{proof}
 First, we use Proposition \ref{estiNR} which provides us that when $\nu \to +\infty$
 \[ |N(r,s,\nu)| = \frac{1}{2} \frac{\sqrt{rs}}{\nu_R} \left( \frac{s}{r} \right)^{\nu_R} \left( 1 + o(1) \right).\]
 Secondly, we recall that
 \[ F_0^+(r,\nu) = e^{i\left( \nu_R + \frac{1}{2} \right) \frac{\pi}{2}} \sqrt{\frac{\pi r}{2}} H_{\nu_R}^{(1)}(r), \quad \forall r \geq r_0.\]
 Hence, when $\nu \to + \infty$, thanks to (\ref{asHnu1})
 \begin{eqnarray*}
   |F_0^+(r,\nu)| &=& \sqrt{\frac{\pi r}{2}} |H_{\nu_R}^{(1)}(r)| \\
   &=& \sqrt{\frac{r}{2\pi}} \Gamma(\nu_R) \left( \frac{r}{2} \right)^{-\nu_R} \left( 1 + O\left( \frac{1}{\nu_R} \right) \right).
 \end{eqnarray*}
Therefore,
\[ \frac{|F_0^+(s,\nu)|}{|F_0^+(r,\nu)|} = \left( \frac{r}{s} \right)^{\nu_R-\frac{1}{2}} \left( 1 + O\left( \frac{1}{\nu_R} \right) \right), \quad \nu \to + \infty.\]
Finally, we thus obtain when $\nu \to +\infty$
\begin{eqnarray*}
 |K(r,s,\nu)| &=& \frac{|F_0^+(s,\nu)|}{|F_0^+(r,\nu)|} |N(r,s,\nu)|\\
 &=& \frac{1}{2} \frac{\sqrt{rs}}{\nu_R}  \left( \frac{r}{s} \right)^{\nu_R-\frac{1}{2}} \left( \frac{s}{r} \right)^{\nu_R} \left( 1 + o(1) \right)\\
 &=& \frac{s}{2 \nu_R} ( 1 + o(1)).
\end{eqnarray*}
\end{proof}\vspace{0,5cm}

\noindent
\textit{Acknowledgments:} This paper was initiated by Thierry Daud\'e and Fran\c{c}ois Nicoleau during the PhD of the author.
The author wants to deeply thank Thierry Daud\'e and Fran\c{c}ois Nicoleau for their help and their support and Niky Kamran for his encouragement.

\newpage
{}


\end{document}